\documentclass[11pt,a4paper,final]{article}

\usepackage{amsmath}
\usepackage{amsfonts}
\usepackage{amssymb}
\usepackage{amsthm}
\usepackage{mathrsfs}
\usepackage{graphicx}
\usepackage{color,xcolor}
\usepackage[colorlinks,citecolor=blue,urlcolor=blue]{hyperref}
\usepackage[authoryear, comma]{natbib}
\usepackage[margin=0.8in]{geometry}
\usepackage[titletoc]{appendix}
\usepackage{bm}
\usepackage{algorithm}
\usepackage{algpseudocode}
\usepackage{algpascal}
\usepackage{url}
\usepackage{afterpage}
\usepackage{enumitem} 
\usepackage{multirow}
\usepackage{booktabs}
\usepackage{subfigure}

\newlength{\vslength}
\numberwithin{equation}{section}
\usepackage{makecell} 
\usepackage{caption} 
\usepackage{kotex}

\usepackage{mathtools}
\usepackage{etoc}
\usepackage{minitoc} 
\usepackage[most]{tcolorbox}

\usepackage{mathrsfs}
\usepackage{color}
\usepackage{bm}
\usepackage{algorithm}
\usepackage{algpseudocode}
\usepackage{algpascal}
\usepackage{url}
\usepackage{afterpage}
\usepackage{enumitem} 
\usepackage{multirow}
\usepackage{booktabs}
\usepackage{subfigure}

\usepackage{mathtools}
\usepackage{dsfont}
\usepackage{alphabeta}
\usepackage{bbm}

\usepackage{xpatch}
\xpatchcmd{\citet}{;}{,}{}{}

\usepackage{tikz}

\usepackage{bibentry}

\newtheorem{theorem}{Theorem}[section]
\newtheorem{lemma}[theorem]{Lemma}

\newtheorem{corollary}[theorem]{Corollary}

\theoremstyle{remark}

\def\cC{\mathcal C}

\def\cE{\mathcal E}
\def\cF{\mathcal F}

\def\cH{\mathcal H}

\def\cN{\mathcal N}

\def\cX{\mathcal X}

\newcommand{\bb}{{\bf b}}
\newcommand{\bB}{{\bf B}}

\newcommand{\bfe}{{\bf e}}
\newcommand{\bff}{{\bf f}}

\newcommand{\bH}{{\bf H}}
\newcommand{\bI}{{\bf I}}

\newcommand{\bM}{{\bf M}}

\newcommand{\bR}{{\bf R}}

\newcommand{\bS}{{\bf S}}

\newcommand{\bU}{{\bf U}}

\newcommand{\bx}{{\bf x}}
\newcommand{\bX}{{\bf X}}
\newcommand{\by}{{\bf y}}
\newcommand{\bY}{{\bf Y}}
\newcommand{\bz}{{\bf z}}
\newcommand{\bZ}{{\bf Z}}

\newcommand{\bbE}{{\mathbb E}}

\newcommand{\bbR}{{\mathbb R}}
\newcommand{\bbZ}{{\mathbb Z}}
\newcommand{\bbN}{{\mathbb N}}

\newcommand{\bbP}{{\mathbb P}}

\newcommand{\Cov}{\mbox{{\rm Cov}}}

\newcommand{\balpha}{\bm{\alpha}}

\newcommand{\bSigma}{\bm{\Sigma}}

\newcommand{\bmu}{\bm{\mu}}

\newcommand{\cf}{{\it cf. }}

\newcommand{\Holder}{H\"{o}lder}

\newcommand{\bc}{\begin{center}}
\newcommand{\ec}{\end{center}}
\newcommand{\be}{\begin{equation}}
\newcommand{\ee}{\end{equation}}
\newcommand{\ba}{\begin{array}}
\newcommand{\ea}{\end{array}}
\newcommand{\bean}{\setlength\arraycolsep{1pt}\begin{eqnarray*}}
\newcommand{\eean}{\end{eqnarray*}}
\newcommand{\bea}{\setlength\arraycolsep{1pt}\begin{eqnarray}}
\newcommand{\eea}{\end{eqnarray}}
\newcommand{\ben}{\begin{enumerate}}
\newcommand{\een}{\end{enumerate}}
\newcommand{\bed}{\begin{itemize}}
\newcommand{\eed}{\end{itemize}}

\DeclareMathOperator*{\argmin}{argmin}

\def\d{{\rm d}}
\def\D{{\rm D}}

\def\defeq{ \stackrel{\rm def}{=} }

\newcommand{\vertiii}[1]{{\left\vert\kern-0.25ex\left\vert\kern-0.25ex\left\vert #1 
    \right\vert\kern-0.25ex\right\vert\kern-0.25ex\right\vert}}

\newcommand{\tnorm}[1]{\vert\mkern-3mu\vert\mkern-3mu\vert #1 \vert\mkern-3mu\vert\mkern-3mu\vert}
\newcommand{\Tnorm}[1]{\left\vert\mkern-3mu\left\vert\mkern-3mu\left\vert #1 \right\vert\mkern-3mu\right\vert\mkern-3mu\right\vert}

\begin{document}




\thispagestyle{empty}

\title{ \vspace*{-9mm}
Nonparametric undirected graphical model selection
\\
using diffusion models }

\author{
	Hyeok Kyu Kwon$^{1}$, Myeonggu Kang$^{1}$, Minwoo Chae$^{1}$ and Wanjie Wang$^{2}$
	\\ [2mm] \vspace{-0.1 in} 
	{\small\it {}$^{1}$Department of Industrial and Management Engineering}
        \\ 
	{\small\it  Pohang University of Science and Technology (POSTECH), Pohang, 37673, South Korea }
    \\
    {\small\it {}$^{2}$Department of Statistics and Data Science, National University of Singapore, Singapore, 117546, Singapore}
}
\date{}

\maketitle 
\begin{abstract}
    Undirected graphical models provide a fundamental framework for representing conditional independence structures among high-dimensional random variables. While undirected graphical model selection has become a central problem in high-dimensional statistics, most existing methods are restricted to parametric settings. In this paper, we develop a nonparametric approach to undirected graphical model selection based on diffusion models. Recent work has shown that diffusion models can adapt to the unknown graph structure of the underlying distribution, yet utilizing these models for explicit graph estimation remains unexplored. To bridge this gap, we introduce a novel diffusion-based method for nonparametric undirected graphical model selection. We establish the model selection consistency of the proposed method and demonstrate its empirical performance through extensive simulations and two real data analyses.
    \medskip
    \\
    \noindent \textbf{Keywords:} 
    Diffusion models, Model selection consistency, Nonparametric estimation, Undirected graphical models
\end{abstract}

\addtocontents{toc}{\protect\setcounter{tocdepth}{-1}}
\section{Introduction}
\label{sec:introduction}

Let $\bX^{1}, \ldots, \bX^{n}$ be i.i.d.\ copies of $\bX_0 = (X_{0,1}, \ldots, X_{0,D})$ with common distribution $P_0$ supported on $\cX \subseteq \bbR^{D}$. 
The random vector $\bX_0$ (or equivalently, its distribution $P_0$) is said to satisfy the (pairwise) Markov property with respect to an undirected graph $G = (V,E)$ if $X_{0,i}$ and $X_{0,j}$ are conditionally independent given $\bX_{0,-(i,j)}$ for all $i \neq j \in V$ with $(i,j) \notin E$, where $V = \{1, \ldots, D\}$ denotes the vertex set and $E \subset V \times V$ the edge set.
Here, $\bX_{0,-(i,j)}$ denotes the subvector of $\bX_0$ obtained by removing $X_{0,i}$ and $X_{0,j}$. 
Learning this graph\footnote{For a given distribution $P_{0}$, there may be multiple graphs satisfying the pairwise Markov property. We focus on the graph with the smallest edge set among them; see Section~\ref{ssec:prelim_graph} for details.} structure from the observations $\bX^1, \ldots, \bX^n$ is commonly referred to as undirected graphical model selection. 
Undirected graphical models provide a fundamental framework for representing conditional independence structures among high-dimensional random variables \citep{dawid1979conditional, lauritzen1996graphical, wainwright2008graphical}, and their structure learning has become a central problem in high-dimensional statistics and machine learning \citep{meinshausen2006high}, with applications ranging from genomics \citep{yin2011sparse} and neuroscience \citep{bullmore2009complex} to finance \citep{talih2005structural} and social networks \citep{newman2016structure}. 
Accurate recovery of sparse graphical structures enables interpretable modeling and facilitates downstream inference and prediction. 

To date, most approaches to graphical model selection rely on parametric assumptions about the underlying distribution. 
A prominent example is the Gaussian graphical model, in which conditional independence is characterized by sparsity of the precision matrix. In the binary setting, the Ising model plays an analogous role. 
We refer readers to Chapter 9 of \cite{hastie2015statistical} for a comprehensive overview of parametric approaches to undirected graphical model selection.
While these methods are attractive due to their statistical tractability and computational efficiency, their theoretical guarantees rely critically on correct model specification. 
When the true distribution deviates from Gaussianity or the assumed parametric form, such methods may lead to inconsistent graph recovery; see Section~\ref{sec:supp_simul} in the Appendix for empirical evidence.

While nonparametric graph recovery is widely recognized as an important statistical problem, it has long been regarded as statistically challenging. This difficulty arises because it is intrinsically linked to estimating the high-dimensional density $p_0$ of $\bX_0$, a problem that is known to suffer from the curse of dimensionality. 
For instance, classical minimax rates for $\beta$-smooth densities scale as $n^{-\beta/(2\beta + D)}$ \citep{tsybakov2008introduction}, where $n$ denotes the sample size, and deteriorate rapidly as the ambient dimension $D$ increases. 
This reflects the fundamental difficulty of nonparametric density estimation and related inference problems. 
Moreover, existing nonparametric approaches to undirected graphical model selection are closely tied to the estimation of the Hessian of $\log p_0$, which is statistically more demanding than estimating $p_0$ itself due to the additional complexity of second-order derivative estimation; see Section~\ref{ssec:supp_related} for related works.

Recent advances in diffusion models and score-based generative modeling offer a new perspective on high-dimensional density estimation. 
Rather than directly estimating the density $p_0$, diffusion models estimate a family of score functions, that is, the gradients of the log-density; see Section~\ref{ssec:prelim_diffusion} for a brief overview.
Beyond their empirical success in generative modeling (e.g., images, videos, and language), recent theoretical developments have analyzed diffusion models from the perspective of nonparametric density estimation \citep{oko2023diffusion, kwon2026nonparametric, fan2025optimal, tang2024adaptivity, azangulov2024convergence, stephanovitch2025generalization, chakraborty2026generalization}. 
These works show that diffusion-based estimators can adapt to various low-dimensional structures of the underlying distribution $P_0$, thereby outperforming classical nonparametric methods in structured settings. 
In particular, it has been established that when $p_0$ admits a factorization into low-dimensional components, diffusion models can avoid the full curse of dimensionality, achieving convergence rates governed by the intrinsic dimensional complexity of $p_0$ rather than the ambient dimension $D$~\citep{kwon2026nonparametric, fan2025optimal}.
Notably, the Hammersley--Clifford theorem guarantees such a factorization whenever $p_0$ is Markov with respect to a sufficiently sparse undirected graph $G$.

While these works~\citep{kwon2026nonparametric, fan2025optimal} demonstrate that diffusion models adapt to unknown undirected graph structures, their theoretical results are developed in the context of nonparametric density estimation and do not directly guarantee strong performance in graphical model selection.
In particular, to the best of our knowledge, diffusion models have not been investigated for the purpose of recovering the underlying graph structure.
Nevertheless, since diffusion models adapt to undirected graph structures at the level of density estimation, it is natural to expect that diffusion models, if properly leveraged, can outperform existing approaches to graph recovery.

Motivated by this, we propose a novel diffusion-based method for nonparametric undirected graphical model selection. 
The proposed procedure first estimates a family of score functions, following the standard diffusion modeling framework, and then recovers the graph structure using samples generated from the corresponding score estimators. 
A key ingredient of our approach is Tweedie's formula (see \eqref{eq:tweedie}), which enables graph recovery without directly estimating second-order derivatives of $\log p_0$. 
Instead, the procedure relies on estimating certain covariance matrices from the generated samples, which can be done via simple Monte Carlo methods.

We establish rigorous theoretical guarantees and provide strong empirical evidence for the proposed method.
In particular, under mild regularity conditions, we establish selection consistency of the proposed procedure in a fixed-$D$ regime. 
Although extending the theory to settings where $D$ diverges would be desirable for high-dimensional applications, such results typically require substantially more delicate analysis. 
Since the existing theoretical understanding of diffusion-based density estimation is also largely restricted to the fixed-$D$ regime, we leave the study of diverging-$D$ asymptotics for future work. 
Instead, we present extensive experimental results demonstrating that our method substantially outperforms existing nonparametric approaches to graphical model selection.

The remainder of this paper is organized as follows. 
Section~\ref{sec:preliminary} reviews background material on undirected graphical models and diffusion models. 
Section~\ref{sec:method} presents the proposed method, and Section~\ref{sec:consistency} establishes the corresponding theoretical results. 
Section~\ref{sec:tuning} discusses some practical strategies for tuning parameter selection.
Simulation results and real data analyses are presented in Sections~\ref{sec:simulation} and~\ref{sec:realdata}, respectively. Section~\ref{sec:conclusion} concludes the paper. 
All technical proofs and comprehensive information, such as implementation specifications, are deferred to the Appendix.

\subsection*{Notations and definitions}

For a positive integer $n$, let $[n] = \{1, \ldots, n\}$. Vectors and matrices are denoted by boldface letters. For $\bx \in \bbR^D$ and $\bM \in \bbR^{D_1 \times D_2}$, the corresponding non-bold symbols with subscripts, such as $x_i$ and $M_{ij}$, denote their respective entries.
For a vector $\bx$, we denote its $\ell^p$-norm, $1 \le p \le \infty$, by $\|\bx\|_p$. 
Let $\phi_{\sigma}$ denote the density of the multivariate normal distribution $\cN(\bm{0}_D, \sigma^2 \bI_D)$, where $\bm{0}_D$ and $\bI_D$ denote the $D$-dimensional zero vector and the $D \times D$ identity matrix, respectively. For $a,b \in \bbR$, let $a \vee b = \max(a,b)$ and $a \wedge b = \min (a,b)$. The notation $C = C(A_1, \ldots, A_n)$ indicates that the constant $C$ depends only on $A_1, \ldots, A_n$. 
The notation $a \lesssim b$ means that $a \leq C b$, where $C > 0$ is a constant not relevant to the main argument.
Similarly, $a \asymp b$ means that $a \lesssim b$ and $b \lesssim a$.

\section{Preliminaries} \label{sec:preliminary}


\subsection{Undirected graphical models}
\label{ssec:prelim_graph}

Undirected graphical models, also called Markov random fields, provide a convenient framework for representing conditional independence relationships.
For comprehensive reviews, we refer to the literature~\citep{lauritzen1996graphical, koller2009probabilistic, drton2017structure}.

In general, the pairwise Markov property does not uniquely determine an undirected graph from $P_0$.
Specifically, if a graph $G$ satisfies the pairwise Markov property with respect to $\bX_0$, then any graph obtained by adding edges to $G$ also satisfies the same property.
We therefore focus on the graph $G_0 = ([D], E_0)$ defined by the condition that, for all $i \neq j \in [D]$, $(i,j) \notin E_{0}$ if and only if $X_{0,i}$ and $X_{0,j}$ are conditionally independent given $\bX_{0,-(i,j)}$.
We call $G_{0}$ the \textit{conditional independence graph} of $P_{0}$ \citep{drton2017structure}.
By construction, $G_0$ is the unique graph with the smallest edge set among all undirected graphs satisfying the pairwise Markov property.
Throughout the paper, our goal is to estimate $G_0$.

If $P_0$ admits a strictly positive density $p_{0}$ with respect to a product measure on $\cX$, then, by the celebrated Hammersley--Clifford theorem \citep{hammersley1971markov, lauritzen1996graphical}, the density factorizes as
\be
p_{0}(\bx) = \prod_{C \in \cC_{0}} g_{C}(\bx_{C}), \qquad \bx \in \cX,
\label{eq:factorization}
\ee
for some functions $g_{C}$, where $\bx_{C} = (x_{c})_{c \in C}$ and $\cC_{0}$ denotes the set of all (maximal) cliques in the graph $G_{0}$.
Here, a clique is a fully connected subset of the vertex set.

This factorization often yields convenient characterizations of the graph structure in certain distribution families.
Indeed, if $p_{0}$ is strictly positive and twice continuously differentiable on a suitably regular domain $\cX \subseteq \bbR^{D}$, then
\be
(i,j) \notin E_{0}
\quad\Longleftrightarrow\quad
\frac{\partial^2 \log p_{0}(\bx) }{\partial x_i \partial x_j}=0
\quad \forall \bx \in \cX,
\qquad i \neq j \in [D].
\label{eq:iff_cond}
\ee 
This equivalence holds, for example, when $\cX$ is $[-1,1]^D$, $(-1,1)^D$, or $\bbR^D$; see Lemma~2 and its proof in \cite{spantini2018inference}.
Motivated by this characterization, several recent works \citep{zheng2023generalized, baptista2024learning, liaw2025learning} focus on estimating the Hessian of $\log p_{0}$ to recover the graph $G_{0}$.
A detailed review of these works, along with classical parametric methods and their semiparametric extensions, is provided in Section~\ref{ssec:supp_related}.

\subsection{Diffusion models}
\label{ssec:prelim_diffusion}


Let $( \bX_{t} )_{t \geq 0}$ be the standard Ornstein--Uhlenbeck (OU) process defined by the stochastic differential equation (SDE)
\be
\d {\bX}_t = -  {\bX}_t \d t + \sqrt{2} \d \bB_t,\quad \bX_0 \sim P_0,
\label{eq:OU}
\ee
where $( \bB_{t} )_{t \geq 0} $ is a standard $D$-dimensional Brownian motion.
Although we focus on this OU process for simplicity, our main result (Theorem~\ref{thm2}) extends to more general time-inhomogeneous diffusion processes, including the widely used DDPM \citep{ho2020denoising} and EDM \citep{karras2022elucidating} frameworks.
For the OU process, the conditional distribution of $\bX_t$ given $\bX_0 = \bx_0$ is Gaussian with mean vector $\mu_t \bx_0$ and covariance matrix $ \sigma_t^2 \bI_{D}$, where $\mu_t = e^{-t}$ and $\sigma_t^2 = 1-\mu_t^2$.
Hence, the marginal distribution of $\bX_{t}$, denoted by $P_{t}$, has Lebesgue density $p_{t}$ given by
$
p_{t} (\bx) = \int \phi_{\sigma_{t}} (\bx - \mu_{t} \by ) \d P_{0} (\by).
$
Let  $\bff_0 (\bx, t) = \nabla \log p_t(\bx)$.
For each fixed $t$, the map $\bx \mapsto \bff_0(\bx, t)$ is the score function corresponding to the marginal density $p_t$.
By convention, we also refer to the map $(\bx,t) \mapsto \bff_0(\bx,t)$ as the score function.

For a non-random $\overline{T} > 0$, let $(\bY_{t})_{t \in [0, \overline{T}) }$ be the reverse-time process defined by $\bY_{t} = \bX_{\overline{T}-t}$.
It is well known \citep{anderson1982reverse} that, under mild assumption on $P_{0}$, the reverse-time process is also a diffusion process satisfying
\bean
\d \bY_{t} = \Big[ \bY_t + 2  \bff_0 \big (\bY_t, \overline T - t \big) \Big] \d t + \sqrt{2} \d \bB_t,\quad \bY_0 \sim P_{\overline{T}}.
\eean
Note that the Brownian motion in this SDE is not the same as that appearing in the SDE \eqref{eq:OU}.
However, for notational simplicity, we use the same notation $\bB_t$ throughout the paper.

Diffusion models exploit this reverse process to construct an implicit estimator of $P_0$.
Specifically, once an estimator $\widehat{\bff}$ of the score function $\bff_{0}$ is available, one can simulate the reverse process from the standard Gaussian distribution to generate samples from the estimated distribution.
This initialization is justified because $P_t$ converges rapidly to the standard Gaussian distribution as $t \to \infty$ \citep{bakry2014analysis}.

The score function can be estimated via score matching \citep{hyvarinen2005estimation, vincent2011connection, song2021scorebased}.
Let $\cF$ be a class of Borel-measurable functions $(\bx,t) \mapsto \bff(\bx,t)$ used to model the score function $\bff_0$.
In practice, $\cF$ is typically chosen as a class of (deep) neural networks.
To avoid potential singularity issues, we estimate $\bff_0(\cdot,t)$ only for $t \in [\underline{T},\overline{T}]$, where $\underline{T} > 0$ is sufficiently small.
By a well-known identity due to \cite{vincent2011connection}, the objective function
$
\bff \mapsto \int_{\underline{T}}^{\overline{T}} \bbE [ \| \bff( \bX_{t} , t ) - \bff_{0} (\bX_{t},t) \|_2^2 ] \d t
$
has the same minimizer over $\cF$ as
$
\bff \mapsto \bbE [  \ell_{\bff}( \bX_{0}) ],
$
where
$
\ell_{\bff}(\bx) = \bbE [ \| \bff( \mu_{T} \bx + \sigma_{T} \bZ , T ) + \sigma_{T}^{-1} \bZ \|_2^2  ]
$
denotes the loss function.
Here, $\bZ$ is a $D$-dimensional standard Gaussian random vector, $T$ is uniformly distributed on $[\underline{T},\overline{T}]$, and $\bZ$ and $T$ are independent.

This equivalence naturally leads to the following empirical risk minimization (ERM) estimator based on $n$ observations $\bX^1,\ldots,\bX^n$:
\be
\widehat \bff \in \argmin_{\bff \in \cF}
\frac{1}{n} \sum_{i=1}^n \ell_{\bff}(\bX^{i}).
\label{eq:erm}
\ee
Note that the loss function $\ell_{\bff}(\cdot)$ is not directly tractable because it involves expectation with respect to $T$ and $\bZ$.
In practice, the solution to \eqref{eq:erm} is approximated by stochastic gradient descent.
Specifically, since the objective $\bbE[\ell_{\bff}(\bX_{0})]$ can be written as an expectation over the independent random variables $\bX_0$, $T$, and $\bZ$, one forms stochastic approximations by sampling $\bX_0$ from the empirical distribution and drawing $T$ and $\bZ$ independently from their respective distributions.

Let $(\widehat \bY_t)_{t \in [0,\overline{T}-\underline{T}]}$ be the solution to the SDE
\be
\d \widehat \bY_t
=
\Big[\widehat \bY_t + 2 \widehat \bff \big(\widehat \bY_t,\overline{T}-t \big)\Big]\d t
+ \sqrt{2}\,\d \bB_t,
\quad
\widehat \bY_0 \sim \cN(\mathbf{0}_D,\bI_D),
\label{eq:diffusion}
\ee
and set $\widehat \bX_t = \widehat \bY_{\overline{T}-t}$ for $t \in [\underline{T},\overline{T}]$.
Then, for each $t \in [\underline{T},\overline{T}]$, the distribution of $\widehat \bX_t$ estimates $P_t$.
Moreover, since $P_{\underline{T}}$ is close to $P_0$ when $\underline{T}$ is small, the distribution of $\widehat \bX_{\underline{T}}$ serves as an implicit estimator of $P_0$.

In particular, one can obtain samples from the (marginal) distribution of $\widehat \bX_{\underline{T}}$ by solving the SDE~\eqref{eq:diffusion} up to $t = \overline{T} - \underline{T}$, for example via the Euler--Maruyama discretization \citep{kloeden2011numerical, song2021scorebased}.
More importantly, for any $(\bx,t) \in \bbR^{D} \times (\underline{T}, \overline{T}]$, sampling from the conditional distribution of $\widehat \bX_{\underline{T}}$ given $\widehat \bX_{t} = \bx$ is straightforward by the Markov property of the SDE~\eqref{eq:diffusion}. This conditional sampling is the key property exploited in the following section to define the graph estimator.


\section{Diffusion-based graphical model selection} \label{sec:method}

In this section, we introduce a diffusion-based method for nonparametric undirected graphical model selection.
Our method estimates the Hessian matrix $\nabla^2 \log p_t(\bx)$ from samples of the conditional distribution of $\widehat \bX_{\underline{T}}$ given $\widehat \bX_{t} = \bx$.
Notably, the method requires no additional training given the score function estimator $\widehat \bff$ defined in~\eqref{eq:erm}.
We begin with the relationship between $\nabla^2 \log p_t(\bx)$ and the conditional independence graph $G_{0}$.

Throughout the paper, we assume that $p_{0}$ is strictly positive and twice continuously differentiable on $\cX = [-1,1]^{D}$.
For each $i,j \in [D]$, let $H_{ij}(\bx,t) = \partial^2 \log p_t (\bx) / (\partial x_i \partial x_j)$.

The equivalence \eqref{eq:iff_cond} implies that, in principle, the graph $G_0$ can be recovered by analyzing $H_{ij}(\bx,t)$ at $t = 0$.
However, this characterization does not directly extend to $t>0$, since conditional independence is generally not preserved under Gaussian perturbation.
Nevertheless, for each fixed $\bx$, the map $t \mapsto H_{ij}(\bx,t)$ is continuous, and hence $H_{ij}(\bx,t)$ converges to $H_{ij}(\bx,0)$ as $t \to 0$.
One may therefore expect that the condition $(i,j) \notin E_{0}$ is nearly equivalent to $\vert H_{ij}(\bx,t) \vert$ being sufficiently small for all $\bx$.
At the same time, $t$ should not be taken too large; as $t \rightarrow \infty$, $\bX_{t}$ rapidly approaches the standard Gaussian, and hence $\vert H_{ij}(\bx,t) \vert$ converges to zero for every $(i,j)$ and $\bx$.
An illustrative example is provided in Section~\ref{ssec:supp_illu}.


For $t > 0$, $H_{ij}(\bx,t)$ can be estimated from the score function estimator $\widehat \bff$, without any additional training.
Specifically, the second-order Tweedie's formula states that
\be
 \nabla^2 \log p_t (\bx)
= \sigma_{t}^{-4} \mu_{t}^2 \text{Cov} \big[ \bX_{0} \mid \bX_{t} = \bx \big]  - \sigma_{t}^{-2} \bI_{D} \qquad \forall t > 0,
\label{eq:tweedie}
\ee
see Lemma~1 of \cite{wainwright2025score}.

The advantage of this formula is that $H_{ij}(\bx,t)$ can be estimated by estimating the conditional covariance of $\bX_{0}$ given $\bX_{t} = \bx$.
Since the distribution of $\widehat \bX_{t}$ estimates that of $\bX_{t}$ for $t \in [\underline{T}, \overline{T}]$, and $\underline T$ is sufficiently small, $H_{ij} (\bx,t)$ can be estimated by
\bean
\widehat H_{ij} (\bx,t) 
= \sigma_{t}^{-4} \mu_{t}^2 \text{Cov}_{n} \big[ \widehat X_{\underline{T}, i} , \widehat X_{\underline{T} ,j} \mid \widehat \bX_{t} = \bx \big]
\qquad \forall i \neq j \in [D],
\eean
where $\text{Cov}_{n} [ \cdot, \cdot] = \text{Cov}[\cdot,\cdot \mid \bX^{1}, \ldots, \bX^{n}]$ denotes the covariance conditional on the $n$ observations. This conditional covariance can be readily estimated via Monte Carlo sampling using trajectories simulated from the reverse process (\cf \eqref{eq:diffusion}).



We have introduced the pointwise estimator $\widehat H_{ij}(\bx, t)$ of $H_{ij}(\bx, t)$ for each $(\bx, t)$.
Recall that, when $t$ is small, one expect that the condition $(i, j) \notin E_{0}$ is nearly equivalent to $|H_{ij}(\bx, t)|$ being sufficiently small for all $\bx$.
This suggests graph estimation by thresholding $\bbE_n [\vert  \widehat H_{ij}(\bU_{t}, t) \vert ]$ for a suitable random vector $\bU_{t}$, where $\bbE_{n}[\cdot] = \bbE[\cdot \mid \bX^{1}, \ldots, \bX^{n}]$ denotes the expectation conditional on the $n$ observations.

For $t \in [\underline{T}, \overline{T}]$ and $ \tau > 0$, we define the undirected graph estimator $\widehat G_{t, \tau} = ([D], \widehat E_{t, \tau})$ by
\be
(i,j) \notin \widehat E_{t,\tau}
\quad\Longleftrightarrow\quad 
\bbE_{n} \big[ \big| \widehat H_{ij}(\bU_{t}, t) \big| \big] \leq \tau,
\qquad i \neq j \in [D].
\label{eq:estimator}
\ee
Here, $\tau$ and $t$ are allowed to depend on $n$, and the distribution of $\bU_t$ will be specified later.
The overall procedure is summarized in Algorithm~\ref{alg:simple}.

\begin{algorithm}[!t]
\caption{Diffusion-based graph estimation}
\label{alg:simple}
\begin{algorithmic}[1]
\State \textbf{Input:} Data $\bX^1,\ldots,\bX^n$, parameters $t,\tau$, and model class $\cF$.
\State Solve the problem \eqref{eq:erm} and obtain the score function estimator $\widehat \bff \in \cF$.
\State Compute $\bbE_n\big[|\widehat H_{ij}(\bU_{t},t)|\big]$ for all $i \neq j$.
\State Set $(i,j) \in \widehat E_{t,\tau}$ if and only if $\bbE_n\big[|\widehat H_{ij}(\bU_{t},t)|\big] > \tau$.
\State Return $\widehat G_{t,\tau} = ([D],\widehat E_{t,\tau})$.
\end{algorithmic}
\end{algorithm}

Note that evaluating the target estimator $\widehat G_{t,\tau}$ involves numerical errors from two sources.
First, sampling from the conditional distribution of $\widehat \bX_{\underline{T}}$ given $\widehat \bX_t=\bx$ requires discretizing the underlying SDE, for example via the Euler--Maruyama method \citep{kloeden2011numerical, song2021scorebased}, which introduces discretization error.
Second, both the covariance appearing in the definition of $\widehat H_{ij}(\bx,t)$ and the outer expectation in $\bbE_{n} [ | \widehat H_{ij}(\bU_{t}, t) | ]$ are approximated from finitely many simulated samples, which introduces Monte Carlo error.
Various works study discretization errors in diffusion models \citep{oko2023diffusion, benton2024nearly, chen2023sampling}.
Moreover, the errors induced by approximating the covariance and the expectation can be controlled by standard Monte Carlo bounds.
We assume these numerical errors to be negligible in our theoretical analysis.

\section{Selection consistency}
\label{sec:consistency}

\subsection{Assumptions}
\label{ssec:assm}

For a multi-index $\balpha = (\alpha_1,\ldots,\alpha_D)^{\top} \in (\bbZ_{\geq 0})^{D}$, let $\D^{\balpha}$ denote the mixed partial derivative operator $\partial^{\balpha_{\cdot}} / ( \partial x_1^{\alpha_1} \cdots \partial x_{D}^{\alpha_D})$, where $\alpha{_{\cdot}} = \sum_{i=1}^{D} \alpha_i$.
For any $\beta, K > 0$, let $\cH^{\beta, K}(A)$ denote the class of $\beta$-H\"older functions on $A \subseteq \bbR^{D}$, consisting of all functions $g : A \to \bbR$ such that
\bean
\sum_{\alpha_{\cdot} \leq \lfloor \beta \rfloor} \sup_{\bx \in A} \vert (\D^{\balpha} g) (\bx) \vert + \sum_{\alpha_{\cdot} = \lfloor \beta \rfloor} \sup_{\substack{\bx,\by \in A \\ \bx \neq \by}} \frac{\vert (\D^{\balpha} g)(\bx) - (\D^{\balpha} g)(\by) \vert}{\| \bx-\by\|_{\infty}^{\beta - \lfloor \beta \rfloor}} \leq K,
\eean
where $\lfloor \beta \rfloor$ denotes the largest integer strictly smaller than $\beta$.
Throughout the paper, we will impose the following assumption on  $p_{0}$:
\bed
\item[] (\bS) There exist constants $\beta > 2$ and $K > 0$ such that $\log p_0 \in \cH^{\beta, K}([-1,1]^{D})$.
\eed

This condition arises naturally because our characterization of the graph structure relies on the equivalence \eqref{eq:iff_cond}, which in turn requires the second-order smoothness of $\log p_{0}$.
To the best of our knowledge, existing methods for nonparametric undirected graphical model selection rely on substantially more restrictive assumptions \citep{liu2009nonparanormal,liu2012copula,xue2012regularized,liu2011forest,baptista2024learning,zheng2023generalized,liaw2025learning}.
Detailed reviews of these works are provided in Section~\ref{ssec:supp_related}.

\subsection{Edge identification via Hessian components}
\label{ssec:hessian}

In this subsection, we present one of the main results of the paper, showing that the edges of $G_{0}$ can be identified through the Hessian of $\log p_{t}$.
Given an undirected graph $G$, 
let $d_{G}(i,j)$ denote the minimum length of a path from vertex $i$ to vertex $j$ in $G$.
By convention, we set $d_{G}(i, i) = 0$ and $d_{G}(i, j) = \infty$ if there is no path between $i$ and $j$ in $G$.
For simplicity, we write $d(i,j) = d_{G_{0}}(i,j)$.
Note that for all $i \neq j \in [D]$,
$ (i,j) \notin E_{0} $
if and only if
$ d(i,j) > 1 $.
Theorem \ref{thm1} shows that, ${\rm{Cov}} [ X_{0,i}, X_{0,j} \mid \bX_{t} = \bx ]$ decays with a rate depending on $d(i, j)$ as $t \to 0$.

\begin{theorem}
\label{thm1}
Let $\gamma \in [0,1)$, and suppose that the density $p_{0}$ satisfies assumption (\bS). 
Then, for every $t \leq C_{1}$ and $\| \bx \|_{\infty} \leq \mu_t \gamma$, we have $\sigma_{t} / \mu_{t} \leq 1$,
\bean
\Big\vert  {\rm{Cov}} \big[ X_{0,i}, X_{0,j} \mid \bX_{t} = \bx \big] \Big\vert
\leq  C_{2}  \sigma_{t}^{ ( 2 d(i,j) + 2 ) \wedge (\beta + 2) \wedge 5  },
\eean
for all $i,j \in [D]$, and
\bean
\Big\vert  {\rm{Cov}} \big[ X_{0,i}, X_{0,j} \mid \bX_{t} = \bx \big] \Big\vert
\geq
\Big\vert \big( \bH^{d(i,j)} \big)_{ij} \Big\vert \big( \sigma_{t} / \mu_{t} \big)^{2 d (i,j) + 2 }
 - C_{2} \sigma_{t}^{ ( 2 d(i,j) + 4 ) \wedge (\beta + 2) \wedge 5 },
\eean
for all $i,j \in [D]$ with $d(i,j) < \infty$, where $C_{1} = C_{1} ( \beta,K,D, \gamma) $ and $C_{2} = C_{2} (\beta,K,D) $ are positive constants, and
$
\bH = ( H_{ij} ( \bx / \mu_{t} , 0 ) ) \in \bbR^{D \times D}.
$
\end{theorem}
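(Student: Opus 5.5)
The plan is to write the conditional law of $\bX_0$ given $\bX_t=\bx$ explicitly and Laplace-expand it.

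\textbf{Setup and localization.} By Bayes, the conditional density of $\bX_0$ given $\bX_t=\bx$ is proportional to $p_0(\by)\exp(-\|\bx-\mu_t\by\|_2^2/(2\sigma_t^2))$ on $[-1,1]^D$. Set $\bm{y}_\ast=\bx/\mu_t$ and $h=\sigma_t/\mu_t$, so the density is proportional to $p_0(\by)\exp(-\|\by-\bm{y}_\ast\|_2^2/(2h^2))$. Take $t\le C_1$ small enough that $h\le 1$ and $h\le h_0(\gamma)$. Since $\|\bm{y}_\ast\|_\infty\le\gamma<1$, the Gaussian window around $\bm{y}_\ast$ sits at distance at least $1-\gamma$ from the boundary of the cube. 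Truncate to the ball $\|\by-\bm{y}_\ast\|_\infty\le h^{1-\epsilon}$, which lies inside the cube. The contributions outside this ball are $O(e^{-c h^{-2\epsilon}})$, smaller than any power of $h$, and $h\asymp\sigma_t$ for small $t$. Substitute $\by=\bm{y}_\ast+h\bz$. Then the covariance equals $h^2\,\mathrm{Cov}_{Q}[Z_i,Z_j]$, where $Q$ has density proportional to $\phi_1(\bz)\exp(\psi(\bz))$ with
\[
\psi(\bz)=\log p_0(\bm{y}_\ast+h\bz)-\log p_0(\bm{y}_\ast).
\]

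\textbf{Expansion.} Taylor expand $\psi$ using (\bS):
\[
\psi(\bz)=h\,\bm{g}^\top\bz+\tfrac{h^2}{2}\bz^\top\bH\bz+R(\bz),
\]
with $\bm{g}=\nabla\log p_0(\bm{y}_\ast)$, $\bH$ the Hessian, and $|R|\lesssim K\,h^{\beta\wedge 3}\|\bz\|^{\beta\wedge3}$. The third-order term is kept inside $R$, and this is where the caps $\beta+2$ and $5$ in the exponent come from: the covariance is $h^2$ times terms of relative order $h^{\beta\wedge 3}$. The linear term is an exact Gaussian tilt, i.e.\ a mean shift that leaves covariances unchanged. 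The quadratic part gives exactly a Gaussian with covariance $(\bI-h^2\bH)^{-1}$, valid for $h^2\|\bH\|<1/2$. Hence the leading covariance is
\[
h^2\big[(\bI-h^2\bH)^{-1}\big]_{ij}=h^2\sum_{k\ge0}h^{2k}(\bH^k)_{ij}.
\]
Because $\bH_{ab}=0$ whenever $d(a,b)>1$ by \eqref{eq:iff_cond}, we have $(\bH^k)_{ij}=0$ for $k<d(i,j)$. The series therefore starts at $h^{2d+2}(\bH^{d})_{ij}$, with tail bounded by $C h^{2d+4}$ since $\|\bH\|\le K$. This yields both the upper bound $h^{(2d+2)}$ and the lower bound with leading coefficient $|(\bH^{d})_{ij}|h^{2d+2}$. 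The case $i=j$, $d=0$, is consistent.

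\textbf{Main obstacle: the remainder.} It remains to show that the perturbation by $R$ changes the covariance by at most $C h^{2}h^{(\beta\wedge3)}$, and that this is uniform in $\bx$. I would write $e^{R}=1+O(|R|e^{|R|})$ on the truncated ball, where $|R|\lesssim h^{\epsilon'}$ is small. Next, bound the change in the first and second moments of $Q$ by $h^{\beta\wedge3}$ times polynomial moments of the Gaussian $\cN(\cdot,(\bI-h^2\bH)^{-1})$. These moments are bounded uniformly because the mean shift $h\bm{g}$ is $O(1)$. A subtlety is that this naive bound gives an error $h^{2+(\beta\wedge3)}$ regardless of $d$, which matches the stated $\min$ exponents exactly. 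The constant $C_2$ depends only on $(\beta,K,D)$, and $\gamma$ enters only through the threshold $C_1$ via the truncation.
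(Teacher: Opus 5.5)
Your proposal is correct and follows the paper's argument. You rescale to $\bz$, Taylor-expand $\log p_0$ to second order, compare with the Gaussian of covariance $(\bI_D-h^2\bH)^{-1}$, use the graph sparsity of $\bH$ to eliminate the Neumann-series terms with $k<d(i,j)$ (the paper's Lemma~\ref{lem1}), and show that the remainder shifts the $\bz$-covariance by $O(h^{\beta\wedge 3})$. The only difference is technical and lies in that last step: you first localize to the shrinking box $\|\bz\|_\infty\le h^{-\epsilon}$, where $R$ is uniformly small and $e^{R}=1+O(|R|)$ suffices. The paper instead keeps the whole rescaled support $S$, where $|r|$ is only bounded by a constant. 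It interpolates through densities proportional to $e^{(1-\theta)r}p_{\bZ_1}$, bounds the $\theta$-derivatives of the moments, which equal $-\Cov[\,\cdot\,,r(\bZ_\theta)]$, by Cauchy--Schwarz, and then compares the Gaussian truncated to $S$ with the full Gaussian. Both routes give the same rate and the same dependence of the constants on $\gamma$.
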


Since $\sigma_{t} = \sqrt{1-e^{-2t}}$ and $\mu_{t} = e^{-t}$, we have $\sigma_{t} \asymp \sqrt{t}$ and $\mu_{t} \asymp 1$ for sufficiently small $t$.
Hence, Theorem~\ref{thm1} implies that
\bean
\Big\vert  {\rm{Cov}} \big[ X_{0,i}, X_{0,j} \mid \bX_{t} = \bx \big] \Big\vert
\asymp t^{2} \qquad \forall (i,j) \in E_{0},
\eean
provided that $H_{ij}(\bx / \mu_{t}, 0)$ is nonzero, whereas
\bean
\Big\vert  {\rm{Cov}} \big[ X_{0,i}, X_{0,j} \mid \bX_{t} = \bx \big] \Big\vert
\lesssim t^{\frac{ (\beta+2) \wedge 5}{2}} \qquad \forall (i,j) \notin E_{0} \text{ and } i \neq j.
\eean
Consequently, the conditional covariance decays at a faster rate for non-edges $(i,j) \notin E_{0}$ than for edges $(i,j) \in E_{0}$.
Theorem~\ref{thm1} immediately yields the following corollary, which characterizes the decay rate of $H_{ij}(\bx,t)$ according to whether $(i,j) \in E_{0}$.

\begin{corollary}
\label{cor1}
Suppose that the density function $p_{0}$ satisfies assumption (\bS).
Let $\gamma, C_{1}, C_{2}$ be the constants defined in Theorem~\ref{thm1}.
Then, for every $t \leq C_{1}, \| \bx \|_{\infty} \leq \mu_{t} \gamma $, and $i,j \in [D]$, we have
\bean
 \big\vert H_{ij}  ( \bx / \mu_t, 0  ) \big\vert - C_{2} \sigma_{t}^{\tilde \beta}
\leq \big\vert H_{ij} (\bx, t) \big\vert 
\leq C_{2},
\quad \text{if } d(i,j) = 1,
\eean
and
$
\vert  H_{ij} (\bx, t)  \vert \leq C_{2} \sigma_{t}^{\tilde \beta}$ if $d(i,j) \geq 2$,
where $\tilde \beta = (\beta-2) \wedge 1$.
\end{corollary}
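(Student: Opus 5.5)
The corollary follows from Theorem~\ref{thm1} through the second-order Tweedie formula \eqref{eq:tweedie}, with a separate exponent bookkeeping for the cases $d(i,j)=1$ and $d(i,j)\ge 2$. Both cases force $i\neq j$, so the off-diagonal entry of $\sigma_t^{-2}\bI_D$ vanishes and \eqref{eq:tweedie} reads
\bean
H_{ij}(\bx,t) = \sigma_t^{-4}\mu_t^2\, {\rm Cov}\big[X_{0,i},X_{0,j}\mid \bX_t=\bx\big].
\eean
The plan is to multiply the two bounds of Theorem~\ref{thm1} by $\sigma_t^{-4}\mu_t^2$. Fix $t\le C_1$ and $\|\bx\|_\infty\le \mu_t\gamma$, so that Theorem~\ref{thm1} applies. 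Also $\|\bx/\mu_t\|_\infty\le\gamma<1$, so $H_{ij}(\bx/\mu_t,0)$ is well defined on $[-1,1]^D$. Throughout we use $0<\mu_t\le 1$.

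\emph{Case $d(i,j)\ge 2$.} Here $2d(i,j)+2\ge 6$, so the exponent in the upper bound of Theorem~\ref{thm1} is $(\beta+2)\wedge 5$. Since $(\beta+2)\wedge 5-4=(\beta-2)\wedge 1=\tilde\beta$, we get
\bean
|H_{ij}(\bx,t)|\le C_2\mu_t^2\sigma_t^{(\beta+2)\wedge5-4}\le C_2\sigma_t^{\tilde\beta}.
\eean

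\emph{Case $d(i,j)=1$.} Because $\beta>2$, the upper-bound exponent is $4\wedge(\beta+2)\wedge5=4$. Hence $|H_{ij}(\bx,t)|\le C_2\mu_t^2\le C_2$.

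For the lower bound, note that $(\bH^{1})_{ij}=H_{ij}(\bx/\mu_t,0)$. The lower bound of Theorem~\ref{thm1} gives
\bean
|{\rm Cov}|\ge |H_{ij}(\bx/\mu_t,0)|(\sigma_t/\mu_t)^4 - C_2\sigma_t^{6\wedge(\beta+2)\wedge5}.
\eean
The exponent of the remainder is $(\beta+2)\wedge 5$. Multiplying by $\sigma_t^{-4}\mu_t^2$ yields
\bean
|H_{ij}(\bx,t)|\ge \mu_t^{-2}|H_{ij}(\bx/\mu_t,0)| - C_2\mu_t^2\sigma_t^{\tilde\beta}.
\eean
Finally, $\mu_t^{-2}\ge 1$ and $\mu_t^2\le1$ give the stated lower bound $|H_{ij}(\bx/\mu_t,0)|-C_2\sigma_t^{\tilde\beta}$.

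There is no real obstacle here; all the analytic work sits in Theorem~\ref{thm1}. The only points needing care are two. First, the identity term of \eqref{eq:tweedie} vanishes off the diagonal. Second, the minima in the exponents collapse correctly, namely to $4$ for edges (this uses $\beta>2$) and to $(\beta+2)\wedge5$ for non-edges (this uses $d(i,j)\ge2$). Both case distinctions must use the same constant $C_2$, which holds because the $\mu_t$ factors are always at most $1$ where they multiply error terms.
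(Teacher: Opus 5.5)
Your proof is correct and follows the paper's argument exactly. Like the paper, you apply the off-diagonal Tweedie identity $H_{ij}(\bx,t)=\sigma_t^{-4}\mu_t^2\,{\rm Cov}[X_{0,i},X_{0,j}\mid \bX_t=\bx]$ to the two bounds of Theorem~\ref{thm1} and absorb the $\mu_t$ factors using $\mu_t\le 1$; you additionally spell out the exponent bookkeeping that the paper leaves implicit, namely the collapse to $4$ for edges (using $\beta>2$) and to $(\beta+2)\wedge 5$ for $d(i,j)\ge 2$, which also covers $d(i,j)=\infty$.
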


Corollary~\ref{cor1} implies that, for small $t$ and suitable $\bx \in [-\mu_{t}\gamma, \mu_{t}\gamma]^{D}$, the pointwise value $|H_{ij}(\bx, t)|$ can be used to characterize conditional independence.
Specifically, we have $|H_{ij}(\bx, t)| \lesssim t^{\tilde \beta / 2}$ for $(i, j) \notin E_{0}$ with $i \neq j$, whereas $|H_{ij}(\bx, t)| \asymp 1$ for $(i, j) \in E_{0}$.
This pointwise characterization, however, requires that $|H_{ij}(\bx / \mu_{t}, 0)| > 0$ for every $(i, j) \in E_{0}$.
One can instead characterize conditional independence through the averaged Hessian $\bbE[|H_{ij}(\bU_{t}, t)|]$, provided that $\bbE[|H_{ij}(\bU_{t} / \mu_{t}, 0)|] > 0$ for every $(i, j) \in E_{0}$.

Throughout this section, we take $\bU_{t}$ to be the uniform random vector on $[-\mu_{t}\gamma, \mu_{t}\gamma]^{D}$ for each $t \geq 0$.
Note that this choice is flexible: our main consistency result (Theorem~\ref{thm2}) continues to hold if the uniform distribution is replaced by any continuous distribution on $[-\mu_{t}\gamma, \mu_{t}\gamma]^{D}$ whose density is uniformly bounded from above for $t \lesssim 1$.
Practical choices of $\bU_{t}$ used in our experiments are discussed in Section~\ref{sec:supp_details}.

Note that $\bU_{0}$ is the uniform distribution on $[-1,1]^D$.
The equivalence \eqref{eq:iff_cond} then implies that $\bbE[|H_{ij}(\bU_{0}, 0)|] > 0$ for every $(i, j) \in E_{0}$, provided that $\gamma$ is sufficiently close to $1$.
Consequently, $C_{\gamma} > 0$ can always be ensured by such a choice of $\gamma$, where $C_{\gamma} = \min_{(i,j) \in E_{0}} \bbE[\vert H_{ij}(\bU_{0}, 0) \vert]$.
We therefore fix $\gamma \in (0, 1)$ such that $C_{\gamma} > 0$.
When $E_{0}$ is empty, we set $C_{\gamma} = 1$; any fixed positive constant suffices for our main results.


Let $\widetilde {T} = {\widetilde T} (\beta, \underline{T}, C_{1}, C_{2}, C_{\gamma}) > 0$ be a (small enough) constant such that $2 C_{2} \sigma_{{\widetilde T}}^{\tilde \beta} < C_{\gamma}$, ${\widetilde T} \leq C_{1}$ and ${\widetilde T} \geq \underline{T}$.
Then, for every $t \leq {\widetilde T}$ and $i \neq j \in [D]$, Corollary~\ref{cor1} implies that
\be
(i,j) \notin E_{0}
\quad\Longleftrightarrow\quad
\bbE \big[ \big\vert H_{ij} (\bU_{t}, t) \big\vert \big] \leq C_{2} \sigma_{t}^{\tilde \beta}.
\label{eq:iff_cond_t}
\ee
This novel characterization is the key to choosing the threshold $\tau$ in the estimator $\widehat G_{t,\tau}$ and to proving graph selection consistency in Section~\ref{ssec:consistency}.

A closely related analysis of the relationship between $H_{ij}(\bx,t)$ and $d(i,j)$ was given by \cite{gottwald2025localized}.
Under stronger assumptions than those of Corollary~\ref{cor1}, they derived a sharper upper bound on $\vert H_{ij}(\bx,t) \vert$.
Specifically, Theorem~2.1 of \cite{gottwald2025localized} states that, if $p_{0}$ is log-concave and twice continuously differentiable, then $\vert H_{ij}(\bx,t) \vert \lesssim \sigma_{t}^{2d(i,j)-2}$.
This bound matches with Corollary~\ref{cor1} when $d(i,j) = 1$ and is strictly stronger if $d(i,j) > 1$.
In contrast, our result relaxes the log-concavity assumption and also provides a lower bound on the Hessian entry, leading to the characterization of conditional independence in \eqref{eq:iff_cond_t}.

\subsection{Estimator consistency}
\label{ssec:consistency}

For two undirected graphs $G_{1} = ([D],E_{1})$ and $G_{2} = ([D],E_{2})$, we write $G_{1} = G_{2}$ if and only if $E_{1} = E_{2}$.
We say that the estimator $\widehat G_{t,\tau}$ is consistent if 
$
\bbP ( \widehat G_{t,\tau} \neq G_{0} ) = o(1).
$
With appropriate choices of $\tau$ and $t$, the following theorem provides a non-asymptotic upper bound on $\bbP(\widehat G_{t,\tau} \neq G_{0})$.

\begin{theorem}
\label{thm2}
Suppose that $p_{0}$ satisfies (\bS), and let $\tilde \beta, \gamma, C_{\gamma}, C_{2}, {\widetilde T}$ be the constants defined in Section~\ref{ssec:hessian}.
For each $t \geq 0$, let $\bU_{t}$ be the uniform random vector on $[-\mu_{t}\gamma, \mu_{t}\gamma]^{D}$.
Let $\epsilon_{n} > 0$ be given with $\epsilon_{n} < 1/e$, and suppose that
\be
\bbE \left[ \left( \int_{\underline{T}}^{{\widetilde T}} \int_{\bbR^{D}} \big\| \widehat \bff(\bx,t) - \bff_0(\bx, t)  \big\|_2^2 p_{t}(\bx) \d \bx \d t \right)^{1/2} \right]
 \leq \epsilon_{n},
 \label{thm2:assmp1}
\ee
and
\be
 \sup_{\bff \in \cF } \sup_{ \bx \in \bbR^{D}} \big\| \bff( \bx, t) \big\|_{\infty} \leq  \sigma_{t}^{-1} \sqrt{\log (1/\epsilon_n)},
 \quad \forall t \in [\underline{T}, {\widetilde T}].
  \label{thm2:assmp2}
\ee
For each $t \leq {\widetilde T}$, let $\tau > 0$ be chosen to satisfy 
$
C_{2} \sigma_{t}^{\tilde \beta} < \tau < C_{\gamma} - C_{2} \sigma_{t}^{\tilde \beta}.
$
Then, for every $t \in [\underline{T}, {\widetilde T}]$, we have 
\bean
\bbP \Big( \widehat G_{t,\tau} \neq G_{0} \Big)
\leq C_{3} \left[
\frac{\underline{T}}{ (  t^{2 -\tilde \beta / 2} \eta_{t,\tau,1} ) \wedge ( t^2 \eta_{t,\tau,2} ) }
+
\frac{ \epsilon_n  \{ \log (1/\epsilon_n) \}^{ 3/2 } }{ t^2 ( \eta_{t,\tau,1} \wedge \eta_{t,\tau,2} ) } 
\right],
\eean
where $C_{3} = C_{3} (\beta, K, D, \gamma )$, $\eta_{t,\tau,1} = \tau - C_{2} \sigma_{t}^{\tilde \beta}$ and $\eta_{t,\tau,2} = C_{\gamma} - C_{2} \sigma_{t}^{\tilde \beta} - \tau$.

\end{theorem}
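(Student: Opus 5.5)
The plan is to reduce the event $\{\widehat G_{t,\tau}\neq G_0\}$ to a deviation bound on the averaged Hessian estimator, and then apply Markov's inequality. By the characterization \eqref{eq:iff_cond_t} and Corollary~\ref{cor1}, for $t\le\widetilde T$ every non-edge satisfies $\bbE[|H_{ij}(\bU_t,t)|]\le C_2\sigma_t^{\tilde\beta}$. For every edge, the lower bound in Corollary~\ref{cor1} gives $\bbE[|H_{ij}(\bU_t,t)|]\ge \bbE[|H_{ij}(\bU_t/\mu_t,0)|]-C_2\sigma_t^{\tilde\beta}=C_\gamma' - C_2\sigma_t^{\tilde\beta}$. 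Here $\bU_t/\mu_t\stackrel{d}{=}\bU_0$ restricted to $[-\gamma,\gamma]^D$, so the expectation is at least $C_\gamma$ up to the convention on $\bU$. Hence a misclassification of $(i,j)$ requires
\[
\bigl|\bbE_n[|\widehat H_{ij}(\bU_t,t)|]-\bbE[|H_{ij}(\bU_t,t)|]\bigr| \ge \eta_{t,\tau,1}
\]
for a non-edge, or $\ge\eta_{t,\tau,2}$ for an edge. A union bound over the finitely many pairs, followed by Markov's inequality, gives
\[
\bbP(\widehat G_{t,\tau}\ne G_0)\lesssim \frac{\bbE\,\Delta_{ij}}{\eta_{t,\tau,\cdot}},
\qquad
\Delta_{ij}=\bbE_n\bigl[|\widehat H_{ij}(\bU_t,t)-H_{ij}(\bU_t,t)|\bigr].
\]

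The remaining task is to bound $\bbE\,\Delta_{ij}$. By \eqref{eq:tweedie}, with $\mu_t\asymp1$ and $\sigma_t^{-4}\asymp t^{-2}$,
\[
\Delta_{ij}\lesssim t^{-2}\,\bbE_n\Bigl|{\rm Cov}_n[\widehat X_{\underline T,i},\widehat X_{\underline T,j}\mid \widehat\bX_t=\bU_t]-{\rm Cov}[X_{0,i},X_{0,j}\mid \bX_t=\bU_t]\Bigr|.
\]
This explains the factor $t^{2}$ in the denominators. The covariance difference splits into two parts.

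The first part is the bias from replacing $X_0$ by $X_{\underline T}$ under the true process:
\[
\bigl|{\rm Cov}[X_{\underline T,i},X_{\underline T,j}\mid\bX_t]-{\rm Cov}[X_{0,i},X_{0,j}\mid\bX_t]\bigr|.
\]
Writing $X_{\underline T}=\mu_{\underline T}X_0+\sigma_{\underline T}Z$ and conditioning through the Markov chain $X_0\to X_{\underline T}\to X_t$, this is $O(\underline T)$, uniformly in $\bx$ on the box, because $X_0\in[-1,1]^D$ and $1-\mu_{\underline T}\lesssim \underline T$. The $\sigma_{\underline T}^2$ term only affects the diagonal. Comparing with the required $\underline T/(t^{2-\tilde\beta/2}\eta_1)$, the non-edge bound has an extra factor $t^{\tilde\beta/2}$. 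I would obtain it by a sharper argument for non-edges. Apply Theorem~\ref{thm1} at both the time pair $(0,t)$ and the time pair $(\underline T,t)$: the law of $X_{\underline T}$ given $X_t$ is a Gaussian perturbation of that of $X_0$ given $X_t$. The difference of the two off-diagonal covariances should then be of order $\underline T\cdot\sigma_t^{\tilde\beta}$ when $d(i,j)\ge2$. If this refinement fails, I would accept the cruder $\underline T/(t^2\eta)$ as a fallback.

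The second part, which I expect to be the main obstacle, compares the conditional laws of $\widehat\bX_{\underline T}$ given $\widehat\bX_t=\bx$ with the true reverse process started from $\bx$, averaged over $\bx\sim\bU_t$. My approach is Girsanov. Both reverse SDEs on $[\overline T-t,\overline T-\underline T]$ share the initial point $\bx$ and differ only in drift, by $2(\widehat\bff-\bff_0)$. The KL divergence of the path laws is therefore $\int_{\underline T}^{t}\bbE\|\widehat\bff-\bff_0\|^2(\bX_s,s)\,ds$ under the true conditional bridge. Averaging over $\bx\sim\bU_t$ and using that $\bU_t$ has bounded density relative to $p_t$ on the box (since $p_t$ is bounded below there for small $t$) turns this into the score error in \eqref{thm2:assmp1}. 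Pinsker then gives a TV bound by the square root, which matches the square root inside \eqref{thm2:assmp1}.

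Converting TV into a covariance difference requires bounded moments. The true $X_0$ lies in $[-1,1]^D$, but $\widehat X_{\underline T}$ is not bounded. I would use \eqref{thm2:assmp2} to control the drift, which yields Gaussian-type tails for $\widehat X_{\underline T}$ at scale $\sqrt{\log(1/\epsilon_n)}$. Truncating at level $M\asymp\sqrt{\log(1/\epsilon_n)}$ then costs a factor $M^2$ on the second moments, plus a tail term of order $\epsilon_n$. A further $\sqrt{\log(1/\epsilon_n)}$ comes from the bounded-density change of measure and the score bound in the Girsanov step. Together these give $\epsilon_n\{\log(1/\epsilon_n)\}^{3/2}$. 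Taking expectations over the data and inserting into the Markov bound yields the stated result.
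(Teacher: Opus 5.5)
Your architecture matches the paper's: a union bound plus Markov, and a split into (a) the bias from replacing $\bX_0$ by $\bX_{\underline T}$ and (b) the diffusion-estimation error. For (b) you use a drift-difference bound on total variation, a conversion from conditional to marginal score error via $\inf_{[-\mu_t\gamma,\mu_t\gamma]^D}p_t\gtrsim 1$, and truncation based on \eqref{thm2:assmp2}. Part (b) is sound. Your Girsanov--Pinsker step replaces the paper's Fokker--Planck estimate (Corollary~1.2 of Bogachev et al.), which yields the same $d_{\rm TV}^2\le\int_{\underline T}^t\int\|\widehat\bff-\bff_0\|_2^2\,p_{u\mid t}$ and only needs that integral to be finite. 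With an unbounded $\bff_0$, Girsanov needs a localization argument.

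One small correction on the logarithms. The extra $\sqrt{\log(1/\epsilon_n)}$ does not come from the change of measure or from Girsanov; both are log-free. In the paper it comes from the product-of-means part of the covariance, where $|\bbE_n[\widehat X_{\underline T,j}\mid\widehat\bX_t=\bx]|\lesssim\sqrt{\log(1/\epsilon_n)}$ multiplies the first-moment error. This is harmless, since a smaller power of the log only helps.

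The genuine gap is in part (a), and it concerns exactly the term that makes the statement sharp. For non-edges you need the conditional-covariance bias to be $O(\underline T\,\sigma_t^{\tilde\beta})$. Your fallback $\underline T/(t^2\eta_{t,\tau,1})$ is strictly weaker than the claimed $\underline T/(t^{2-\tilde\beta/2}\eta_{t,\tau,1})$, so it does not prove the theorem. The proposed refinement also fails, for two reasons:
\begin{itemize}
\item Theorem~\ref{thm1} concerns $\bX_0\sim P_0$ with $\log p_0\in\cH^{\beta,K}([-1,1]^D)$. $P_{\underline T}$ has full support and is generally not Markov with respect to $G_0$, so no version of Theorem~\ref{thm1} applies at the pair $(\underline T,t)$.
\item Even if both covariances were $O(\sigma_t^{4+\tilde\beta})$, their difference would carry no factor $\underline T$.
\end{itemize}
Your justification of the cruder $O(\underline T)$ bound is also incomplete. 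Given $\bX_t$, the noise $\bZ$ in $\bX_{\underline T}=\mu_{\underline T}\bX_0+\sigma_{\underline T}\bZ$ is not conditionally independent of $\bX_0$, because both enter $\bX_t$. So $\sigma_{\underline T}\bZ$ does contribute off-diagonal terms, and Cauchy--Schwarz on the cross terms gives only order $\sigma_{\underline T}\sigma_t\asymp\sqrt{\underline T t}$.

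The missing idea is an exact identity. Start from
\[
\bX_s\mid(\bX_0=\bx_0,\bX_t=\bx)\sim\cN\big((\sigma_s^2\mu_{t-s}\bx+\sigma_{t-s}^2\mu_s\bx_0)/\sigma_t^2,\ (\sigma_{t-s}^2\sigma_s^2/\sigma_t^2)\bI_D\big).
\]
The law of total covariance then gives, for $i\neq j$,
\[
{\rm Cov}[X_{s,i},X_{s,j}\mid\bX_t=\bx]=\frac{\sigma_{t-s}^4\mu_s^2}{\sigma_t^4}\,{\rm Cov}[X_{0,i},X_{0,j}\mid\bX_t=\bx].
\]
Since $\sigma_t^4-\sigma_{t-s}^4\mu_s^2=\sigma_s^2(1-e^{-4t+2s})$, the bias is multiplicative: at most $(\sigma_s^2/\sigma_t^4)\,|{\rm Cov}[X_{0,i},X_{0,j}\mid\bX_t=\bx]|$ with $s=\underline T$. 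Theorem~\ref{thm1} bounds the last factor by $C_2\sigma_t^{4}$ for edges and by $C_2\sigma_t^{4+\tilde\beta}$ for non-edges. After the $\mu_t^2\sigma_t^{-4}$ scaling this yields $\underline T/t^2$ and $\underline T\,t^{\tilde\beta/2}/t^2$ respectively, which is exactly the first term of the stated bound.
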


The condition \eqref{thm2:assmp1} is a key assumption for establishing consistency; it specifies the convergence rate of $\widehat \bff$ to the true score function $\bff_{0}$, which in turn governs the convergence rate of the graph estimator $\widehat G_{t,\tau}$ to $G_{0}$.
Several works on diffusion models establish \eqref{thm2:assmp1} under (\bS), together with additional technical assumptions.
For example, when $\cF$ is taken to be a class of sparse neural networks, the rate $\epsilon_{n} \asymp n^{-\beta/(2\beta+D)}$ is achievable up to a logarithmic factor~\citep{oko2023diffusion}.
Moreover, the improved rate $\epsilon_{n} \asymp n^{-\beta/(2\beta+d)}$ can be attained up to a logarithm factor by taking $\cF$ to be a class of sparse weight-sharing networks \citep{kwon2026nonparametric} or fully connected networks \citep{fan2025optimal}.
Here, $d = \max_{C \in \cC_{0}} \vert C \vert$ denotes the maximum clique size of $G_{0}$.
The condition~\eqref{thm2:assmp2} is a technical assumption that controls the tail probability of the solution to the SDE~\eqref{eq:diffusion}, which is also imposed in the aforementioned works \citep{oko2023diffusion, kwon2026nonparametric, fan2025optimal}.

By Theorem \ref{thm2}, once we take $\underline{T} \asymp n^{-c_{0}}$ for a sufficiently large constant $c_{0} > 0$, the estimator $\widehat G_{t, \tau}$ is consistent whenever
\be
\frac{\epsilon_n \{\log(1/\epsilon_n)\}^{3/2}}{t^2 ( \eta_{t,\tau,1} \wedge \eta_{t,\tau,2} )}=o(1).
\label{eq:consistency}
\ee
When $\epsilon_{n} = o(1)$, condition \eqref{eq:consistency} is easily verified by taking $t$ and $\tau$ to be constants, for example, $t = {\widetilde T}$ and $\tau = C_{\gamma}/2$.
As another example, if $\beta \geq 3$ and $\epsilon_{n} \asymp n^{-c_{1}}$, then the choice $t = n^{-c_{1}/4}$ and $\tau = 2n^{-c_{1}/32}$ satisfies \eqref{eq:consistency}.


Note that the constant $C_{\gamma}$ represents the minimum signal of dependence between variables.
We have implicitly assumed that this signal is of constant order; however, it is natural to consider settings in which two variables are conditionally dependent, but the strength of their dependence is weak. Such weak dependence can be accommodated by allowing $C_{\gamma} \to 0$ as $n \rightarrow \infty$.
Building on the previous example, let $c_{1} = \beta/(2\beta+d)$, $\epsilon_{n} = n^{-c_{1}}$, $t = n^{-c_{2}}$ and $\tau = 2 C_{2} \sigma_{t}^{\widetilde \beta}$, where $c_{2} = 2 c_{1}/(4 + \widetilde \beta) - \delta/\widetilde \beta$ for a sufficiently small $\delta > 0$.
Then, condition \eqref{eq:consistency} is satisfied if $ C_{\gamma} 
\gtrsim n^{-\frac{\tilde \beta \beta}{(\tilde \beta + 4)(2\beta + d)} + \delta} $, which implies that the graph estimator $\widehat G_{t,\tau}$ remains consistent even in this weak dependence scenario.


\section{Practical tuning strategy} \label{sec:tuning}

While Theorem \ref{thm2} provides sufficient conditions for $(t, \tau)$ to guarantee the consistency of $\widehat G_{t, \tau}$, these parameters depend on unknown quantities such as $d$ and $\beta$. In particular, selecting an appropriate threshold $\tau$ poses a significant challenge in practical applications. To address this inherent limitation, this section discusses practical procedures for aggregating information across a range of timescales $t$ and employing a heuristic to determine the threshold for edge selection. A detailed discussion accompanied by an illustrative example is provided in Section~\ref{sec:supp_details}.

For each fixed $t$, let $\widetilde H_{ij}(t)$ be the standardized value of $\bbE_n[|\widehat H_{ij}(\bU_{t}, t)|]$, where the standardization is performed across all pairs $\{ (i,j) : i,j \in [D], \, i < j \}$. Instead of directly thresholding $\bbE_n[|\widehat H_{ij}(\bU_{t}, t)|]$ for a single $t$, we leverage information across multiple time points $t_1 < \cdots < t_M$, chosen to be sufficiently small. Empirically, we have observed that it suffices to select $t_i$'s such that $\sigma_{t_i} / \mu_{t_i} \leq 0.5$. We set $M=30$ in all our experiments. We then apply $K$-means clustering with $K = 2$ to the $M$-dimensional vectors
$
\{ \big(\widetilde H_{ij}(t_1), \ldots, \widetilde H_{ij}(t_M)\big) : i, j \in [D], \, i < j \}.
$
For each $i < j$, we let $(i, j) \in \widehat E$ (and consequently $(j, i) \in \widehat E$) whenever the cluster containing $(i, j)$ corresponds to the one with the larger centroid. As illustrated in Section \ref{sec:simulation}, this simple rule works surprisingly well across all our numerical experiments.




\section{Simulation studies} \label{sec:simulation}

With the tuning strategies developed in Section~\ref{sec:tuning}, we have conducted extensive simulation studies to empirically demonstrate the performance of the proposed method and compare it with existing parametric and nonparametric approaches. We consider four data distributions, comprising two Gaussian and two non-Gaussian distributions. Remarkably, across all experiments, the proposed method performs competitively with the correctly specified parametric models and outperforms all nonparametric baselines by a significant margin. The full experimental setup, implementation details, and a detailed discussion of these results are provided in Section~\ref{sec:supp_simul}; here, we focus on an illustrative case.

Figure~\ref{fig:simul_examples} presents the results for one non-Gaussian example (Gaussian copula), showing that the values $\widetilde H_{ij}(t)$ align well with the graph distance $d(i,j)$. As can be seen, two clear clusters emerge based on whether $d(i,j)=1$ or $d(i, j) > 1$, and their separation becomes increasingly distinct as the sample size increases. This behavior is consistently observed throughout all our experiments.


\begin{figure*}[!t]
    \centering
        \includegraphics[width=\linewidth]{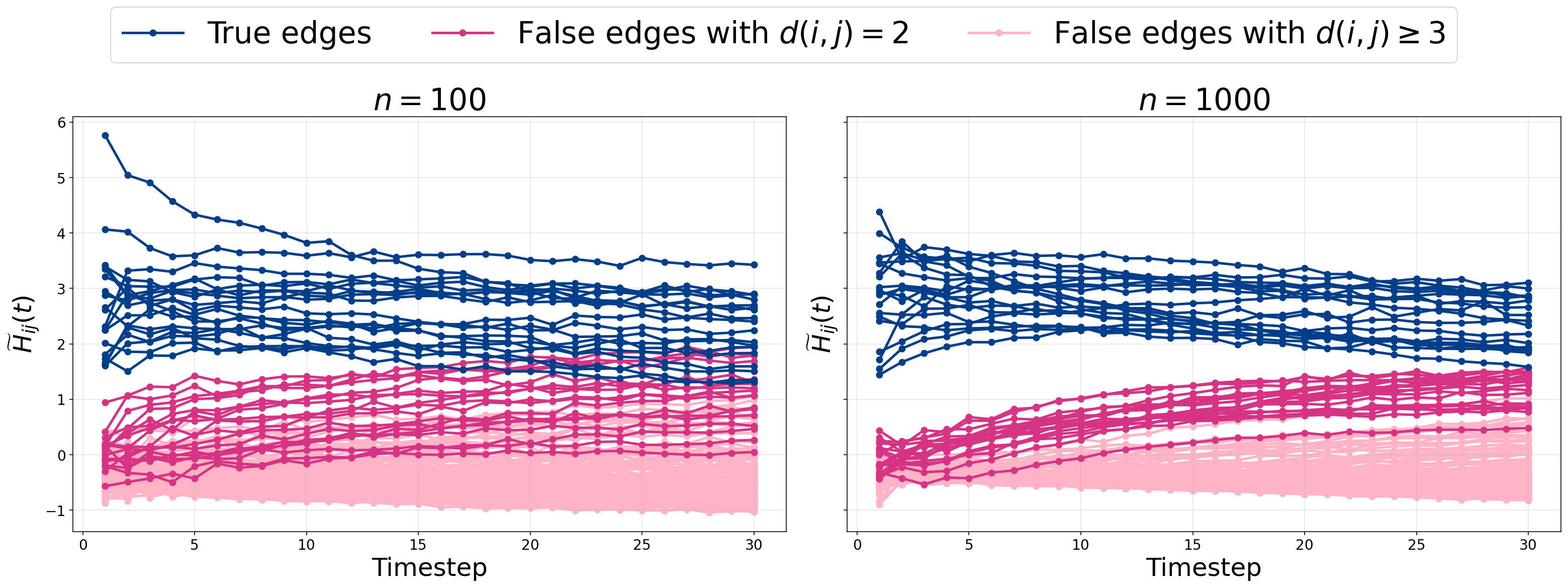}
        \caption{$\widetilde H_{ij}(t)$ for a non-Gaussian example with $n = 100$ (left) and $n = 1000$ (right).} \label{fig:simul_examples}
\end{figure*}

\section{Real data analysis} \label{sec:realdata}

In this section, we apply our diffusion-based method for undirected graphical model selection to two real datasets that have not been explored in the existing literature.
We first apply our method to image data, where the recovered graph has a transparent interpretation, and then to stock price data, where the dependence structure among companies is less obvious and of independent interest.

\subsection{Image analysis}
\label{ssec:image}

A long-standing tradition in image analysis is to represent an image as a local undirected graph, in which nearby pixels are directly connected, and distant pixels are connected only through their neighbours.
This locality has long been exploited in classical image processing \citep{li2009markov} and underlies the design of modern architectures such as convolutional neural networks \citep{krizhevsky2012imagenet, goodfellow2016deep}.
Although recent empirical work suggests that conditioning on neighbouring pixels weakens the dependence between distant pixels \citep{vandermeulen2024breaking, vandermeulen2025dimension}, the conditional independence graph of image data has not been formally investigated, to the best of our knowledge.
We apply the proposed method to the MNIST dataset \citep{lecun1989handwritten}, which consists of $50000$ training images, each of size $28 \times 28 = 784$ pixels; see Section~\ref{ssec:supp_mnist} for further details.

The left panel of Figure~\ref{figure:mnist_example} displays the resulting estimated graph.
Each vertex is either isolated or connected only to its immediate neighbours, which aligns with the locality principle described above.
Edges are concentrated near the center of the image, where most non-zero pixel values reside.
Out of the $\binom{784}{2} = 306936$ possible edges, only $640$ are recovered, indicating that the conditional independence graph of MNIST is highly sparse.

A more striking feature is the emergence of diagonal edges. In addition to horizontal and vertical neighbors, the estimated graph contains a substantial number of diagonal edges connecting upper-right to lower-left pixels, but markedly fewer connecting upper-left to lower-right pixels.

To illustrate this phenomenon, we examine the estimated local Hessian values around a fixed anchor pixel.
Since each pixel of the $28 \times 28$ grid can be identified by its position in the $784$-dimensional vector, we fix the anchor at $i = (15, 16)$ and report the values of $\widetilde H_{ij}(t)$ for all $j \in [28] \times [28]$ with $j \neq i$ in the right panel of Figure~\ref{figure:mnist_example}.
Here, the upper-left pixel corresponds to $(1, 1)$ and the lower-right pixel to $(28, 28)$.
The four pixels at $\ell^{1}$-distance one from $i$, namely $(14, 16)$, $(16, 16)$, $(15, 15)$, and $(15, 17)$, are clearly identified as strong candidates for inclusion in the edge set.
Among the pixels at $\ell^{1}$-distance two from $i$, the next strongest candidates are $(16, 15)$ and $(14, 17)$, which correspond precisely to the upper-right-to-lower-left diagonal direction.
This pattern aligns with the asymmetry of diagonal edges observed in the estimated conditional independence graph.

We conjecture that this asymmetry reflects the predominant slant of MNIST digits, which tend to lean from the upper right toward the lower left. This slant arises mainly because the digits are handwritten, and the majority of writers are right-handed, typically writing digits from the upper right toward the lower left. While the horizontal and vertical neighbors can be readily explained by the general properties of natural images, these diagonal edges reflect a property specific to MNIST images themselves. This example shows that our model captures even such an unanticipated structure, suggesting its potential for broader applications.

We have shown that the MNIST data set admits a low-dimensional, sparse graphical structure.
Although the ambient dimension is $D = 784$, Figure~\ref{figure:mnist_example} reveals that the maximum clique size in the estimated graph is only $d = 4$.
This provides empirical support for the sparse graphical model assumptions employed in \cite{kwon2026nonparametric, fan2025optimal} to explain the success of diffusion models on high-dimensional structured data.

\begin{figure*}[!t]
    \centering
    \includegraphics[width=\linewidth]{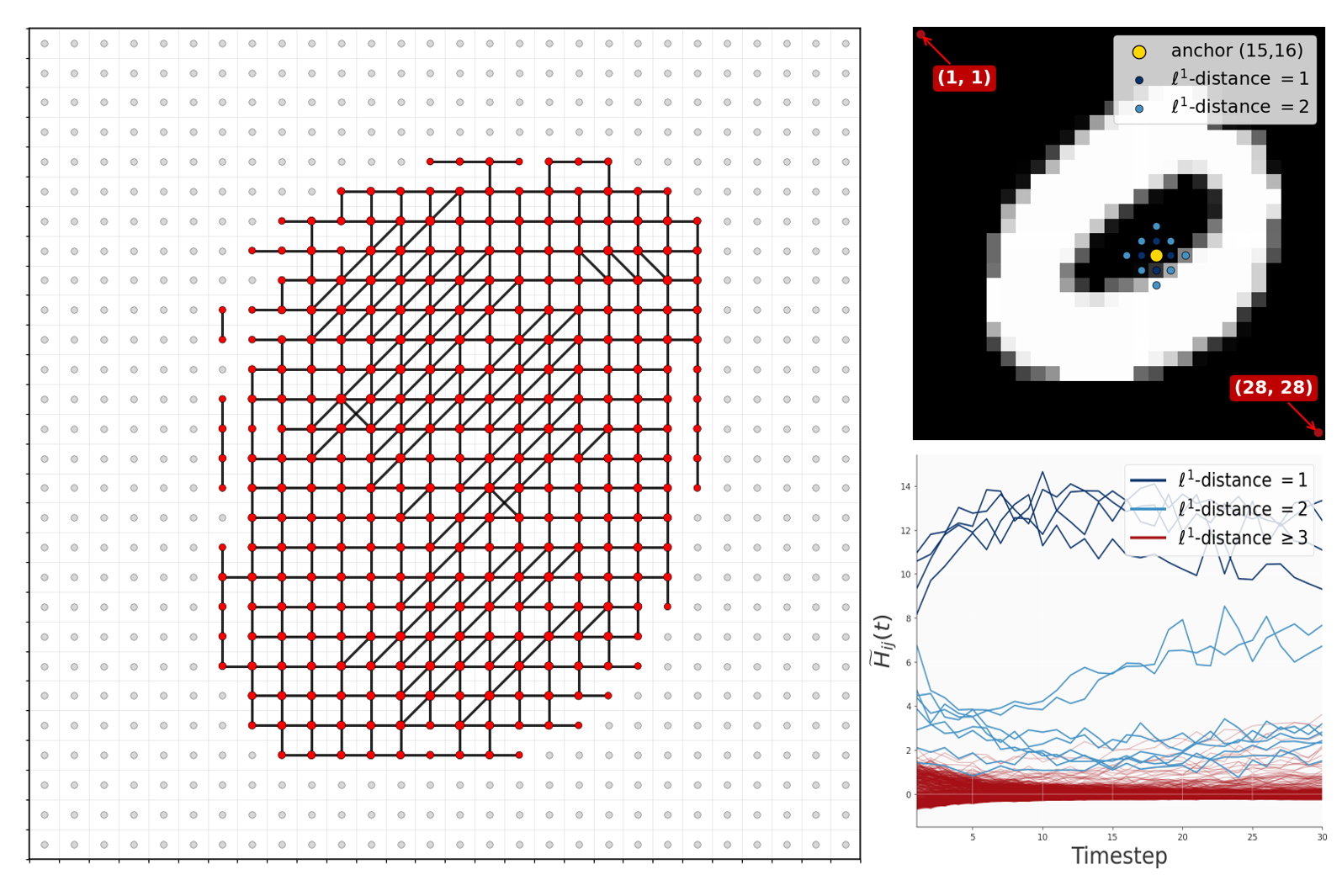}

    \caption{Estimated conditional independence graph of the MNIST data (left), and an example MNIST image with the values $\widetilde H_{ij}(t)$ for a fixed anchor pixel $i = (15, 16)$ (right). The values cluster according to the $\ell^{1}$-distance from the anchor.}
    \label{figure:mnist_example}
\end{figure*}



\subsection{Network analysis}
\label{ssec:network}

We analyze the stock prices of $D = 28$ companies and compare the recovered edges with supplementary knowledge about inter-company relationships. Through the \texttt{yfinance} package, we collect the daily closing stock prices of each company from January 1st, 2019 to December 31st, 2019, yielding 251 daily prices $\bR^{1}, \ldots, \bR^{251}$ per company. The inter-company relationships are collected from \textit{Relato}\footnote{\url{https://data.world/datasyndrome/relato-business-graph-database}} and pre-processed in the same way as \cite{xu2023covariate}. We focus on the ``competitor" and ``customer/supplier" relationships, as they are key drivers of the joint dynamics of stock prices. We restrict attention to companies in the Standard \& Poor's $500$ that belong to the industrials sector as of March $2026$ and appear in the Relato dataset. The two relationship graphs are shown in Figure~\ref{figure:network_true}.

Our goal is to estimate the conditional independence graph from the stock prices $\bR$. 
To eliminate the time dependence across stock prices, we work on the daily returns instead of the raw prices. Define $\bX^t = (\bR^{t+1} - \bR^{t})/\bR^t$ for $t \in [250]$, where $\bX^t \in \mathcal{R}^{28}$ contains the daily returns of all 28 companies on the $t$-th day. We apply our Algorithm \ref{alg:simple} to $\bX^{1}, \bX^2, \cdots, \bX^{250}$, with tuning parameters chosen as in Section \ref{sec:tuning}. 
Implementation details are provided in Section \ref{ssec:supp_network}. 

The estimated graph $\widehat{G}$ is summarized in Figure \ref{figure:network_est}, which recovers 15 edges out of $\binom{28}{2} = 378$ possible edges. It indicates the sparsity of conditional independence. 
We compare the estimated graph with the competitor and the customer/supplier relationships in Table~\ref{tab:graph_metrics_with_boeing}, in terms of the true positive rate (TPR), false discovery rate (FDR), and Jaccard distance (JD). In the relationship networks, Boeing is connected to many more companies than any other single company. It indicates the impact of Boeing on the whole industry. Hence, we consider both cases where Boeing is included and excluded from the true network in the comparison. 
The estimated graph recovered 7 out of 12 edges in the competitor graph and 4 out of 18 edges in the customer/supplier graph, excluding Boeing. 
Figure~\ref{fig:network_hessian} displays the values $\widetilde H_{ij}(t)$ for all pairs $(i,j)$, with colors indicating whether each pair is labeled as a competitor relationship, included in the estimated graph, or both.
The full list of mismatches between the competitor graph and the estimated graph is provided in Table~\ref{tab:competitor_mismatch} in Section~\ref{ssec:supp_network}.
These results validate the effectiveness of our method.  

\begin{table}[!t]
\caption{Graph estimation metrics for competitor and customer/supplier relationships in the stock price data, with and without Boeing-related relationships.}
\label{tab:graph_metrics_with_boeing}
\footnotesize
\centering
\begin{tabular}{l|c|cccc}
\toprule
Relationship &  & Total & TPR & FDR & JD \\
\midrule
Competitor &Boeing Included&  17  & 7/17 & 8/15 & {0.72} \\
 & Boeing Excluded & 12 & \textbf{7/12} & 8/15 & \textbf{0.65} \\
 \hline
Customer/Supplier &Boeing Included & 25 & 4/25 & 11/15 & 0.89 \\
 & Boeing Excluded  &  18 & 4/18 & 11/15 & 0.86 \\
\bottomrule
\end{tabular}
\end{table}

A closer inspection further suggests that $\widehat{G}$ recovers economically meaningful relationships, even though they are not indicated in the competitor graph. 
For example, the top two strongest false positive edges in $\widehat{G}$ are (\textit{Parker Hannifin}, \textit{Rockwell Automation}) and (\textit{Eaton}, \textit{Rockwell Automation}). 
Eaton and Rockwell Automation have a publicly disclosed technology-partner relationship, where Eaton's power-management technologies complement Rockwell Automation's industrial automation solutions by improving power continuity and system reliability for manufacturing customers\footnote{\url{https://www.eaton.com/us/en-us/catalog/backup-power-ups-surge-it-power-distribution/eaton-intelligent-power-manager/power-management-alliance-partners/rockwell-automation.html}}. 
Parker Hannifin and Rockwell Automation serve overlapping industrial and manufacturing customers, at different layers of the architecture. Parker provides motion/control components and subsystems while Rockwell provides the automation control and software platform. In the customer/supplier graph, we can observe a common neighbour of both Parker Hannifin and Rockwell Automation, validating this relationship. 
This prediction is economically plausible.


Conversely, the pair (\textit{ADP}, \textit{FedEx}) is labeled as a competitor in the Relato dataset, but absent and even disconnected in $\widehat G$.
This pair is unlikely to represent a meaningful competitive relationship: ADP operates in payroll, human capital management, HR outsourcing, and PEO services, whereas FedEx operates in transportation, parcel delivery, freight, and logistics, so the two firms' products are not substitutes from the customer's perspective. 
However, both firms target small-business customers, so they may bid on overlapping keywords related to small-business services, which causes a ``competitor" relationship in the Relato dataset. 
This competitor relationship does not affect the stock price.
Together, these examples suggest that stock-price-based graph estimation may provide complementary information on economically meaningful relationships beyond those recorded in the relationship dataset.


\begin{figure*}[!t]
    \centering
    \includegraphics[width=0.48\linewidth]{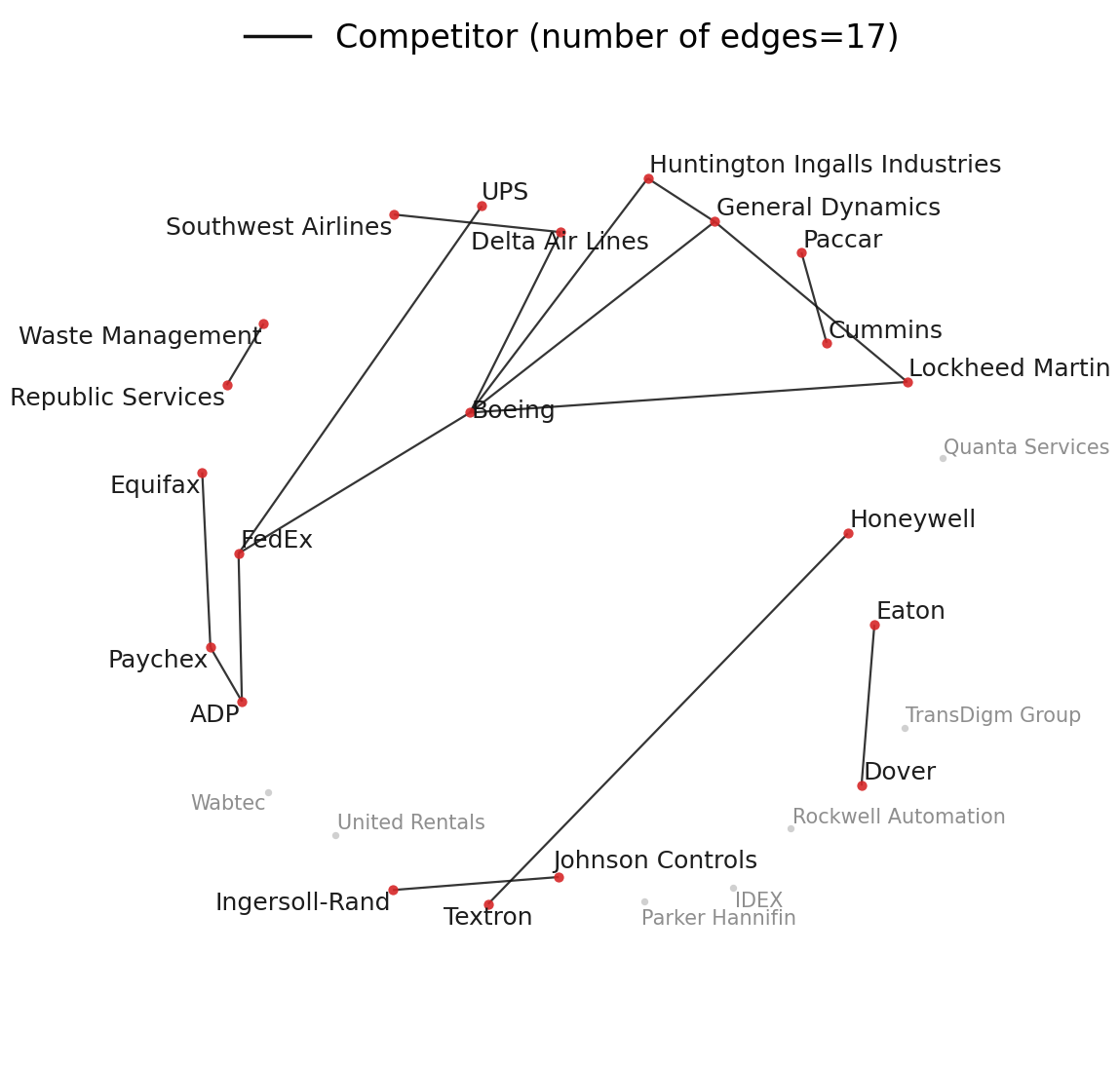}
    \includegraphics[width=0.48\linewidth]{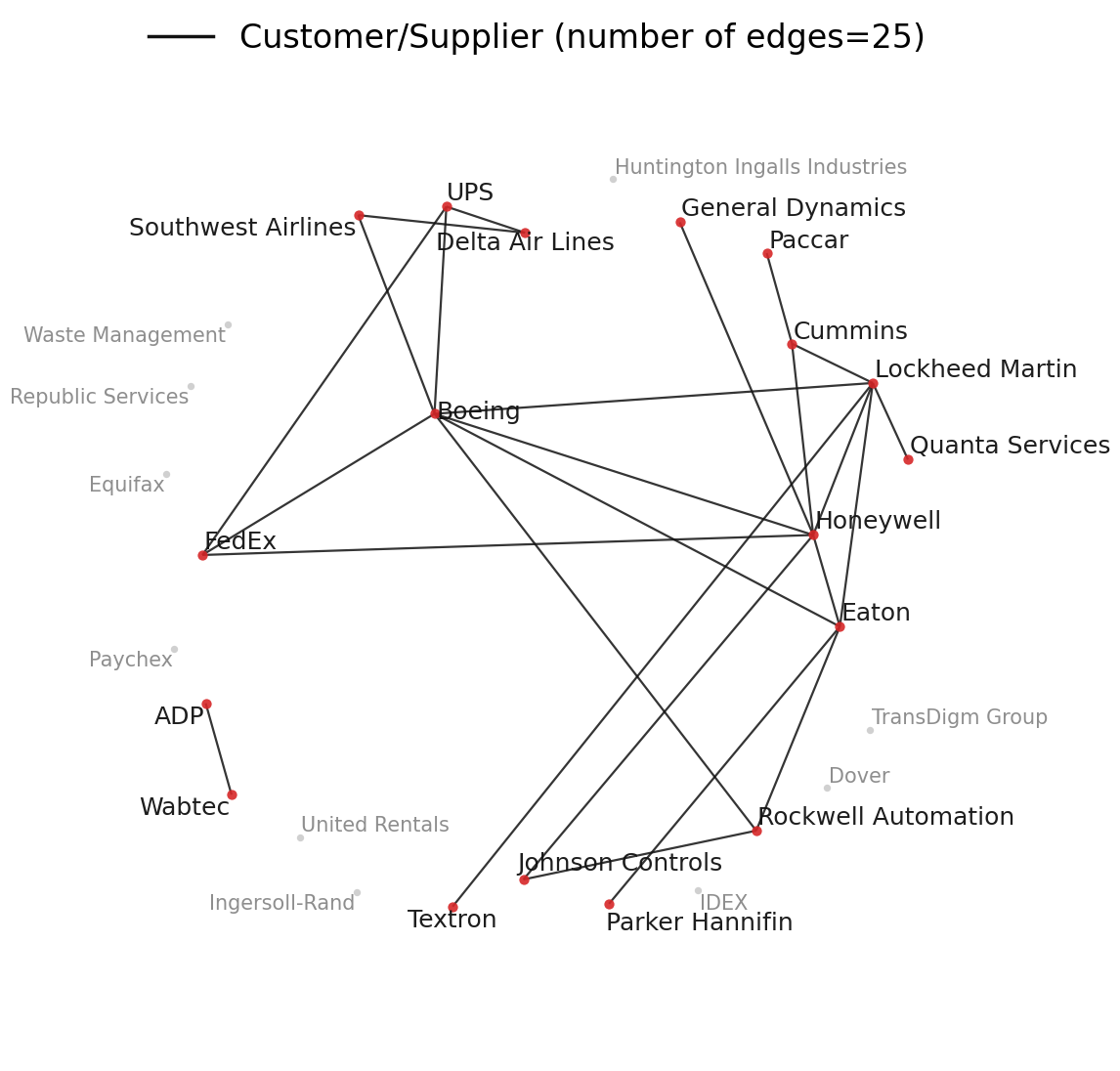}
    \caption{Relationship graphs from \textit{Relato} for the $28$ industrials companies in the Standard \& Poor's $500$: competitor relationships (left) and customer/supplier relationships (right).}
    \label{figure:network_true}
\end{figure*}

\begin{figure*}[!t]
    \centering
    \subfigure[Estimated conditional independence graph $\widehat G$]{
    \includegraphics[width=0.48\linewidth]{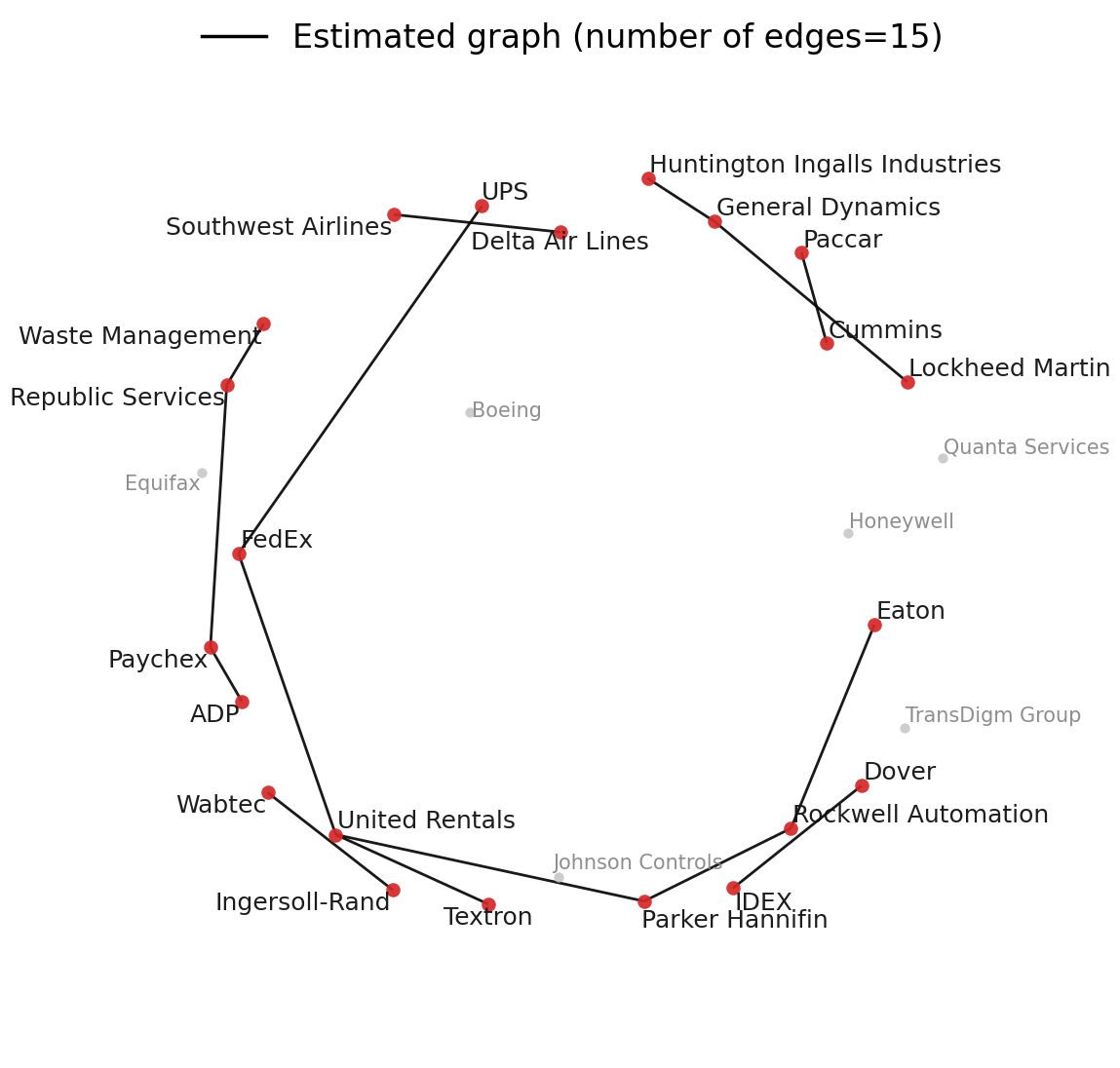}
    \label{figure:network_est}
    }
    \subfigure[$\widetilde H_{ij}(t)$ for all pairs $(i,j)$, colored by competitor label and by inclusion in $\widehat G$]{
    \includegraphics[width=0.465\linewidth]{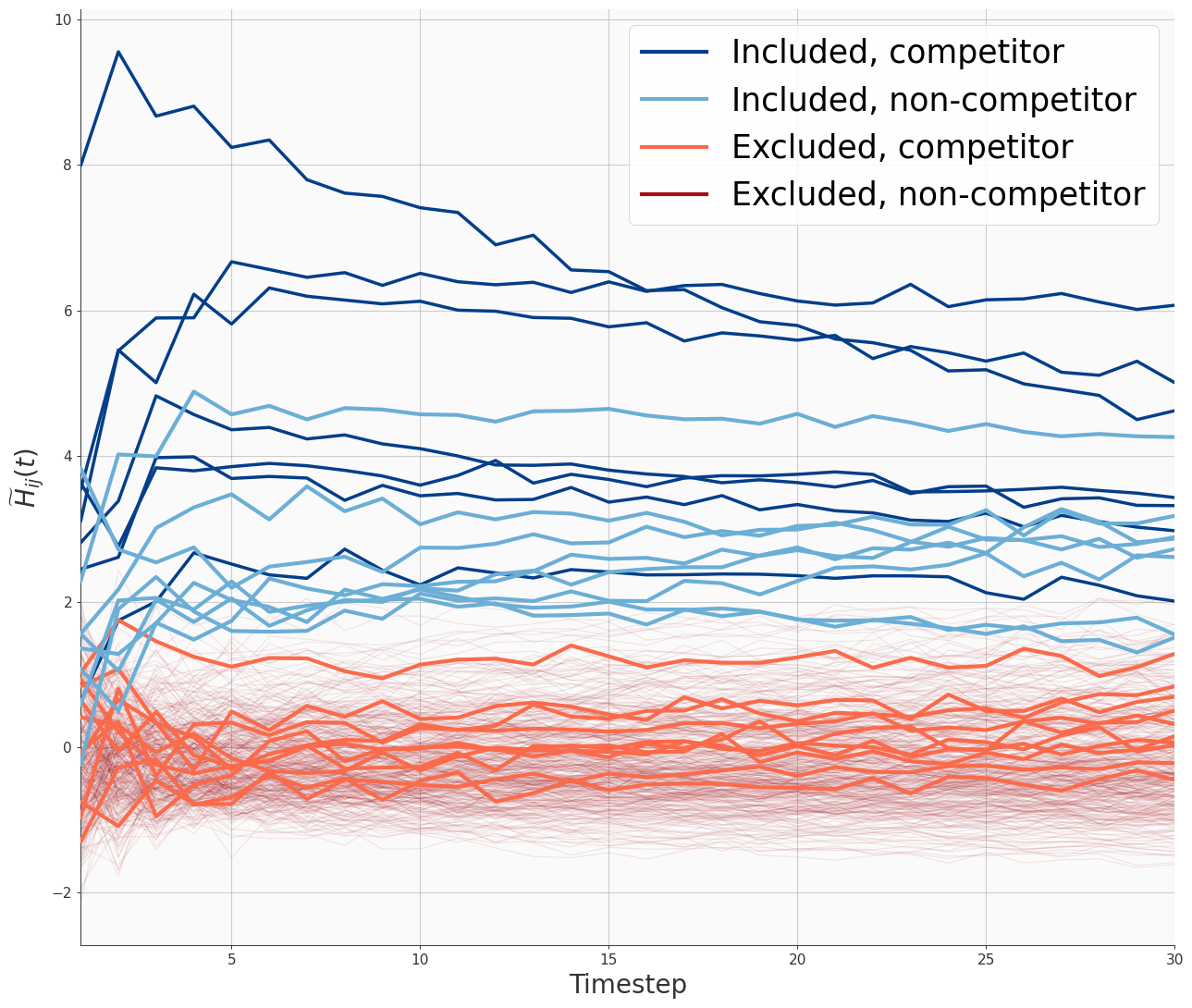}
    \label{fig:network_hessian}
    }        
    \caption{Graph estimation results for the $28$ industrials companies.}
\end{figure*}

\section{Conclusion} \label{sec:conclusion}

We have proposed a diffusion-based estimator of the conditional independence graph and shown that it is theoretically consistent and empirically competitive across a variety of data distributions.
We close by discussing two directions for future work.

First, as briefly noted in Section~\ref{sec:introduction}, a natural next step is to extend our analysis to high-dimensional settings in which the dimension $D$ grows with the sample size $n$.
Given that diffusion models have demonstrated strong empirical performance even when $D \gg n$, we expect that our framework can be extended to such scenarios with more delicate analysis.


Second, it would be interesting to extend our framework to discrete distributions, for which diffusion models based on continuous-time Markov chains (CTMCs) have recently achieved remarkable success in modeling high-dimensional discrete data such as natural language \citep{lou2024discrete, khanna2025mercury}.
Since our framework relies on the connection between the SDE solution and conditional independence via Tweedie's formula, investigating an analogous connection in the CTMC setting is a natural direction.

\section*{Data availability}
All code and data for reproducing the numerical experiments are publicly available at \url{https://github.com/hkkwon0922/nonparametric_graphical_diffusion}.

\section*{Acknowledgments}
This work was supported by Samsung Science and Technology Foundation under Project Number SSTF-BA2101-03, and the Singapore MOE Tier 1 A-8003581-00-00.
We disclose that AI-based tools were used to assist with English grammar editing and code development for the numerical experiments.

\clearpage
\pagebreak
\bibliographystyle{apalike}
\bibliography{bib-short}

\clearpage
\pagebreak
\appendix
\renewcommand{\contentsname}{Appendix}

\addtocontents{toc}{\protect\setcounter{tocdepth}{3}}
\renewcommand{\thesection}{\Alph{section}}
\tableofcontents

\setcounter{figure}{0}
\renewcommand{\thefigure}{A\arabic{figure}}

\setcounter{table}{0}
\renewcommand{\thetable}{A\arabic{table}}

\section{Related work and illustrative example}
\label{sec:supp_rel}

\subsection{Related work} \label{ssec:supp_related}

In this subsection, we review existing methods for undirected graphical model selection.
We also discuss existing work on nonparametric density estimation relevant to our method.

The factorization \eqref{eq:factorization} often yields convenient characterizations of the graph structure in certain distribution families.
For example, in the Gaussian case with covariance matrix $\Sigma \in \bbR^{D \times D}$, the condition $(i,j) \notin E_{0}$ is equivalent to $(\Sigma^{-1})_{ij} = 0$ for all $i \neq j \in [D]$; see Proposition~5.2 of \cite{lauritzen1996graphical}.
Thus, Gaussian graphical models recover the graph $G_{0}$ by estimating the precision matrix \citep{dempster1972covariance, meinshausen2006high, yuan2007model, friedman2008sparse}.
This idea extends to semiparametric settings, since conditional independence is invariant under componentwise monotone transformations \citep{liu2009nonparanormal}.
Accordingly, the precision matrix of the transformed Gaussian random vector is estimated after suitable marginal transformations~\citep{liu2009nonparanormal, liu2012copula, xue2012regularized}; these methods are known as nonparanormal graphical models.

While nonparanormal models offer greater flexibility, they still fundamentally rely on a latent Gaussian structure. To move beyond such constraints, it is natural to investigate whether the graph structure can be characterized in a fully nonparametric manner.
To the best of our knowledge, however, existing methods rely on restrictive assumptions.
For example, $G_{0}$ is restricted to be an undirected acyclic graph (forest), which in particular implies $d \leq 2$, where $d$ denotes the maximum clique size of $G_{0}$.
They exploit the fact that, for this graph class, the factorization \eqref{eq:factorization} admits an explicit representation in terms of the univariate and bivariate marginal densities of $p_{0}$.

The methods most closely related to ours are those that exploit the equivalence \eqref{eq:iff_cond} to recover $G_{0}$ through the Hessian of $\log p_{0}$ \citep{baptista2024learning, zheng2023generalized, liaw2025learning}.
These methods first estimate $\log p_{0}$ or $\nabla \log p_{0}$ under suitable regularization and then differentiate to obtain a pointwise estimator of $\nabla^2 \log p_0(\bx)$.
The graph estimator is then constructed by thresholding the estimated entries of the matrix $\bbE [ \vert \nabla^2 \log p_{0}(\bU) \vert ]$ for a suitable random vector $\bU$, where $\vert \cdot \vert$ is applied componentwise to the matrix.
The SING model \citep{baptista2024learning} assumes that $p_{0}$ is the density of the pushforward of a standard normal random vector under a monotone lower-triangular map parameterized by tensor-product Hermite polynomials.
It establishes consistency by estimating the transport map within the same model class and then taking second-order derivatives of the logarithm of the induced pushforward density.
While selection consistency is not established in~\cite{zheng2023generalized}, the graph is estimated by first learning $\nabla \log p_{0}(\bx)$ with a neural network via score matching~\citep{hyvarinen2005estimation}, and then differentiating the learned score to estimate $\nabla^2 \log p_0$.
The L-SING model \citep{liaw2025learning} estimates the conditional density of $X_{0,i}$ given the remaining variables for each $i \in [D]$ and uses the Hessian of the corresponding log-conditional density to construct the graph.

A common limitation of these works is that they all require pointwise estimation of the Hessian of $\log p_{0}$, which is intrinsically more difficult than estimating the density or its score; see Section~6 of \cite{kwon2026nonparametric} and the references therein.
Our method circumvents this difficulty by using diffusion models to estimate the Hessian of $\log p_{t}$ instead of that of $\log p_{0}$.
This is made possible by the minimax optimality of diffusion models in nonparametric density estimation.

The minimax optimal rate for estimating a standard smooth density in the $\beta$-\Holder \ class is well known to be $n^{-\beta/(2\beta + D)}$ \citep{tsybakov2008introduction, gine2016mathematical}, for example, under the total variation distance.
Various modern deep generative models also attain this rate up to logarithmic factors \citep{kwon2024minimax, oko2023diffusion, puchkin2024rates}.
Moreover, diffusion models can attain the rate $n^{-\beta/(2\beta+d)}$ for every $\beta > 0$ and $d \leq D$ \citep{kwon2026nonparametric, fan2025optimal}, which is faster than the standard rate $n^{-\beta/(2\beta+D)}$.
In fact, for any constants $\beta > 0$ and $d_0 \leq D$, the rate $n^{-\beta/(2\beta+ d_0)}$ is minimax optimal over densities in the $\beta$-\Holder \ class whose conditional independence graph has maximum clique size at most $d_0$ \citep{kwon2026nonparametric}.
This suggests that diffusion models are natural candidates for density estimation regardless of whether the underlying graph $G_{0}$ is sparse.

\subsection{Illustrative example}
\label{ssec:supp_illu}

In this subsection, we provide an illustrative example motivating our method.
The definition of $H_{ij}(\bx,t)$ is given in Section~\ref{sec:method}.
Here, we let $\bX_0 = (X_{0,1}, X_{0,2}, X_{0,3})$ follow a Gaussian distribution in which $X_{0,1}$ and $X_{0,3}$ are conditionally independent given $X_{0,2}$.
Figure~\ref{fig:gaussian} presents the absolute values of $H_{ij}(t) = H_{ij}(\bx,t)$.
We suppress the dependence on $\bx$, since $p_t$ remains Gaussian for all $t \geq 0$, and hence $H_{ij}(t)$ is simply the negative of the $(i,j)$ entry of the precision matrix of $\bX_t$.
As shown in the figure, $\vert H_{13}(t) \vert$ converges to zero as $t \to 0$, whereas the other entries remain bounded away from zero, indicating that the edge $(1,3)$ is absent from the graph.
As $t \rightarrow \infty$, however, $\bX_t$ rapidly approaches the standard Gaussian, and all $\vert H_{ij}(t) \vert$ converge to zero.
Thus, to characterize the graph structure through the values of $\vert H_{ij}(t) \vert$, it is necessary to take $t$ sufficiently small.

\begin{figure}[!t]
    \centering
    \begin{minipage}[t]{0.35\linewidth}
        \centering
        \vspace{20pt}
        \begin{tikzpicture}[every node/.style={circle, draw, inner sep=2.0pt}]
            \node (x1) at (0,0) {$X_{0,1}$};
            \node (x2) at (2.0,0) {$X_{0,2}$};
            \node (x3) at (4.0,0) {$X_{0,3}$};
            \draw (x1) -- (x2) -- (x3);
        \end{tikzpicture}
        \\
        \bean
        \Sigma_0^{-1} =
        \begin{pmatrix}
        1 & \ 0.7 & \ 0\\
        0.7 & \ 1 & \ 0.5\\
        0 & \ 0.5 & \ 1
        \end{pmatrix}
        \eean
    \end{minipage}
    \hfill
    \begin{minipage}[t]{0.6\linewidth}
        \centering
        \vspace{0pt}
        \includegraphics[width=\linewidth]{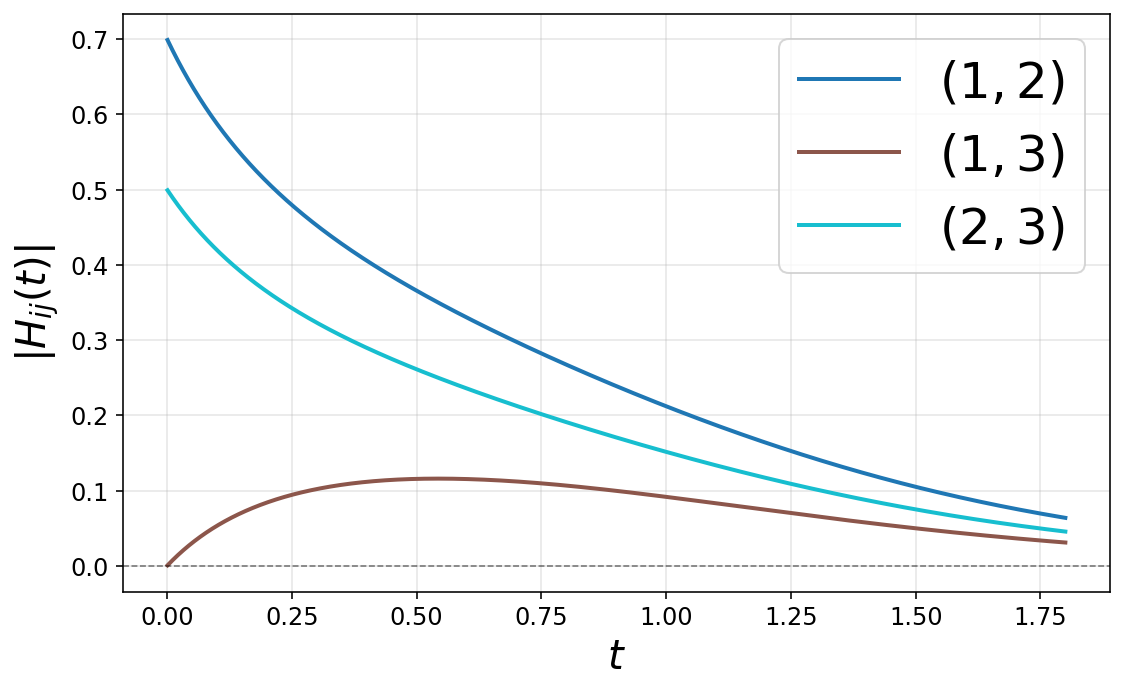}
    \end{minipage}
    \caption{ $\vert H_{ij}(t) \vert$ for 3-dimensional Gaussian distribution with covariance matrix $\Sigma_{0}$ and the corresponding conditional independence graph.}
    \label{fig:gaussian}
\end{figure}

\section{Proofs}
\label{sec:supp_proofs}

In this section, we provide the proofs of the theorems and the corollary in Section~\ref{sec:consistency}.
We first present an overview of the key ideas behind the proofs in Section~\ref{ssec:proof_outlines}, and then give the complete proofs in Sections~\ref{ssec:proof_thm1}--\ref{ssec:proof_thm2}.

\subsection{Proof outlines}
\label{ssec:proof_outlines}

In this subsection, we outline the key ideas behind the proofs of Theorems~\ref{thm1} and~\ref{thm2}.
We begin with Theorem~\ref{thm1}.

\subsubsection{Theorem~\ref{thm1}}

Note that
\bean
{\rm{Cov}} \big[ \bX_{0} \mid \bX_{t} = \bx \big] =  \left( \frac{\sigma_t}{\mu_t} \right)^2 \Cov \big[ \bZ_{0} \big],
\label{eq:pf1_change}
\eean
where $\bZ_{0}$ is the conditional random vector $(\mu_{t}\bX_{0} -\bx)/\sigma_{t}$ given $\bX_{t} = \bx$.
Here, we suppress the dependence on $\bx$ and $t$ by treating them as fixed.
Using the above identity, it suffices to derive upper and lower bounds for the covariance of $\bZ_{0}$ instead.

We first approximate $\bZ_{0}$ by a random vector $\widetilde \bZ$ whose covariance can be characterized explicitly.
Since the conditional density of $\bX_{0}$ given $\bX_{t} = \bx$ is proportional, up to normalization, to the map $\bx_{0} \mapsto p_{0}(\bx_{0}) \phi_{\sigma_{t}} (\bx - \mu_{t} \bx_{0})$, the Lebesgue density of $\bZ_{0}$ is proportional to the map
\bean
\bz \mapsto \exp \big( u_{0} (\bz) \big),
\qquad 
u_{0} (\bz) =  \log p_{0} \left( \frac{\bx + \sigma_t \bz}{\mu_t} \right) -  \frac{\| \bz \|_2^2}{2},
\eean
with support $S = \{ \bz : \| \bx + \sigma_{t} \bz \|_{\infty} \leq \mu_{t} \}$.
By assumption (\bS), $u_{0}$ is twice continuously differentiable on $S$.
Since $\mathbf{0}_{D} \in S$, let $\widetilde u : \bbR^{D} \to \bbR$ be the second-order polynomial given by the second-order Taylor expansion of $u_{0}(\bz)$ at $\bz = \mathbf{0}_{D}$. Then,
\be
\big\vert u_{0}(\bz) - \widetilde u(\bz) \big\vert \lesssim \sigma_{t}^{\beta \wedge 3} \| \bz \|_{\infty}^{\beta \wedge 3} \qquad \forall \bz \in S.
\label{eq:pf1_taylor}
\ee

Define $\widetilde \bZ = (\widetilde Z_{1},\ldots,\widetilde Z_{D})$ to be the Gaussian random vector with density proportional to the map $\bz \mapsto \exp(\widetilde u(\bz))$.
Then, the difference between the covariances of $\bZ_{0}$ and $\widetilde \bZ$ can be bounded as
\be
\Big\vert {\rm Cov} \big[ Z_{0,i}, Z_{0,j} \big] - {\rm Cov} \big[ \widetilde Z_{i}, \widetilde Z_{j} \big] \Big\vert
\lesssim \sigma_{t}^{\beta \wedge 3} \qquad \forall i,j \in [D].
\label{eq:pf1_gau_err}
\ee
Here, the error is mainly driven by the Taylor expansion error in \eqref{eq:pf1_taylor}, while the mismatch between the supports of $\widetilde \bZ$ and $\bZ_{0}$, namely $\bbR^{D}$ and $S$, is negligible.
Specifically, $\bbP (\widetilde \bZ \notin S)$ is sufficiently small, since $\| \bx \|_{\infty} \leq \mu_{t}\gamma < \mu_{t}$ and the set $S$ is sufficiently large for small $t$.

We now derive upper and lower bounds for the covariance of $\widetilde \bZ$.
Since $\widetilde \bZ$ is Gaussian, its covariance matrix is given by the inverse of the negative Hessian of its log-density.
By construction, $\nabla^2 \widetilde u(\bz) = \nabla^2 u_{0}(\mathbf{0}_{D}) = (\sigma_{t}/\mu_{t})^2 \bH - \bI_{D}$ for every $\bz \in \bbR^{D}$ and hence
\bean
\Cov\big[ \widetilde \bZ \big]
=  \bigg( \bI_{D} - \frac{\sigma_{t}^2}{\mu_{t}^2} \bH \bigg)^{-1}
= \sum_{k \geq 0} \frac{\sigma_{t}^{2k}}{\mu_{t}^{2k}} \bH^{k},
\eean
where the last equality holds for sufficiently small $t$ by the Neumann series expansion.

The key connection between the graph distance and the covariance appears here.
By the equivalence \eqref{eq:iff_cond}, the condition
$H_{ij}(\bx / \mu_{t},0) \neq 0$ implies $(i,j) \in E_{0}$ for $i \neq j \in [D]$.
The essential point is that $(\bH^{k})_{ij} = 0$ for every $k < d(i,j)$; see Lemma~\ref{lem1}.
The proof proceeds by contradiction: if $(\bH^{k})_{ij} \neq 0$ for some $k < d(i,j)$, then there exists a path of length $k$ from vertex $i$ to $j$, contradicting the definition of $d(i,j)$.
Therefore, all terms with $k < d(i,j)$ vanish in the $(i,j)$th entry of the series expansion, which yields
\bean
\Cov\big[ \widetilde Z_{i}, \widetilde Z_{j} \big]
= \sum_{k \geq d(i,j)} \big( \bH^{k} \big)_{ij} \big( \sigma_{t}/ \mu_{t} \big)^{2k}
\eean
and
\bean
\Big\vert {\rm Cov} \big[ \widetilde Z_{i}, \widetilde Z_{j} \big] - \big( \bH^{d(i,j)} \big)_{ij} \big( \sigma_{t} / \mu_{t} \big)^{2d(i,j)}
\Big\vert \lesssim \sigma_{t}^{2d(i,j)+2}.
\eean
Combining this with \eqref{eq:pf1_gau_err} and the triangle inequality, we have
\bean
\Big\vert {\rm Cov} \big[ Z_{0,i}, Z_{0,j} \big] - \big( \bH^{d(i,j)} \big)_{ij} \big( \sigma_{t}/\mu_{t} \big)^{2d(i,j)}
\Big\vert \lesssim \sigma_{t}^{(2d(i,j)+2)\wedge \beta \wedge 3}.
\eean
The assertion then follows from the relation $\Cov[X_{0,i}, X_{0,j} \mid \bX_{t} = \bx] \asymp \sigma_{t}^2 \Cov[Z_{0,i}, Z_{0,j}]$.

\subsubsection{Theorem~\ref{thm2}}

Here, we provide a sketch of the proof of Theorem~\ref{thm2}.
For every $t \leq {\widetilde T}$, Corollary~\ref{cor1} implies that
\bean
\bbE \big[ \big\vert H_{ij} (\bU_{t},t) \big\vert \big] -\tau \geq C_{\gamma} - C_{2} \sigma_{t}^{\tilde \beta} - \tau > 0
\qquad  \forall (i,j) \in E_{0}.
\eean
Combining this with Markov's inequality, the false negative rate is bounded as
\bean
\bbP  \Big( \bbE_{n} \big[  \big\vert \widehat H_{ij} \big( \bU_{t} , t \big)  \big\vert \big] \leq \tau \Big)
&& \leq  \bbP \Big( \Big\vert \bbE_{n} \big[  \big\vert \widehat H_{ij} \big( \bU_{t} , t \big)  \big\vert \big] - \bbE \big[  \big\vert  H_{ij} \big(  \bU_{t} , t \big)  \big\vert \big] \Big\vert
\geq \bbE \big[  \big\vert  H_{ij} \big( \bU_{t}, t \big)  \big\vert \big]  - \tau
\Big)
\\
&& \leq \frac{\cE_{ij,t}}{ C_{\gamma} - C_{2} \sigma_{t}^{\tilde \beta} - \tau } \qquad \forall (i,j) \in E_{0},
\eean
where
\bean
\cE_{ij,t}
= \bbE \left[ \Big\vert \bbE_{n} \big[  \big\vert \widehat H_{ij} \big( \bU_{t} , t \big)  \big\vert \big]
- \bbE \big[  \big\vert  H_{ij} \big(  \bU_{t} , t \big)  \big\vert \big] \Big\vert \right],
\qquad i \neq j \in [D],
\eean
is the estimation error for the $(i,j)$th entry of the averaged Hessian at time $t$.
Similarly, the false positive rate is bounded as
\bean
\bbP  \Big( \bbE_{n} \big[  \big\vert \widehat H_{ij} \big( \bU_{t} , t \big)  \big\vert \big] > \tau \Big)
\leq \frac{\cE_{ij,t}}{ \tau - C_{2} \sigma_{t}^{\tilde \beta} }
\qquad \forall (i,j) \notin E_{0} \text{ with } i \neq j.
\eean
Therefore, it suffices to derive an upper bound on $\cE_{ij,t}$.

Recall that $H_{ij}(\bx,t)$ and $\widehat H_{ij}(\bx,t)$ are defined through the conditional covariances of $\bX_{0}$ given $\bX_{t} = \bx$ and of $\widehat \bX_{\underline{T}}$ given $\widehat \bX_{t} = \bx$, respectively, scaled by $\sigma_{t}^{-4}\mu_{t}^2$.
By the triangle inequality,
\be
\cE_{ij,t}
\leq 
 \Big\vert  \bbE \big[  \big\vert H_{ij} \big( \bU_{t}, t \big)  \big\vert  \big] -  \bbE \big[   \big\vert \widetilde H_{ij} \big( \bU_{t}, t \big)  \big\vert \big]  \Big\vert 
 +
 \bbE \Big[  \Big\vert \bbE \big[  \big\vert \widetilde H_{ij} \big( \bU_{t}, t \big)  \big\vert \big] - \bbE_{n} \big[  \big\vert \widehat H_{ij} \big( \bU_{t}, t \big)  \big\vert \big] \Big\vert  \Big],
 \label{eq:pf2_triangle}
\ee
where $\widetilde H_{ij} (\bx,t) = \sigma_{t}^{-4} \mu_{t}^2 \text{Cov} [ X_{\underline{T}, i}, X_{\underline{T}, j} \mid \bX_{t} = \bx ]$.
The first term on the right-hand side of \eqref{eq:pf2_triangle} measures the truncation error of the conditional covariance at time $\underline{T}$, scaled by $\sigma_{t}^{-4} \mu_{t}^2$.
More precisely, for $i \neq j \in [D]$, we have
\bean
\Big\vert \text{Cov} \big[ X_{\underline{T},i}, X_{\underline{T},j} \mid \bX_{t} = \bx \big] - \text{Cov} \big[ X_{0,i}, X_{0,j} \mid \bX_{t} = \bx \big] \Big\vert
\lesssim 
\frac{\underline{T}}{t^{2}} \Big\vert \text{Cov} \big[ X_{0,i}, X_{0,j} \mid \bX_{t} = \bx \big] \Big\vert.
\eean
Combining this with Theorem~\ref{thm1}, the first term on the right-hand side of \eqref{eq:pf2_triangle} is bounded above, up to a constant, by $\underline{T}/t^{2}$ for $(i,j) \in E_{0}$ and by $\underline{T}/t^{2-\tilde \beta / 2}$ for $(i,j) \notin E_{0}$ with $i \neq j$.


The second term on the right-hand side of \eqref{eq:pf2_triangle} measures the difference between the averaged conditional covariances of $\widehat \bX_{\underline{T}}$ given $\widehat \bX_{t}$ and $\bX_{\underline{T}}$ given $\bX_{t}$.
Given $\bx \in [-\mu_{t}\gamma, \mu_{t}\gamma]^{D}$, we first consider the pointwise difference.
For $s \in [\underline{T}, t]$, let $\widehat p_{s \mid t} (\cdot \mid \bx)$ and $p_{s \mid t}(\cdot \mid \bx)$ denote the Lebesgue densities of the conditional random vectors $\widehat \bX_{s}$ given $\widehat \bX_{t} = \bx$ and $\bX_{s}$ given $\bX_{t} = \bx$, respectively.
Then, based on well-known results \citep{bogachev2016distances, le2016brownian, chen2023sampling}, the total variation distance between the solutions of the two SDEs can be bounded by controlling the difference between their drift terms, as
\bean
\Big\{ d_{\rm TV} \big( p_{\underline{T} \mid t} (\cdot \mid  \bx), \widehat p_{\underline{T} \mid t} (\cdot \mid \bx) \big) \Big\}^2
\leq  \int_{\underline{T}}^{t} \int_{\bbR^{D}}  \big\| \widehat \bff( \bz, s ) - \bff_{0} (\bz ,s) \big\|_2^{2} p_{s \mid t}(\bz \mid \bx) \d \bz \d s.
\eean

This requires control of the \textit{conditional} $L^2$-error between $\widehat \bff(\cdot, s)$ and $\bff_{0}(\cdot,s)$, where the expectation is taken with respect to the conditional distribution of $\bX_{s}$ given $\bX_{t} = \bx$ for $s \in [\underline{T},t]$.
However, condition \eqref{thm2:assmp1} only guarantees the \textit{marginal} $L^2$-error, where the expectation is taken with respect to the marginal distribution of $\bX_{s}$ for $s \in [\underline{T},t]$.
Note that $p_{t}(\bx) \gtrsim 1$ whenever $\| \bx \|_{\infty} \leq \mu_{t}$, since $p_{0}$ is bounded from below on its support; see Lemma~6 of \cite{kwon2026nonparametric}.
Combining this with Bayes' rule yields
\bean
p_{s \mid t} (\bz \mid \bx) \lesssim \phi_{\sigma_{t-s}}(\bx - \mu_{t-s} \bz)\, p_{s} (\bz),
\eean
which in turn implies that $\bbE[ p_{s \mid t}(\bz \mid \bU_{t}) ] \lesssim p_{s}(\bz)$.
Thus, after taking expectation with respect to $\bU_{t}$, the conditional $L^2$-error can be controlled by the marginal $L^2$-error, yielding
\bean
\bbE \Big[ d_{\rm TV} \big( p_{\underline{T} \mid t} (\cdot \mid  \bU_{t}), \widehat p_{\underline{T} \mid t} (\cdot \mid \bU_{t}) \big) \Big]
\lesssim \epsilon_{n}.
\eean
The difference between the covariances can then be controlled by the total variation distance, provided that the tail probabilities of both densities $p_{\underline{T} \mid t}(\cdot \mid \bx)$ and $\widehat p_{\underline{T} \mid t}(\cdot \mid \bx)$ are sufficiently small.
Combining this with condition \eqref{thm2:assmp2} and several basic lemmas based on It\^{o}'s formula \citep{le2016brownian}, the tail bounds introduce an additional logarithmic factor $\{\log (1/\epsilon_{n})\}^{3/2}$ in the convergence rate.

\subsection{Proof of Theorem~\ref{thm1}}
\label{ssec:proof_thm1} 

We begin by stating and proving a key lemma used in the proof of Theorem~\ref{thm1}.
This lemma shows that the truncation error of the Neumann series expansion can be controlled by the graph-induced sparsity pattern of the matrix.

\begin{lemma}\label{lem1}
Let $K > 0$ be given, and consider an undirected graph $G$ with vertex set $\{1,\ldots,D\}$.
Suppose that a symmetric matrix $\bH  \in \bbR^{D \times D}$ satisfies $\vert H_{ij} \vert \leq K$ for all $i,j \in [D]$ and $H_{ij} = 0$ when $d_{G}(i,j) > 1$.
Then, for every $0 \leq \epsilon < (DK)^{-1}$,
\bean
\left\vert \left\{ \left( \bI_{D} - \epsilon \bH \right)^{-1} \right\}_{ij} - \epsilon^{d_{G}(i,j)} \left( \bH^{d_{G}(i,j)} \right)_{ij} \right\vert 
\leq \frac{ (\epsilon D K)^{d_{G}(i,j)+1} }{1-\epsilon D K}
, \quad \forall i, j \in [D] \text{ with } d_{G}(i,j) < \infty,
\eean
and $\{ (\bI_{D} - \epsilon \bH)^{-1}\}_{ij} = 0$ when $d_{G}(i,j) = \infty$.
\end{lemma}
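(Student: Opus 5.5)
The proof rests on the Neumann series together with a path-counting description of the entries of $\bH^{k}$.

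First, we check that the series converges. Every entry of $\bH$ is bounded by $K$, so the induced $\ell^\infty$ operator norm satisfies $\|\bH\|_{\infty\to\infty} \le DK$. By induction, every entry of $\bH^k$ is bounded in absolute value by $D^{k-1}K^k \le (DK)^k$. Indeed, $|(\bH^{k})_{ij}| \le \sum_{l}|(\bH^{k-1})_{il}||H_{lj}| \le D\cdot D^{k-2}K^{k-1}\cdot K$. For $0\le\epsilon<(DK)^{-1}$ we therefore have $\|\epsilon\bH\|<1$, so $\bI_D-\epsilon\bH$ is invertible and
\begin{equation*}
(\bI_D-\epsilon\bH)^{-1}=\sum_{k\ge 0}\epsilon^k\bH^k ,
\end{equation*}
with the series converging absolutely entrywise.

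Second, we show that $(\bH^k)_{ij}=0$ whenever $k<d_G(i,j)$. Expanding the product gives
\begin{equation*}
(\bH^k)_{ij}=\sum_{l_1,\ldots,l_{k-1}} H_{i l_1}H_{l_1 l_2}\cdots H_{l_{k-1} j}.
\end{equation*}
Consider a term with indices $l_0=i,l_1,\ldots,l_k=j$. It can be nonzero only if every factor satisfies $d_G(l_{m-1},l_m)\le 1$, meaning consecutive indices are equal or adjacent. After deleting repeated consecutive vertices, such a sequence is a walk from $i$ to $j$ of length at most $k$, so $d_G(i,j)\le k$. Hence, if $k<d_G(i,j)$, every term vanishes. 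The case $k=0$ is consistent with this: $(\bH^0)_{ij}=\mathbb{1}(i=j)$ and $d_G(i,i)=0$. When $d_G(i,j)=\infty$, every power has a zero $(i,j)$ entry, and therefore so does the inverse.

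Third, we bound the remainder. When $d=d_G(i,j)<\infty$, the second step gives
\begin{equation*}
\{(\bI_D-\epsilon\bH)^{-1}\}_{ij}-\epsilon^{d}(\bH^{d})_{ij}=\sum_{k\ge d+1}\epsilon^k(\bH^k)_{ij}.
\end{equation*}
Applying the entrywise bound from the first step, the absolute value of this sum is at most
\begin{equation*}
\sum_{k\ge d+1}(\epsilon DK)^k=\frac{(\epsilon DK)^{d+1}}{1-\epsilon DK}.
\end{equation*}
This is exactly the claimed bound.

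The only step requiring care is the walk argument in the second step. Diagonal entries of $\bH$ may be nonzero, so a sequence of indices can stay at the same vertex, and we must confirm this does not produce a shortcut. Collapsing the repeats handles this, because it only shortens the walk while preserving its endpoints.
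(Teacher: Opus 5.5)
Your proof is correct and follows essentially the same route as the paper. Both arguments expand $(\bI_D-\epsilon\bH)^{-1}$ as a Neumann series, use the same walk-counting argument to show $(\bH^{k})_{ij}=0$ for $k<d_G(i,j)$, and bound what remains geometrically. The only difference is minor: the paper uses the exact identity $(\bI_D-\epsilon\bH)^{-1}-\sum_{k=0}^{N}\epsilon^k\bH^k=\epsilon^{N+1}\bSigma\bH^{N+1}$ together with spectral-norm bounds from the positive definiteness of $\bI_D-\epsilon\bH$, whereas you sum the tail directly via your entrywise bound $|(\bH^k)_{ij}|\le (DK)^k$, which is slightly more elementary and does not use the symmetry of $\bH$.
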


\begin{proof}
By the Cauchy–Schwarz inequality, we have $(\sum_{i=1}^{D} \vert x_i \vert)^2 \leq D \| \bx \|_2^2$ for any $\bx = (x_1,\ldots,x_{D}) \in \bbR^{D}.$
A simple calculation yields that
\bean
\bx^{\top} \bH \bx  = \sum_{i,j \in [D]} H_{ij} x_{i} x_{j}
\leq K  \sum_{i,j \in [D]} \vert x_i \vert \vert x_j \vert = K \| \bx \|_1^2 \leq D K \| \bx \|_2^2.
\eean
It follows that, for any $\epsilon > 0$,
\bean
\bx^{\top} \left( \bI_{D} - \epsilon \bH \right) \bx 
= \| \bx \|_2^2 - \epsilon \left( \sum_{i,j \in [D]} H_{ij} x_i x_j \right)
\geq (1-\epsilon D K)  \| \bx \|_2^2.
\eean
Hence, for every $0 \leq \epsilon < (DK)^{-1}$, the matrix $\bI_{D}-\epsilon \bH$ is positive-definite.
Let
\bean
\bSigma
=
\left( \bI_{D} - \epsilon \bH \right)^{-1}.
\eean
Then, for every $N \in \bbZ_{\geq 0}$, a simple calculation yields that
\bean
\bSigma - \sum_{k=0}^{N} \epsilon^{k} \bH^{k} = \epsilon^{N+1} \bSigma \bH^{N+1}
\eean
because $ \bSigma^{-1}  (\sum_{k=0}^{N} \epsilon^{k} \bH^{k} ) = ( \bI_{D} - \epsilon \bH ) (\sum_{k=0}^{N} \epsilon^{k} \bH^{k} )
= \bI_{D} - \epsilon^{N+1} \bH^{N+1}.$
For any $i,j \in [D]$,
\bean
\left\vert \left( \bSigma \bH^{N+1} \right)_{ij} \right\vert
= \left\vert \bfe_{i}^{\top} \bSigma \bH^{N+1} \bfe_{j} \right\vert
\leq \left\| \bfe_{i}^{\top} \bSigma \right\|_2
\left\| \bH^{N+1} \bfe_{j} \right\|_2
\leq \Tnorm{  \bSigma }_{2} \Tnorm{ \bH }_2^{N+1},
\eean
where $\bfe_{i} \in \{0,1\}^{D}$ denotes the standard basis vector whose $i$-th element is one and all other elements are zero.
Here, $\tnorm{\cdot}_2$ denotes the spectral norm  (or $\ell^2$ operator norm).
Since $\bx^{\top} \bSigma^{-1} \bx \geq (1-\epsilon KD) \| \bx \|_2^2$ for any $\bx \in \bbR^{D}$, the smallest eigenvalue of $\bSigma^{-1}$ is larger than $1-\epsilon  K D$; therefore, $ \tnorm{\bSigma}_2 \leq 1/(1-\epsilon KD)$.
Moreover, 
\bean
&& \Tnorm{\bH}_2 = \sup_{\| \bx \|_2 \leq 1} \| \bH \bx \|_2 = \sup_{\| \bx \|_2 \leq 1} \sqrt{ \sum_{i=1}^{D} \left( \sum_{j=1}^{D} H_{ij} x_j \right)^2 } \leq K \sup_{\| \bx \|_2 \leq 1} \sqrt{ \sum_{i=1}^{D} \left( \sum_{j=1}^{D} \vert x_j \vert \right)^2 }
\\
&& \leq D K \sup_{\| \bx \|_2 \leq 1} \| \bx \|_2 = DK.
\eean
Combining the last three displays, we have
\be
\left\vert \Sigma_{ij} - \sum_{k=0}^{N} \epsilon^{k} (\bH^{k})_{ij} \right\vert
\leq \frac{(\epsilon D K)^{N+1}}{1-\epsilon D K}
\label{eqlem1:errorbd}
\ee
for every $0 \leq \epsilon < (DK)^{-1}$ and $i,j \in [D]$.

Note that
\bean
(\bH^{k})_{ij} = \sum_{ \substack {i_0,\ldots,i_{k} \in [D] \\ i_0 = i, i_k = j }}  \prod_{l=0}^{k-1} H_{i_{l} i_{l+1} }
\eean
for $k \geq 1$, and $(\bH^{0})_{ij} = \delta_{ij}$, where $\delta_{ij}$ denotes the Kronecker delta.
For $i,j \in [D]$ with $i \neq j$ and $k \in \bbZ_{\geq 0}$ with $k < d_{G}(i,j)$, we have $(\bH^{k})_{ij} = 0$.
The proof proceeds by contradiction.
Suppose that $(\bH^k)_{ij} \neq 0$.
Then, there exists a sequence $(i^*_{0},\ldots,i^*_{k})$ such that $\prod_{l=0}^{k-1} H_{i^*_l i^*_{l+1}} \neq 0$ with $i_0^{*} = i$ and $i_{k}^{*} = j$.
This implies the existence of a path of length $k$ from vertex $i$ to $j$, which contradicts the assumption $k < d_{G}(i,j)$. 

For $i,j \in [D]$ with $d_{G}(i,j) < \infty$, we have
\bean
\sum_{k=0}^{d_{G}(i,j)} \epsilon^{k} (\bH^{k})_{ij} = \epsilon^{d_{G}(i,j)} \left( \bH^{d_{G}(i,j)} \right)_{ij}.
\eean
Combining with (\ref{eqlem1:errorbd}), we have
\bean
\left\vert \Sigma_{ij} -  \epsilon^{d_{G}(i,j)} \left( \bH^{d_{G}(i,j)} \right)_{ij}  \right\vert
\leq
 \frac{(\epsilon D K)^{d_{G}(i,j)+1}}{1-\epsilon D K}
\eean
for every $0 \leq \epsilon < (DK)^{-1}$ and $i,j \in [D]$ with $d_{G}(i,j) < \infty$.
Moreover, $(\bH^{k})_{ij} = 0$ for any $k \in \bbZ_{\geq 0}$ with $d_{G}(i,j) = \infty$.
Since $0 \leq \epsilon D K < 1$, combining with (\ref{eqlem1:errorbd}), we have $\Sigma_{ij} = 0$ for $d_{G}(i,j) = \infty$.

\end{proof}

\medskip

\textit{Proof of Theorem~\ref{thm1}.}
Let $p_{0 \mid t}(\cdot \mid \bx)$ be the conditional density of $\bX_{0}$ given $ \bX_{t} = \bx$, given as 
\bean
p_{0 \mid t}(\bx_0 \mid \bx) = p_{0}(\bx_0) \exp \left( -\frac{ \|\bx - \mu_t \bx_0 \|_2^2 }{2\sigma_{t}^2} \right)  \left\{ \int_{[-1,1]^{D}} p_{0}(\by) \exp \left( -\frac{ \|\bx - \mu_t \by \|_2^2 }{2\sigma_{t}^2} \right) \d \by \right\}^{-1}
\eean
for $\bx_0 \in [-1,1]^{D}$.
Throughout this proof, we fix $\bx \in [-\mu_t \gamma, \mu_t \gamma]^{D}$ and work conditional on $\bX_t = \bx$. For simplicity, we often suppress this conditioning in the notation.

Let $u_0(\cdot) = \log p_0(\cdot)$ and $\bZ_0 = (\mu_t \bX_0 - \bx)/\sigma_t$.
Then, the density of $\bZ_0$ (conditional on $\bX_t = \bx$) is given as
\bean
p_{\bZ_0}(\bz) = \exp \left( u_0 \left( \frac{\bx + \sigma_t \bz}{\mu_t} \right) -  \frac{\| \bz \|_2^2}{2}  \right) 
\left\{ \int_{S} \exp \left( u_0 \left( \frac{\bx + \sigma_t \by}{\mu_t} \right) -  \frac{\| \by \|_2^2}{2}  \right) \d \by  \right\}^{-1},
\quad \bz \in S,
\eean
where $ S = \{ \bz : \|\bx + \sigma_t \bz \|_{\infty} \leq \mu_t \}$. 
Note also that
\bean
{\rm{Cov}} \big[ X_{0,i}, X_{0,j} \mid \bX_{t} = \bx \big] =  \left( \frac{\sigma_t}{\mu_t} \right)^2 \text{Cov} \big[ Z_{0,i}, Z_{0,j} \big],
\eean
where $\bZ_{0} = ( Z_{0,1}, \ldots, Z_{0,D} )$.

We approximate $u_0$ by its second-order Taylor expansion.
Specifically, for every $\bz \in S$,
\be
 u_0 \left( \frac{\bx + \sigma_t \bz}{\mu_t} \right)
=  u_0 \left( \frac{\bx}{\mu_t} \right)
 +  \frac{\sigma_t}{\mu_t}  \left( \nabla u_0 \left( \frac{\bx}{\mu_t} \right) \right)^{\top} \bz
 \quad + \left( \frac{\sigma_t^2}{2 \mu_t^2} \right)  \bz^{\top}  \nabla^2 u_0 \left( \frac{\bx}{\mu_t} \right) \bz
 + r(\bz),
 \label{eqthm1:qd_approx}
\ee
where
\bean
r(\bz) = \left( \frac{\sigma_t^2}{\mu_t^2} \right) \sum_{ \alpha. = 2} \left\{ \left( \D^{\balpha} u_0 \right) \left( \frac{\bx + \xi \sigma_t \bz}{\mu_t} \right) - \left( \D^{\balpha} u_0 \right) \left( \frac{\bx}{\mu_t} \right) \right\} \frac{ \bz^{\balpha} }{\balpha !}
\eean
for a suitable $\xi \in [0,1]$.
Here, $\balpha = (\alpha_1,\ldots,\alpha_{D}) \in \bbZ_{\geq 0}^{D}$ is a multi-index with $\balpha! = \prod_{k=1}^{D} ( \alpha_k! ) $ and $\bz^{\balpha} = \prod_{k=1}^{D} z_k^{\alpha_k}$.
Since $u_0 \in  \cH^{\beta,K}_{D}([-1,1]^{D})$, there exists a positive constant $D_{1} = D_{1} (D,K)$ such that
\be
\vert r(\bz) \vert \leq D_{1} \left( \frac{\sigma_t}{\mu_t} \right)^{\beta \wedge 3} \| \bz \|_{\infty}^{\beta \wedge 3}.
\label{eqthm1:rem}
\ee

Let $ \bH = (H_{ij}) = \nabla^2 u_0 ( \bx / \mu_t) \in \bbR^{D \times D}$.
Since $u_0 \in  \cH^{\beta,K}_{D}([-1,1]^{D})$, we have $\vert H_{ij} \vert \leq K$ for all $i,j \in [D]$.
Hence, for any $\by \in \bbR^{D}$, 
\bean
\by^{\top} \bH \by
= \sum_{i,j \in [D]} H_{ij} y_i y_j
\leq \sum_{i,j \in [D]} K \vert y_i \vert \vert y_j \vert = K \| \by\|_1^2 \leq DK \|\by \|_2^2,
\eean
where the last inequality holds by the Cauchy--Schwarz inequality.
This bound further implies that
\be
\by^{\top} \left\{ \bI_{D} -  \left( \frac{\sigma_t}{\mu_t} \right)^2 \bH \right\} \by
\geq \| \by \|_2^2 - D K \left( \frac{\sigma_t}{\mu_t} \right)^2 \| \by \|_2^2
= 
\kappa_{t} \| \by \|_2^2, \quad \forall \by \in \bbR^{D},
\label{eqthm1:matrix_pd}
\ee
where $\kappa_{t} = 1- D K (\sigma_t /\mu_t)^2 $. 
Let $D_{2} = D_{2}(D,K)$ be a positive constant such that $ (\sigma_{t} / \mu_{t})^2 < (DK)^{-1} $ for every $t \leq D_{2}$.
Then, for every $t \leq D_{2}$, we have $0 < \kappa_{t} \leq 1$, and consequently, the matrix $\bI_{D} - (\sigma_t / \mu_t)^{2} \bH  $ is positive-definite.
Combining this with Lemma~\ref{lem1}, for every $t \leq D_{2}$ and $i,j \in [D]$, we have
\be
\left\vert \widetilde \Sigma_{ij} -  \left( \frac{\sigma_{t}}{\mu_{t}} \right)^{2 d(i,j) } \left( \bH^{d(i,j)} \right)_{ij}  \right\vert
\leq   \frac{ (1 - \kappa_t)  ( D K )^{ d(i,j)}  }{ \kappa_t } \left( \frac{\sigma_{t}}{\mu_{t}} \right)^{2 d(i,j)}
, \quad \text{if } d(i,j) < \infty,
\label{eqthm1:prop}
 \ee
and $\widetilde \Sigma_{ij} = 0$ if $d(i,j) = \infty$,  where 
\bean
\widetilde \bSigma = (\widetilde \Sigma_{ij}) = \left\{ \bI_{D} -  \left( \frac{\sigma_t}{\mu_t} \right)^2 \bH \right\}^{-1}.
\eean
Motivated by the quadratic approximation of $u_0$ in (\ref{eqthm1:qd_approx}), we consider the Gaussian random vector $\widetilde \bZ = (\widetilde Z_{1},\ldots, \widetilde Z_{D}) \sim \cN(\widetilde \bmu, \widetilde \bSigma)$, where 
\bean
\widetilde \bmu
= ( \widetilde \mu_1, \ldots, \widetilde \mu_{D})^{\top}
= \left( \frac{\sigma_t}{\mu_t} \right)  \widetilde \bSigma \nabla u_0 \left( \frac{\bx}{\mu_{t}} \right).
\eean
Our goal is to derive both upper and lower bounds on $\vert \text{Cov}[Z_{0,i} , Z_{0,j}] \vert $.
Accordingly, we decompose the covariance as follows:
\bean
&& \Big\vert \text{Cov} \big[ Z_{0,i}, Z_{0,j} \big] \Big\vert
\leq \Big\vert \widetilde \Sigma_{ij} \Big\vert  + \Big\vert \text{Cov} 
\big[ Z_{0,i}, Z_{0,j} \big] - \widetilde \Sigma_{ij} \Big\vert 
\quad \text{and}
\\
&& \Big\vert \text{Cov} \big[ Z_{0,i}, Z_{0,j} \big] \Big\vert
\geq \Big\vert \widetilde \Sigma_{ij} \Big\vert  - \Big\vert \text{Cov} \big[ Z_{0,i}, Z_{0,j} \big] - \widetilde \Sigma_{ij} \Big\vert.
\eean
We proceed by deriving upper and lower bounds for $\vert \widetilde \Sigma_{ij} \vert$, 
and an upper bound for $\vert \text{Cov}[Z_{0,i}, Z_{0,j}] - \widetilde \Sigma_{ij} \vert$; 
see (\ref{eqthm1:cov_bd}), (\ref{eqthm1:cov_lower_bd}), and (\ref{eqthm1:cov_gau_bd}).

For any $i,j \in [D]$ with $d(i,j) < \infty$,
\bean
\left\vert \left( \bH^{d(i,j)} \right)_{ij} \right\vert
= \left\vert \bfe_{i}^{\top} \bH^{d(i,j)} \bfe_{j} \right\vert 
\leq \Tnorm{\bH^{d(i,j)}}_2
\leq \Tnorm{\bH}_2^{d(i,j)},
\eean
where $\bfe_{i} \in \{0,1\}^{D}$ denotes the $i$-th unit vector and $\tnorm{\cdot}_2$ denotes the spectral norm (or $\ell^2$ operator norm).
Moreover, a simple calculation yields that
\bean
\Tnorm{\bH }_2
&&  = \sup_{\| \by \|_2 \leq 1} \| \bH \by \|_2 
= \sup_{\| \by \|_2 \leq 1} \sqrt{ \sum_{i=1}^{D} \left( \sum_{j=1}^{D} H_{ij} y_j \right)^2 }
\\
&& \leq K \sup_{\| \by \|_2 \leq 1} \sqrt{ \sum_{i=1}^{D} \left( \sum_{j=1}^{D} \vert y_j \vert \right)^2 }
= K  \sup_{\| \by \|_2 \leq 1} \sqrt{ D \| \by \|_1^2 }
\leq D K \sup_{\| \by \|_2 \leq 1} \| \by \|_2 = DK.
\eean
Combining the last two displays yields that
\bean
\left\vert \left( \bH^{d(i,j)} \right)_{ij} \right\vert
\leq (DK)^{d(i,j)},
\quad \forall i,j \in [D] \text{ with }d(i,j) < \infty.
\eean
Let $D_{3} = D_{3}(D_{2}, D,K)$ be a positive constant such that $D_{3} \leq D_{2}$ and
$1/2 \leq \kappa_{t} \leq 1$ for every $t \leq D_{3}$.
Then, $0 \leq ( 1- \kappa_{t}) / \kappa_t \leq 1$.
Combining the last display with (\ref{eqthm1:prop}), for all $t \leq D_{3}$ and $i,j \in [D]$ with $d(i,j) < \infty$, we have
\be
\begin{split}
 \left\vert \widetilde \Sigma_{ij} \right\vert
&  \leq \left( \frac{\sigma_t}{\mu_t} \right)^{2 d(i,j)} \left\vert \left( \bH^{d(i,j)} \right)_{ij} \right\vert 
+ \frac{ (1 - \kappa_t)  ( D K )^{ d(i,j)}  }{ \kappa_t } \left( \frac{\sigma_{t}}{\mu_{t}} \right)^{2 d(i,j)}
\\
& \leq 2  (DK)^{d(i,j)} \left( \frac{\sigma_t}{\mu_t} \right)^{2 d(i,j)}
\leq  2 \left\{  (DK)^{D-1} \vee 1 \right\} \left( \frac{\sigma_t}{\mu_t} \right)^{2 d(i,j)},
\label{eqthm1:cov_bd}
\end{split}
\ee
where the last inequality follows from the fact that $d(i,j) \leq D-1$ whenever $d(i,j) < \infty$.

To obtain a corresponding lower bound, we use the elementary inequality $\vert a \vert \geq \vert b \vert - \vert a-b \vert$ for any $a,b \in \bbR$.
Combining this with (\ref{eqthm1:prop}), for all $t \leq D_{3}$ and $i,j \in [D]$ with $d(i,j) < \infty$, we have
\bean
\left\vert \widetilde \Sigma_{ij} \right\vert
&& \geq \left\vert \left( \bH^{d(i,j)} \right)_{ij} \left( \frac{\sigma_{t}}{\mu_{t}} \right)^{2 d_{G} (i,j) } \right\vert
- \left\vert \widetilde \Sigma_{ij} -  \left( \frac{\sigma_{t}}{\mu_{t}} \right)^{2 d(i,j) } \left( \bH^{d(i,j)} \right)_{ij}  \right\vert 
\\
&& \geq \left\vert \left( \bH^{d(i,j)} \right)_{ij} \right\vert \left( \frac{\sigma_{t}}{\mu_{t}} \right)^{2 d_{G} (i,j) } 
 - \frac{ (1 - \kappa_t)  ( D K )^{ d(i,j)}  }{ \kappa_t } \left( \frac{\sigma_{t}}{\mu_{t}} \right)^{2 d(i,j)}.
\eean
Moreover, since $\kappa_t = 1 - DK (\sigma_{t}/\mu_{t})^2$ and 
 $\kappa_{t} \geq 1/2$ for every $t \leq D_{3}$, it follows that
$(1-\kappa_{t})/\kappa_{t} \leq 2 D K (\sigma_{t} / \mu_t)^2$.
Substituting this bound into the last display yields the lower bound
\be
\begin{split}
& \left\vert \left( \bH^{d(i,j)} \right)_{ij} \right\vert \left( \frac{\sigma_{t}}{\mu_{t}} \right)^{2 d_{G} (i,j) }
 - 2 (DK)^{d(i,j) + 1} \left( \frac{\sigma_{t}}{\mu_{t}} \right)^{ 2 d(i,j) + 2 }
 \\
 & \geq \left\vert \left( \bH^{d(i,j)} \right)_{ij} \right\vert \left( \frac{\sigma_{t}}{\mu_{t}} \right)^{2 d_{G} (i,j) }
 - 2 \left\{ ( D K )^{D} \vee 1 \right\} \left( \frac{\sigma_{t}}{\mu_{t}} \right)^{ 2 d(i,j) + 2 }.
\label{eqthm1:cov_lower_bd}
\end{split}
\ee

We now focus on deriving (\ref{eqthm1:cov_gau_bd}).
Let $\bZ_{1} = (  Z_{1,1}, \ldots, Z_{1,D} )$ denote the truncated Gaussian random vector obtained by truncating $\widetilde \bZ$ to the set $S$.
Specifically, its density is given by
\bean
p_{\bZ_1}(\bz) = p_{\widetilde \bZ}(\bz) \left( \int_{S} p_{\widetilde \bZ}(\by) \d \by \right)^{-1}, \quad \bz \in S,
\eean
where $p_{\widetilde \bZ}$ denotes the Lebesgue density of $\widetilde \bZ$.
For $i,j \in [D]$, recall that $\widetilde \Sigma_{ij} = \text{Cov} [\widetilde Z_{i}, \widetilde Z_{j} ]$.
Moreover, by the triangle inequality,
\bean
&& \Big\vert \text{Cov} \big[ Z_{0,i}, Z_{0,j} \big] - \widetilde \Sigma_{ij} \Big\vert
\leq \Big\vert \text{Cov} \big[ Z_{0,i}, Z_{0,j} \big] - 
\text{Cov} \big[  Z_{1,i},  Z_{1,j} \big] \Big\vert
+ \Big\vert \text{Cov} \big[ Z_{1,i}, Z_{1,j} \big] - 
\text{Cov} \big[ \widetilde Z_{i}, \widetilde Z_{j} \big] \Big\vert.
\eean
We proceed by bounding each term on the right-hand side separately; see (\ref{eqthm1:cov_taylor_bd}) and (\ref{eqthm1:cov_trunc_bd}).

We first focus on deriving (\ref{eqthm1:cov_taylor_bd}).
Since $ S = \{ \bz : \|\bx + \sigma_t \bz \|_{\infty} \leq \mu_t \}$ and $\| \bx \|_{\infty} \leq \mu_{t} \gamma \leq 1$, we have
\bean
\| \bz \|_{\infty} \leq \max_{i \in [D]} \left( \frac{\vert \mu_t + x_i \vert }{\sigma_t} \vee \frac{\vert \mu_t - x_i \vert }{\sigma_t} \right) \leq \frac{2 \mu_t }{\sigma_{t}},
\quad \forall \bz \in S.
\eean
Combining this bound with (\ref{eqthm1:rem}), we have $\vert r(\bz) \vert \leq D_{4}$ for all $\bz \in S$, where $D_{4} = 2^{\beta \wedge 3} D_{1}$.
Hence,
\be
e^{-D_{4}}
\leq e^{- (1-\theta) D_{4}}
\leq e^{ ( 1 - \theta ) r(\bz) }
\leq e^{ ( 1 - \theta ) D_{4}}
\leq e^{D_{4}}, \quad \forall \bz \in S, \ \forall \theta \in [0,1].
\label{eqthm1:exp_rem}
\ee

Recall that both $p_{\bZ_0}$ and $p_{\bZ_1}$ are supported on the set $S$.
Let  $\bZ_{\theta} = (Z_{\theta,1},\ldots,Z_{\theta,D}),$ for $\theta \in [0,1]$, denote a random vector interpolating between $\bZ_{0}$ and $\bZ_{1}$, supported on the set $S$, whose Lebesgue density is defined by
\bean
p_{\bZ_{\theta}}(\bz)
= \frac{e^{ ( 1- \theta) r(\bz)} p_{ \bZ_{1} }(\bz) }{\int_{S}  e^{ ( 1- \theta) r(\by)} p_{\bZ_{1}}(\by) \d \by },
\quad \bz \in S.
\eean
For any $i \in [D]$ and $\theta \in [0,1]$, we have
\bean
\bbE \big[ Z_{\theta,i} \big]
 = \frac{\int_{S} z_i e^{ ( 1- \theta) r(\bz)} p_{ \bZ_{1} }(\bz) \d \bz }{\int_{S}  e^{ ( 1- \theta) r(\bz)} p_{\bZ_{1}}(\bz) \d \bz  }
= \frac{ \bbE [ Z_{1,i} e^{ ( 1- \theta) r(\bZ_1) } ] }{ \bbE[ e^{ ( 1- \theta) r( \bZ_{1} ) } ]}
\leq e^{2D_{4}} \bbE \big[ \vert Z_{1,i} \vert \big] < \infty,
\eean
where $\bz = (z_1, \ldots, z_{D})$, and the first inequality holds by (\ref{eqthm1:exp_rem}).
Then, the partial derivative of $\bbE[Z_{\theta,i}]$ with respect to $\theta$ is given by
\bean
\frac{\partial}{\partial {\theta}} \bbE \big[ Z_{\theta,i} \big]
&& = 
\left( \frac{ \bbE[ r(\bZ_{1}) e^{ ( 1- \theta) r(\bZ_1) }  ]  }{\bbE[ e^{ ( 1- \theta) r(\bZ_1) }  ]} \right) \left( \frac{ \bbE[ Z_{1,i} e^{ ( 1- \theta) r(\bZ_1) }  ]  }{\bbE[ e^{ ( 1- \theta) r(\bZ_1) }  ]} \right)
- \frac{ \bbE[ r(\bZ_{1}) Z_{1,i} e^{ ( 1- \theta) r(\bZ_1) }  ]  }{\bbE[ e^{ ( 1- \theta) r(\bZ_1) }  ]}
\\
&& = \bbE \big[ r(\bZ_{\theta}) \big] \bbE \big[ Z_{\theta,i} \big] - \bbE \big[ r(\bZ_{\theta})  Z_{\theta,i} \big]
=
- \text{Cov} \big[Z_{\theta,i}, r(\bZ_{\theta}) \big],
\quad \forall i \in [D].
\eean
Similarly, $\bbE[ Z_{\theta,i} Z_{\theta,j}] < \infty$, and the partial derivative of $\bbE[ Z_{\theta,i} Z_{\theta,j}]$ with respect to $\theta$ is given by
\bean
\frac{\partial}{\partial {\theta}} \bbE \big[ Z_{\theta,i} Z_{\theta,j} \big] = - \text{Cov} \big[ Z_{\theta,i} Z_{\theta, j}, r(\bZ_{\theta}) \big], \quad \forall i,j \in [D].
\eean
The two identities above imply that, for any $i \in [D]$,
\bean
\Big\vert \bbE \big[ Z_{1,i} \big] - \bbE \big[ Z_{0,i} \big] \Big\vert 
\leq \Big\vert \int_{0}^{1} \text{Cov} \big[Z_{\theta,i}, r(\bZ_{\theta}) \big] \d \theta \Big\vert
\leq  \int_{0}^{1}  \Big\vert\text{Cov} \big[Z_{\theta,i}, r(\bZ_{\theta}) \big] \Big\vert \d \theta ,
\eean
and, for any $i,j \in [D]$,
\bean
\Big\vert \bbE \big[ Z_{1,i} Z_{1,j} \big] - \bbE \big[ Z_{0,i} Z_{0,j} \big] \Big\vert 
\leq \Big\vert \int_{0}^{1} \text{Cov} \big[ Z_{\theta,i} Z_{\theta,j} , r(\bZ_{\theta}) \big] \d \theta \Big\vert
\leq  \int_{0}^{1}  \Big\vert \text{Cov} \big[ Z_{\theta,i} Z_{\theta, j} , r(\bZ_{\theta}) \big] \Big\vert \d \theta.
\eean

For $\theta \in [0,1]$, the Cauchy--Schwarz inequality yields that
\bean
\Big\vert \text{Cov} \big[ Z_{\theta,i}, r(\bZ_{\theta}) \big] \Big\vert 
\leq \sqrt{ \text{Var}[Z_{\theta,i}] \text{Var}[r(\bZ_{\theta})]  }
\leq \sqrt{ \bbE[ Z_{\theta,i}^2] \bbE [ \{ r(\bZ_{\theta}) \}^2 ] },
\quad \forall i \in [D],
\eean
and
\bean
\Big\vert\text{Cov} \big[ Z_{\theta,i} Z_{\theta, j}, r(\bZ_{\theta}) \big] \Big\vert 
\leq \sqrt{ \text{Var}[Z_{\theta,i} Z_{\theta,j}] \text{Var}[r(\bZ_{\theta})]  }
\leq \sqrt{ \bbE[ Z_{\theta,i}^2 Z_{\theta,j}^2 ] \bbE [ \{ r(\bZ_{\theta}) \}^2 ] },
\quad \forall i,j \in [D].
\eean
Combining the last four displays, for all $i,j \in [D]$, we have
\bean
&& \Big\vert \text{Cov} \big[ Z_{1,i}, Z_{1,j} \big] - \text{Cov} \big[ Z_{0,i}, Z_{0,j} \big] \Big\vert
\\
&& \leq \Big\vert \bbE \big[ Z_{1,i} Z_{1,j} \big] - \bbE \big[ Z_{0,i} Z_{0,j} \big] \Big\vert 
+ \Big\vert \bbE \big[ Z_{1,i} \big] \bbE \big[ Z_{1,j} \big] - \bbE \big[ Z_{0,i} \big]\bbE \big[ Z_{0,j} \big] \Big\vert
\\
&& \leq \Big\vert \bbE \big[ Z_{1,i} Z_{1,j} \big] - \bbE \big[ Z_{0,i} Z_{0,j} \big] \Big\vert 
+ \Big\vert \bbE \big[ Z_{1,i} \big] \Big\vert 
\Big\vert \bbE \big[ Z_{1,j} \big]  - \bbE \big[ Z_{0,j} \big]  \Big\vert
+ \Big\vert \bbE \big[ Z_{0,j} \big] \Big\vert 
\Big\vert \bbE \big[ Z_{1,i} \big]  - \bbE \big[ Z_{0,i} \big]  \Big\vert
\\
&& \leq  \int_{0}^{1} 
\sqrt{ \bbE \big[ Z_{\theta,i}^2 Z_{\theta,j}^2 \big] \bbE \big[ \{ r(\bZ_{\theta}) \}^2 \big] }
+ \Big\vert \bbE \big[ Z_{1,i} \big] \Big\vert \sqrt{ \bbE \big[ Z_{\theta,j}^2 \big] \bbE \big[ \{ r(\bZ_{\theta}) \}^2 \big] }
+ \Big\vert \bbE \big[ Z_{0,j} \big] \Big\vert \sqrt{ \bbE \big[ Z_{\theta,i}^2 \big] \bbE \big[ \{ r(\bZ_{\theta}) \}^2 \big] }
\d \theta
\\
&& =  \int_{0}^{1} \sqrt{ \bbE \big[ \{ r(\bZ_{\theta}) \}^2 \big] }
\left( \sqrt{ \bbE \big[  Z_{\theta,i}^2 Z_{\theta, j}^2 \big] }
+  \Big\vert \bbE \big[ Z_{1,i} \big] \Big\vert \sqrt{ \bbE \big[  Z_{\theta,j}^2  \big] }
+ \Big\vert \bbE \big[ Z_{0,j} \big] \Big\vert \sqrt{ \bbE \big[  Z_{\theta,i}^2  \big] }
\right) \d \theta,
\\
&& \leq \int_{0}^{1} \sqrt{ \bbE \big[ \{ r(\bZ_{\theta}) \}^2 \big] }
\left( \sqrt{ \bbE \big[ \| \bZ_{\theta} \|_{\infty}^4 \big] }
+   \bbE \big[ \| \bZ_{1} \|_{\infty} \big] \sqrt{ \bbE \big[ \| \bZ_{\theta} \|_{\infty}^2   \big] }
+ \bbE \big[ \| \bZ_{0} \|_{\infty} \big] \sqrt{ \bbE \big[ \| \bZ_{\theta} \|_{\infty }^2 \big] }
\right) \d \theta.
\eean
By the definition of $\bZ_{\theta}$ and (\ref{eqthm1:exp_rem}),
the moments of $\| \bZ_{\theta} \|_{\infty}$ satisfy
\bean
\bbE \left[ \|\bZ_{\theta}\|_{\infty}^{k} \right]
= \frac{ \bbE \left[ \| \bZ_{1} \|_{\infty}^k e^{ ( 1- \theta) r(\bZ_1) } \right] }{ \bbE[ e^{ ( 1- \theta) r( \bZ_{1} ) } ]}
\leq e^{2D_{4}}  \bbE\left[ \| \bZ_{1} \|_{\infty}^{k} \right],
\quad \forall k > 0, \forall \theta \in [0,1].
\eean
Moreover, by (\ref{eqthm1:rem}),
\bean
 \bbE \left[ \left\{ r(\bZ_{\theta}) \right\}^2 \right]
 \leq D_{1}^2 \left( \frac{\sigma_t}{\mu_t} \right)^{(2\beta) \wedge 6} \bbE \left[ \| \bZ_{\theta} \|_{\infty}^{ (2\beta) \wedge 6} \right]
 \leq D_{1}^2 e^{2D_{4}} \left( \frac{\sigma_t}{\mu_t} \right)^{(2\beta) \wedge 6} \bbE \left[ \| \bZ_{1} \|_{\infty}^{ (2\beta) \wedge 6} \right],
 \quad \forall \theta \in [0,1].
\eean
Combining the last three displays, for all $i,j \in [D]$, we have
\be
\begin{split}
& \Big\vert \text{Cov}[Z_{1,i}, Z_{1,j}] - \text{Cov}[Z_{0,i}, Z_{0,j}] \Big\vert
\\
& \leq D_{1} e^{D_{4}} \left( \frac{\sigma_t}{\mu_t} \right)^{\beta \wedge 3}  
\sqrt{ \bbE \big[ \| \bZ_{1} \|_{\infty}^{ (2\beta) \wedge 6} \big] }
\\
& \qquad \times \int_{0}^{1} 
e^{D_{4}} \sqrt{ \bbE \big[  \left\| \bZ_{1} \right\|_{\infty}^{4}   \big] }
+ e^{D_{4}} \bbE \big[ \left\| \bZ_{1} \right\|_{\infty} \big] \sqrt{ \bbE \big[ \left\| \bZ_{1} \right\|_{\infty}^2  \big] }
+ e^{3 D_{4} } \bbE \big[ \left\| \bZ_{1} \right\|_{\infty} \big] \sqrt{ \bbE \big[ \left\| \bZ_{1} \right\|_{\infty}^2 \big] }
\d \theta
\\
& \leq D_{5}  \left( \frac{\sigma_t}{\mu_t} \right)^{\beta \wedge 3}
\sqrt{ \bbE \big[ \left\| \bZ_{1} \right\|_{\infty}^{(2\beta) \wedge 6} \big] }
\left(
\sqrt{ \bbE \big[ \left\| \bZ_{1} \right\|_{\infty}^{4} \big] }
+  \bbE \big[ \left\| \bZ_{1} \right\|_{\infty} \big] 
\sqrt{ \bbE \big[ \left\| \bZ_{1} \right\|_{\infty}^2 \big]  }
\right),
\label{eqthm1:cov_err_bd}
\end{split}
\ee
where $D_{5} = D_{5} ( D_{1}, D_{4}) > 0$. 

To derive (\ref{eqthm1:cov_taylor_bd}), we next bound the moments of $\| \bZ_{1}\|_{\infty}$; see (\ref{eqthm1:z1_mom_bd}).
By the definition of $\bZ_{1}$, for any $k > 0$,
\be
\bbE \left[ \left\| \bZ_{1} \right\|_{\infty}^{k} \right]
= \frac{ \int_{S} \| \bz \|_{\infty}^{k} p_{\widetilde \bZ} (\bz) \d \bz }{\bbP ( \widetilde \bZ \in S )}
\leq \frac{ \bbE [ \| \widetilde \bZ \|_{\infty}^{k} ] }{\bbP ( \widetilde \bZ \in S )}.
\label{eqthm1:mean_trunc}
\ee
We then derive an upper bound on $\bbP( \widetilde \bZ \notin S ) = 1 - \bbP ( \widetilde \bZ \in S )$ and on $\bbE[ \| \widetilde \bZ \|_{\infty}^{k}]$; see (\ref{eqthm1:tail_bd}) and (\ref{eqthm1:ztilde_mom_bd}).

Since $S = \{ \bz : \| \bx +  \sigma_{t} \bz \|_{\infty} \leq \mu_t \}$ and $\widetilde \bZ \sim \cN (\widetilde \bmu, \widetilde \bSigma)$, 
a simple calculation yields that
\bean
&& \bbP \left( \widetilde \bZ \notin S \right)
= \bbP \left( \big\| \bx + \sigma_t \widetilde \bZ \big\|_{\infty} \geq \mu_t \right)
\leq \sum_{i=1}^{D} \bbP \left( \big\vert x_i + \sigma_t \widetilde Z_i \big\vert \geq \mu_{t} \right)
\\
&& \leq \sum_{i=1}^{D} \left\{ \bbP \left( \widetilde Z_i \geq \frac{\mu_{t} - x_i}{\sigma_{t}}  \right) + 
\bbP \left( \widetilde Z_i \leq \frac{-\mu_{t} - x_i}{\sigma_{t}}  \right) 
\right\}
\\
&& = \sum_{i=1}^{D} \left[  \bbP \left( Z \geq \big( \widetilde \Sigma_{ii} \big)^{-\frac{1}{2}} \left\{ \frac{\mu_{t} - x_i}{\sigma_{t} } - \widetilde \mu_i \right\} \right) 
+ \bbP \left( Z \leq \big( \widetilde \Sigma_{ii} \big)^{-\frac{1}{2}} \left\{ \frac{- \mu_{t} - x_i}{\sigma_{t} } - \widetilde \mu_i \right\} \right) 
\right],
\eean
where $Z$ denotes the one-dimensional standard normal random variable.

Note that $\tnorm{ \widetilde \bSigma }_2$ equals the inverse of the smallest eigenvalue of $\widetilde \bSigma^{-1}$.
Combining  (\ref{eqthm1:matrix_pd}) with the definition of $\widetilde \bSigma$,
$\tnorm{ \widetilde \bSigma}_2 \leq 1/ \kappa_{t}$ and $ \kappa_{t} \geq 1/2 $ for every $t \leq D_{3}$.
Moreover, $\| \nabla u_0 (\bx / \mu_t) \|_2 \leq \sqrt{D} \| \nabla u_0 (\bx / \mu_t ) \|_{\infty} \leq K \sqrt{D} $ because $u_0 \in \cH^{\beta,K}_{D}([-1,1]^{D})$.
Hence, for every $i \in [D]$ and $t \leq D_{3}$,
\bean
\vert \widetilde \mu_i \vert \leq \| \widetilde \bmu \|_2
= \left( \frac{\sigma_{t}}{\mu_t} \right)
\left\| \widetilde \bSigma \nabla u_0 \left( \frac{\bx}{\mu_{t}} \right)  \right\|_{2}
\leq 
\left( \frac{\sigma_{t}}{\mu_t} \right)
\Tnorm{\widetilde \bSigma}_{2}
\left\| \nabla u_0 \left( \frac{\bx}{\mu_{t}} \right) \right\|_{2}
 \leq  \frac{ 2  K \sqrt{ D }  \sigma_{t}  }{\mu_t} 
\eean
and
\bean
\widetilde \Sigma_{ii} = \bfe_{i}^{\top} \widetilde \bSigma \bfe_{i}
\leq \Tnorm{\widetilde \bSigma}_{2} \leq 2.
\eean
Since $\| \bx \|_{\infty} \leq \mu_t \gamma $, we have that
\bean
 \frac{\mu_{t} - x_i}{\sigma_{t} } - \widetilde \mu_i 
\geq \frac{\mu_t ( 1-\gamma) }{\sigma_{t}} - \frac{2 K \sqrt{ D }  \sigma_t}{\mu_t}
\quad \text{and} \quad
 \frac{-\mu_{t} - x_i}{\sigma_{t} } - \widetilde \mu_i 
\leq - \frac{\mu_t (1-\gamma) }{\sigma_{t}} + \frac{2 K \sqrt{ D }  \sigma_t}{\mu_t}.
\eean
Let $D_{6} = D_{6} ( D_{3}, D, K, \gamma ) $ be a positive constant such that $D_{6} \leq D_{3}$ and $ 2 K \sqrt{ D } \sigma_{t} / \mu_t \leq \mu_{t} (1-\gamma) / (2 \sigma_{t}) $ for every $t \leq D_{6}$.
Then, combining the last three displays yields
\bean
 \left( \widetilde \Sigma_{ii} \right)^{-\frac{1}{2}} \left\{ \frac{\mu_{t} - x_i}{ \sigma_{t} } - \widetilde \mu_i \right\}
\geq \frac{\mu_{t} ( 1- \gamma) }{ 4 \sigma_{t}} > 0
\quad \text{and} \quad
 \left( \widetilde \Sigma_{ii} \right)^{-\frac{1}{2}} \left\{ \frac{-\mu_{t} - x_i}{\sigma_{t} } - \widetilde \mu_i \right\} 
\leq  -\frac{\mu_{t} ( 1 - \gamma) }{ 4 \sigma_{t}} < 0
\eean
for every $t \leq D_{6}$.
Hence,
\be
\begin{split}
\bbP \left( \widetilde \bZ \notin S \right) 
& \leq \sum_{i=1}^{D} \left[  \bbP \left( Z \geq \big( \widetilde \Sigma_{ii} \big)^{-\frac{1}{2}} \left\{ \frac{\mu_{t} - x_i}{\sigma_{t} } - \widetilde \mu_i \right\} \right) 
+ \bbP \left( Z \leq \big( \widetilde \Sigma_{ii} \big)^{-\frac{1}{2}} \left\{ \frac{- \mu_{t} - x_i}{\sigma_{t} } - \widetilde \mu_i \right\} \right) 
\right]
\\
& \leq D \bbP \left( \big\vert  Z  \big\vert \geq \frac{\mu_{t} (1-\gamma) }{ 4 \sigma_{t} } \right)
\leq 2 D \exp \left( -\frac{\mu_{t}^2 (1-\gamma)^2 }{32 \sigma_{t}^2} \right),
\label{eqthm1:tail_bd}
\end{split}
\ee
where the last inequality holds by the tail probability of the standard normal distribution.

Combining (\ref{eqthm1:mean_trunc}) with (\ref{eqthm1:tail_bd}), for all $t \leq D_{6}$ and $k > 0$,
\bean
\bbE \left[ \big\| \bZ_{1} \big\|_{\infty}^{k} \right]
\leq \frac{ \bbE [ \| \widetilde \bZ \|_{\infty}^{k} ] }{1 - \bbP ( \widetilde \bZ \notin S )}
\leq \frac{ \bbE [ \| \widetilde \bZ \|_{\infty}^{k} ] }{1 - 2D \exp ( - \mu_{t}^2 (1-\gamma)^2 / ( 32 \sigma_{t}^2) )}.
\eean
For any $k \geq 1$ and $m \in \bbN$, note that $ \vert \sum_{i=1}^{m} a_i \vert ^{k} \leq m^{k-1}  \sum_{i=1}^{m} \vert a_i \vert ^{k}$, $a_1,\ldots,a_{m} \in \bbR$.
Since $\vert \widetilde \mu_i \vert \leq 2 K \sqrt{ D }  (\sigma_{t} / \mu_t)$ and $\widetilde \Sigma_{ii} \leq 2$ for all $i \in [D]$ and $ t \leq D_{6}$, we have that
\bean
&& \bbE \left[ \big\vert \widetilde Z_i \big\vert^{k} \right]
= \bbE \left[ \big\vert ( \widetilde \Sigma_{ii} )^{\frac{1}{2}} Z + \widetilde \mu_i \big\vert^{k} \right]
\leq 2^{k-1} \left\{ (\widetilde \Sigma_{ii})^{\frac{k}{2}} \bbE \left[ \big\vert Z \big\vert^{k} \right] + \big\vert \widetilde \mu_i \big\vert^{k} \right\}
\\
&& \leq 2^{\frac{3k-2}{2}} \bbE \left[ \big\vert Z \big\vert^{k} \right]
+ 2^{2k-1} D^{\frac{k}{2}} K^k \left( \frac{\sigma_{t}}{\mu_t} \right)^{k}
\leq 2^{\frac{3k-2}{2}} \bbE \left[ \big\vert Z \big\vert^{k} \right] + 2^{2k-1} K^{\frac{k}{2}}
\defeq \widetilde  D_{k},
\quad \forall k \geq 1,
\eean
where the last inequality holds because $\sigma_{t} /\mu_{t} \leq 1/\sqrt{DK}$.

Let $D_{7} = D_{7} (D_{6}, D, \gamma )$ be a positive constant such that $D_{7} \leq D_{6}$ and $1 - 2D \exp ( - \mu_{t}^2 ( 1- \gamma )^2 / (32 \sigma_{t}^2) ) \geq 1/2$ for every $t \leq D_{7}$.
Then, for every $t \leq D_{7}$ and $k \geq 1$,
\be
\bbE \left[ \big\| \widetilde \bZ \big\|_{\infty}^{k} \right]
\leq \bbE \left[ \left( \sum_{i=1}^{D} \left\vert \widetilde Z_i \right\vert \right)^{k} \right]
\leq D^{k-1} \sum_{i=1}^{D} \bbE \left[ \big\vert \widetilde Z_i \big\vert^k \right]
\leq D^{k} \widetilde   D_{k}
\label{eqthm1:ztilde_mom_bd}
\ee
and hence
\be
\bbE \left[ \big\| \bZ_{1} \big\|_{\infty}^{k} \right] 
\leq \frac{ \bbE [ \| \widetilde \bZ \|_{\infty}^{k} ] }{ 1 - 2D \exp ( - \mu_{t}^2 (1-\gamma)^2 / (32 \sigma_{t}^2) )}
\leq 2 \bbE \left[ \big\| \widetilde \bZ \big\|_{\infty}^{k} \right]
\leq 2 D^{k} \widetilde   D_{k}.
\label{eqthm1:z1_mom_bd}
\ee
Therefore, combining with (\ref{eqthm1:cov_err_bd}), we have
\be
\Big\vert \text{Cov} \big[ Z_{1,i}, Z_{1,j} \big] - \text{Cov} \big[ Z_{0,i}, Z_{0,j} \big] \Big\vert
\leq D_{8} \left( \frac{\sigma_{t}}{\mu_{t}} \right)^{\beta \wedge 3},
\quad \forall t \leq D_{7}, \forall i,j \in [D],
\label{eqthm1:cov_taylor_bd}
\ee
where $D_{8} = D_{8} ( D_{5}, \beta, K ) > 0$. 

We now focus on deriving (\ref{eqthm1:cov_trunc_bd}).
Recall that $\widetilde \Sigma_{ij} = \text{Cov}[\widetilde Z_i, \widetilde Z_j]$ for all $i,j \in [D]$.
By the definition of $\bZ_{1}$,
for all $i \in [D]$, we have
\bean
\Big\vert \bbE \big[ Z_{1,i} \big] - \bbE \big[ \widetilde Z_{i} \big] \Big\vert
&& = \left\vert \frac{ { \int_{S} z_i p_{\widetilde \bZ}(\bz) \d \bz } }{ \bbP (\widetilde \bZ \in S) } 
- \bbE \big[ \widetilde Z_i \big]  \right\vert
\\
&& \leq \left\vert \frac{ \int_{S} z_i p_{\widetilde \bZ}(\bz) \d \bz }{ \bbP (\widetilde \bZ \in S) } 
- \frac{ \bbE [ \widetilde Z_i ] }{ \bbP (\widetilde \bZ \in S) }  \right\vert
+ \left\vert  \frac{ \bbE [ \widetilde Z_i ] }{ \bbP (\widetilde \bZ \in S) } - \bbE \big[ \widetilde Z_i \big]  \right\vert
\\
&& = \frac{ \left\vert \int_{\bbR^{D} \backslash S } z_i p_{\widetilde \bZ}(\bz) \d \bz  \right\vert }{\bbP ( \widetilde \bZ \in S) } + \Big\vert \bbE \big[ \widetilde Z_i \big] \Big\vert \frac{ \bbP (\widetilde \bZ \notin S) }{ \bbP (\widetilde \bZ \in S) }.
\eean
By the Cauchy--Schwarz inequality, 
\bean
\left\vert \int_{\bbR^{D} \backslash S } z_i p_{\widetilde \bZ}(\bz) \d \bz \right\vert
\leq 
 \int_{\bbR^{D} \backslash S } \vert z_i \vert p_{\widetilde \bZ}(\bz) \d \bz
 \leq \sqrt{ \bbE \big[ \widetilde Z_i^2 \big] \bbP \big(\widetilde \bZ \notin S \big) },
 \quad \forall i \in [D].
\eean
Combining the last two displays, we have
\bean
\Big\vert \bbE \big[ Z_{1,i} \big] - \bbE \big[ \widetilde Z_{i} \big] \Big\vert
&& \leq \frac{ \sqrt{ \bbE [ \widetilde Z_i^2 ] \bbP (\widetilde \bZ \notin S) }  }{ \bbP (\widetilde \bZ \in S) }
+  \frac{ \bbE[ \vert \widetilde Z_i \vert ]  \bbP (\widetilde \bZ \notin S) }{ \bbP (\widetilde \bZ \in S) }
\\
&& \leq  \frac{ \sqrt{  \bbP (\widetilde \bZ \notin S)  } }{  \bbP (\widetilde \bZ \in S)  } 
\left\{ \sqrt{\bbE \left[ \big\|  \widetilde \bZ \big\|_{\infty}^2 \right] } + \bbE \left[ \big\| \widetilde \bZ \big\|_{\infty} \right]  \right\}
\\
&& \leq  \frac{ \sqrt{  \bbP (\widetilde \bZ \notin S)  } }{  \bbP (\widetilde \bZ \in S)  }  \left( D \sqrt{ \widetilde D_{2} } + D \widetilde D_{1} \right), \quad  \forall i \in [D],
\eean
where the last inequality holds by (\ref{eqthm1:ztilde_mom_bd}).
Similarly, we have
\bean
\Big\vert \bbE \big[ Z_{1,i} Z_{1,j} \big] - \bbE \big[ \widetilde Z_{i} \widetilde Z_{j} \big] \Big\vert
&&  \leq \frac{ \sqrt{ \bbE [ \widetilde Z_i^2 \widetilde Z_{j}^2 ] \bbP (\widetilde \bZ \notin S) }  }{ \bbP (\widetilde \bZ \in S) }
+  \frac{ \bbE[ \vert \widetilde Z_i \widetilde Z_{j} \vert ]  \bbP (\widetilde \bZ \notin S) }{ \bbP (\widetilde \bZ \in S) }
\\
&& \leq  \frac{ \sqrt{  \bbP (\widetilde \bZ \notin S)  } }{  \bbP (\widetilde \bZ \in S)  } \left\{ \sqrt{\bbE \left[ \big\|  \widetilde \bZ \big\|_{\infty}^4 \right] } + \bbE \left[ \big\| \widetilde \bZ \big\|_{\infty}^2 \right]  \right\}
\\
&& \leq  \frac{ \sqrt{  \bbP (\widetilde \bZ \notin S)  } }{  \bbP (\widetilde \bZ \in S)  } 
\left( D^2 \sqrt{ \widetilde D_{4} } + D^2 \widetilde D_{2} \right), \quad \forall i,j \in [D].
\eean
Combining the last two displays yields that, for all $i,j \in [D]$,
\bean
&& \Big\vert \text{Cov} \big[ Z_{1,i}, Z_{1,j} \big]  - \widetilde \Sigma_{ij} \Big\vert
= \Big\vert \text{Cov} \big[ Z_{1,i}, Z_{1,j} \big]   - \text{Cov} \big[ \widetilde Z_{i}, \widetilde Z_{j} \big] \Big\vert
\\
&& \leq \Big\vert \bbE \big[ Z_{1,i} Z_{1,j} \big] - \bbE \big[ \widetilde Z_{i} \widetilde Z_{j} \big] \Big\vert
+  \Big\vert  \bbE \big[ Z_{1,i} \big]  \Big\vert \Big\vert \bbE \big[ Z_{1,j} \big]  - \bbE \big[ \widetilde Z_{j} \big]  \Big\vert
+ \Big\vert \bbE \big[ \widetilde Z_{j} \big] \Big\vert  \Big\vert \bbE \big[ Z_{1,i} \big]  - \bbE \big[ \widetilde Z_{i} \big]  \Big\vert
\\
&& \leq  \frac{ \sqrt{  \bbP (\widetilde \bZ \notin S)  } }{  \bbP (\widetilde \bZ \in S)  }
\left\{ 
 D^2 \sqrt{ \widetilde D_{4} } + D^2 \widetilde D_{2} 
+
\left( \bbE \big[ \| \bZ_1 \|_{\infty} \big] +  \bbE \big[ \big\| \widetilde \bZ  \big\|_{\infty} \big] \right)
\left(  D \sqrt{ \widetilde D_{2} } + D \widetilde D_{1}  \right)
\right\}
\\
&& \leq \frac{ D_{9} \sqrt{  \bbP (\widetilde \bZ \notin S)  } }{  \bbP (\widetilde \bZ \in S)  },
\eean
where $D_{9} = D_{9} (D, K) > 0$ and the last inequality holds by  (\ref{eqthm1:ztilde_mom_bd}) and
(\ref{eqthm1:z1_mom_bd}).

Recall that $ 1- 2D \exp (- \mu_{t}^2 (1-\gamma)^2 / ( 32\sigma_{t}^2 )) \geq 1/2 $ for every $t \leq D_{7}$.
Thus, by (\ref{eqthm1:tail_bd}),
\bean
\frac{ \sqrt{  \bbP (\widetilde \bZ \notin S)  } }{  \bbP (\widetilde \bZ \in S)  }
=
\frac{ \sqrt{  \bbP (\widetilde \bZ \notin S)  } }{ 1 -  \bbP (\widetilde \bZ \notin S)  }
\leq
 2 \sqrt{2D} \exp \left( -\frac{\mu_{t}^2 (1-\gamma)^2 }{64 \sigma_{t}^2} \right),
 \quad \forall t \leq D_{7}.
\eean
Let $D_{10} = D_{10} ( D_{7}, \beta, \gamma )$ be a positive constant such that $D_{10} \leq D_{7}$ and both $\exp(-\mu_{t}^2 ( 1- \gamma )^2 / ( 64 \sigma_{t}^2)) \leq (\sigma_{t}/\mu_{t})^{\beta \wedge 3}$ and $\sigma_{t}/\mu_{t} \leq 1$ hold for every $t \leq D_{10}$.
Then, combining the last two displays yields that
\be
\left\vert \text{Cov} \left[ Z_{1,i}, Z_{1,j} \right]  - \widetilde \Sigma_{ij} \right\vert
\leq 2 D_{9} \sqrt{2D}  \left(  \frac{ \sigma_{t} }{ \mu_{t} } \right)^{\beta \wedge 3}, 
\quad \forall t \leq D_{10}, \forall i,j \in [D].
\label{eqthm1:cov_trunc_bd}
\ee
Moreover, for every $t \leq D_{10}$, 
\be
\begin{split}
\Big\vert  \text{Cov} \big[ Z_{0,i}, Z_{0,j} \big] - \widetilde \Sigma_{ij} \Big\vert 
& \leq
\Big\vert \text{Cov} \big[ Z_{0,i}, Z_{0,j} \big] - \text{Cov} \big[Z_{1,i}, Z_{1,j} \big] \Big\vert
 + \left\vert \text{Cov} \big[ Z_{1,i}, Z_{1,j} \big]  - \widetilde \Sigma_{ij} \right\vert
 \\
 & \leq \left( D_{8} + 2 D_{9} \sqrt{2D} \right) \left( \frac{\sigma_{t}}{\mu_{t}} \right)^{\beta \wedge 3}, 
 \quad \forall i,j \in [D],
 \label{eqthm1:cov_gau_bd}
 \end{split}
\ee
where the second inequality holds by  (\ref{eqthm1:cov_taylor_bd}) and (\ref{eqthm1:cov_trunc_bd}).
Combining (\ref{eqthm1:cov_bd}) with the last  display, we have
\bean
\Big\vert \text{Cov} \big[ Z_{0,i}, Z_{0,j} \big]  \Big\vert
&& \leq \Big\vert  \text{Cov} \big[ Z_{0,i}, Z_{0,j} \big] - \widetilde \Sigma_{ij} \Big\vert 
 + \Big\vert \widetilde \Sigma_{ij} \Big\vert
 \\
 && \leq  D_{11} \left( \frac{\sigma_{t}}{\mu_{t}} \right)^{ ( 2 d(i,j) ) \wedge \beta \wedge 3},
 \quad \forall t \leq D_{10}, \forall i, j \in [D] \text{ with } d(i,j) < \infty,
\eean
where $D_{11} = D_{8} + 2 D_{9} \sqrt{2D} +  2 \{  (DK)^{D} \vee 1 \}$.
This bound also holds for $i,j \in [D]$ with $d(i,j) = \infty$, since $\widetilde \Sigma_{ij} = 0$ in that case.
Recall that $1 \leq \mu_{t}^{-1} \leq 2$ for $t \leq D_{10}$.
Hence, for all $i,j \in [D]$ and $t \leq D_{10}$,
\bean
\Big\vert {\rm{Cov}} \big[ X_{0,i}, X_{0,j} \mid \bX_{t} = \bx \big] \Big\vert
=  \left( \frac{\sigma_t}{\mu_t} \right)^2 \Big\vert \text{Cov} \big[ Z_{0,i}, Z_{0,j} \big] \Big\vert
\leq D_{12} \sigma_{t}^{( 2 d(i,j) +2) \wedge (\beta+2) \wedge 5},
\eean
where $D_{12} = 2^{5/2} D_{11}$.
Similarly, combining with (\ref{eqthm1:cov_lower_bd}), we have
\bean
&& \Big\vert \text{Cov} \big[ Z_{0,i}, Z_{0,j} \big]  \Big\vert
\\
&& \geq \Big\vert \widetilde \Sigma_{ij} \Big\vert - \Big\vert \text{Cov} \big[ Z_{0,i}, Z_{0,j} \big] - \widetilde \Sigma_{ij} \Big\vert
\\
 && \geq  \Big\vert \widetilde \Sigma_{ij} \Big\vert
 - \Big\vert \text{Cov} \big[ Z_{0,i}, Z_{0,j} \big] - \text{Cov} \big[ Z_{1,i}, Z_{1,j} \big] \Big\vert
 - \Big\vert \text{Cov} \big[ Z_{1,i}, Z_{1,j} \big] - \widetilde \Sigma_{ij} \Big\vert 
 \\
 && \geq \Big\vert \left( \bH^{d(i,j)} \right)_{ij} \Big\vert \left( \frac{\sigma_{t}}{\mu_{t}} \right)^{2 d_{G} (i,j) }
 -  D_{11} \left( \frac{\sigma_{t}}{\mu_{t}} \right)^{(  2 d(i,j) + 2 ) \wedge \beta \wedge 3}
, \quad \forall t \leq D_{10}, \forall i, j \in [D] \text{ with } d(i,j) < \infty.
\eean
Consequently, for all $i,j \in [D]$ with $d_{G} (i,j) < \infty$ and $t \leq D_{10}$, 
\bean
\Big\vert {\rm{Cov}} \big[ X_{0,i}, X_{0,j} \mid \bX_{t} = \bx \big] \Big\vert
&&  =  \left( \frac{\sigma_t}{\mu_t} \right)^2 \Big\vert \text{Cov} \big[ Z_{0,i}, Z_{0,j} \big] \Big\vert
\\
&& \geq \Big\vert \left( \bH^{d(i,j)} \right)_{ij} \Big\vert \sigma_{t}^{2 d_{G} (i,j) +2 }
 -  D_{12} \sigma_{t}^{ ( 2 d(i,j) + 4 ) \wedge ( \beta + 2) \wedge 5} .
\eean
The first and second assertions are followed by re-defining the constants.

\hfill\qedsymbol

\subsection{Proof of Corollary~\ref{cor1}}
\begin{proof}
By Tweedie's formula \eqref{eq:tweedie}, $H_{ij}(\bx,t) = \sigma_{t}^{-4} \mu_{t}^2 \text{Cov}[X_{0,i}, X_{0,j} \mid \bX_{t} = \bx]$ for every $i \neq j \in [D]$. 
Since $\mu_{t}^2 \leq 1$, the assertion follows by Theorem~\ref{thm1}.
\end{proof}

\subsection{Proof of Theorem~\ref{thm2}}
\label{ssec:proof_thm2}

\begin{proof}

Fix $t \in [\underline{T}, {\widetilde T} ]$.
From the definition of $\widehat G_{t,\tau}$ (see \eqref{eq:estimator}), we have
\be
\begin{split}
\bbP \Big( \widehat G_{t,\tau} \neq G_{0} \Big)
&\leq \sum_{ \substack {i,j \in [D]  \\ d(i,j) = 1} } \bbP \Big(  (i,j) \notin \widehat E_{t,\tau}  \Big) 
+ \sum_{ \substack {i,j \in [D]  \\ d(i,j) \geq 2} } \bbP \Big(  (i,j) \in \widehat E_{t,\tau} \Big) 
\\
& = \sum_{ \substack {i,j \in [D]  \\ d(i,j) = 1 } } \bbP  \Big(  \bbE_{n} \big[  \big\vert \widehat H_{ij} \big( \bU_{t}, t \big)  \big\vert \big]  \leq \tau \Big) 
+ \sum_{  \substack {i,j \in [D]  \\ d(i,j) \geq 2 } } \bbP  \Big(  \bbE_{n} \big[  \big\vert \widehat H_{ij} \big( \bU_{t}, t \big)  \big\vert \big]  > \tau \Big).
\label{eqthm2:t1_t2}
\end{split}
\ee
Since $\vert a-c \vert \geq c - a \geq c - b$ for $a,b,c \in \bbR$ with $a \leq b$, it follows that
\bean
\bbP  \left( \bbE_{n} \big[  \big\vert \widehat H_{ij} \big( \bU_{t} , t \big)  \big\vert \big] \leq \tau \right)
\leq  \bbP \left( \Big\vert \bbE_{n} \big[  \big\vert \widehat H_{ij} \big( \bU_{t} , t \big)  \big\vert \big] - \bbE \big[  \big\vert  H_{ij} \big(  \bU_{t} , t \big)  \big\vert \big] \Big\vert
\geq \bbE \big[  \big\vert  H_{ij} \big( \bU_{t}, t \big)  \big\vert \big]  - \tau
\right).
\eean
Similarly, since $\vert c-a \vert \geq a-c > b - c$ for $a,b,c \in \bbR$ with $a > b$, it follows that
\bean
\bbP  \left( \bbE_{n} \big[  \big\vert \widehat H_{ij} \big( \bU_{t}, t \big)  \big\vert \big] > \tau \right)
\leq  \bbP \left( \Big\vert \bbE_{n} \big[  \big\vert \widehat H_{ij} \big( \bU_{t}, t \big)  \big\vert \big] - \bbE \big[  \big\vert  H_{ij} \big(  \bU_{t}, t \big)  \big\vert \big] \Big\vert
> \tau - \bbE \big[  \big\vert  H_{ij} \big(  \bU_{t}, t \big)  \big\vert \big]
\right).
\eean
By Corollary~\ref{cor1}, we have $\bbE [ \vert H_{ij} (\bU_{t},t) \vert ] > \tau$ if $d(i,j) = 1$ and $\bbE [ \vert H_{ij} (\bU_{t},t) \vert ] < \tau$ if $d(i,j) \geq 2$. 
Specifically,
\bean
&& \bbE \big[  \big\vert  H_{ij} \big(  \bU_{t}, t \big)  \big\vert \big] \geq C_{\gamma} - C_{2} \sigma_{t}^{{\tilde \beta}} > \tau,
\qquad \forall i,j \in [D] \text{ with } d(i,j) = 1,
\\
&&  \bbE \big[  \big\vert  H_{ij} \big(  \bU_{t}, t \big)  \big\vert \big] \leq C_{2} \sigma_{t}^{{\tilde \beta}} < \tau,
\qquad  \forall i,j \in [D] \text{ with } d(i,j) \geq 2.
\eean

Combining these inequalities with Markov's inequality, we have
\bean
 \bbP  \left( \bbE_{n} \big[  \big\vert \widehat H_{ij} \big( \bU_{t} , t \big)  \big\vert \big] \leq \tau \right)
 \leq  \frac{C_{ij}}{ \bbE [ \vert H_{ij} (\bU_{t},t) \vert ] - \tau }
 \leq \frac{C_{ij}}{ C_{\gamma} - C_{2} \sigma_{t}^{{\tilde \beta}} - \tau },
 \eean
for all $i,j \in [D]$ with $d(i,j) = 1$, and
 \bean
 \bbP  \left( \bbE_{n} \big[  \big\vert \widehat H_{ij} \big( \bU_{t} , t \big)  \big\vert \big] > \tau \right)
 \leq \frac{C_{ij}}{ \tau  - \bbE [ \vert H_{ij} (\bU_{t},t) \vert ]  }
 \leq  \frac{C_{ij}}{ \tau  - C_{2} \sigma_{t}^{{\tilde \beta}}  },
\eean
for all $i,j \in [D]$ with $d(i,j) \geq 2$,
where 
\bean
C_{ij} = \bbE \big[  \Big\vert \bbE_{n} \Big[  \big\vert \widehat H_{ij} \big( \bU_{t}, t \big)  \big\vert \Big] - \bbE \Big[  \big\vert  H_{ij} \big(  \bU_{t}, t \big)  \big\vert \Big] \Big\vert  \big].
\eean
Combining this with (\ref{eqthm2:t1_t2}), we have
\be
\bbP \left( \widehat G_{t,\tau} \neq G_{0} \right)
\leq
\sum_{ \substack {i,j \in [D]  \\ d(i,j) = 1 } } \frac{C_{ij}}{ C_{\gamma} - C_{2} \sigma_{t}^{{\tilde \beta}} - \tau }
+ \sum_{  \substack {i,j \in [D]  \\ d(i,j) \geq 2 } } \frac{C_{ij}}{ \tau  - C_{2} \sigma_{t}^{{\tilde \beta}}  }.
\label{eqthm2:markov}
\ee
By the triangle inequality, 
\bean
C_{ij}
\leq 
 \Big\vert  \bbE \big[  \big\vert H_{ij} \big( \bU_{t}, t \big)  \big\vert  \big] -  \bbE \big[   \big\vert \widetilde H_{ij} \big( \bU_{t}, t \big)  \big\vert \big]  \Big\vert 
 +
 \bbE \Big[  \Big\vert \bbE \big[  \big\vert \widetilde H_{ij} \big( \bU_{t}, t \big)  \big\vert \big] - \bbE_{n} \big[  \big\vert \widehat H_{ij} \big( \bU_{t}, t \big)  \big\vert \big] \Big\vert  \Big],
\eean
where $\widetilde H_{ij} (\bx,t) = \sigma_{t}^{-4} \mu_{t}^2 \text{Cov} [ X_{\underline{T}, i}, X_{\underline{T}, j} \mid \bX_{t} = \bx ], i \neq j \in [D]$.
We proceed by bounding each term on the right-hand side separately; see (\ref{eqthm2:hess_trunc_bd}) and (\ref{eqthm2:hess_hat_bd}).

We first focus on deriving (\ref{eqthm2:hess_trunc_bd}).
For notational simplicity, let $s = \underline{T}$ and $S = [-\mu_{t}\gamma, \mu_{t}\gamma]^{D}$.
For all $i \neq j \in [D]$, we have
\be
\begin{split}
 & \Big\vert  \bbE \big[  \big\vert H_{ij} \big( \bU_{t}, t \big)  \big\vert  \big] -  \bbE \big[   \big\vert \widetilde H_{ij} \big( \bU_{t}, t \big)  \big\vert \big]  \Big\vert 
 \\
& =  \left\vert ( 2\mu_{t} \gamma)^{-D}  \frac{\mu_{t}^2}{\sigma_{t}^{4}} 
 \int_{S}   \Big\vert \text{Cov} \big[ X_{0,i}, X_{0,j} \mid \bX_{t} = \bx \big] \Big\vert  -
\Big\vert \text{Cov} \big[ X_{s,i}, X_{s,j} \mid \bX_{t} = \bx \big] \Big\vert  \d \bx \right\vert
 \\
 & \leq ( 2\mu_{t} \gamma)^{-D}  \frac{\mu_{t}^2}{\sigma_{t}^{4}} 
\int_{S}   \Big\vert \text{Cov} \big[ X_{0,i}, X_{0,j} \mid \bX_{t} = \bx \big] -
\text{Cov} \big[ X_{s,i}, X_{s,j} \mid \bX_{t} = \bx \big] \Big\vert
 \d \bx,
 \label{eqthm2:0_s_diff}
 \end{split}
\ee
where the last inequality holds by the triangle inequality.
We proceed by bounding the difference between the two conditional covariance terms; see (\ref{eqthm2:cov_trunc_bd}).

Fix $ \bx_{s} \in \bbR^{D}, \bx_{t} \in [-\mu_{t}\gamma, \mu_{t}\gamma]^{D}$, and $ \bx_{0}\in [-1,1]^{D}$.
A simple calculation yields that
\bean
p_{t \mid s,0} (\bx_{t} \mid \bx_{s}, \bx_{0})
= \frac{p_{t,s,0} (\bx_{t},\bx_{s},\bx_{0}) }{ p_{s,0} (\bx_{s}, \bx_{0}) }
= \frac{ p_{s \mid t,0} (\bx_{s} \mid \bx_{t}, \bx_{0}) p_{t,0} (\bx_{t}, \bx_{0})  }{ p_{s,0} (\bx_{s}, \bx_{0}) }
= \frac{p_{s \mid t,0} (\bx_{s} \mid \bx_{t}, \bx_{0}) p_{t \mid 0} (\bx_{t} \mid \bx_{0})  }{ p_{s \mid 0} (\bx_{s} \mid \bx_{0}) }.
\eean
Here, $p_{s,0}(\cdot,\cdot)$ denotes the joint density of $(\bX_{s},\bX_{0})$, and $p_{s \mid 0}(\cdot \mid \bx_{0})$ denotes the conditional density of $\bX_{s}$ given $\bX_{0} = \bx_{0}$.
Since the process $(\bX_{u})_{u \geq 0}$ is Markov, we have
\bean
p_{t \mid s,0} (\bx_{t} \mid \bx_{s}, \bx_{0}) 
= p_{t \mid s} (\bx_{t} \mid \bx_{s}).
\eean
Combining the last two displays yields that
\bean
p_{s \mid t,0} ( \bx_{s} \mid \bx_{t}, \bx_{0} ) 
=
\frac{p_{t \mid s} (\bx_{t} \mid \bx_{s})  p_{s \mid 0} (\bx_{s} \mid \bx_{0})}{p_{t \mid 0} (\bx_{t} \mid \bx_{0})}.
\eean
Note that the conditional distribution of $\bX_{t} $ given $\bX_{s} = \bx_{s}$ is Gaussian with $\cN ( \mu_{t-s} \bx_{s}, \sigma_{t-s}^2 \bI_{D} ) $, and that the conditional distribution of $\bX_{s}$ given $\bX_{0} = \bx_0$ is $\cN (\mu_{s} \bx_{0}, \sigma_{s}^2 \bI_{D})$.
A direct calculation yields that
\bean
&& \exp \left( - \frac{  \| \bx_{t} - \mu_{t-s} \bx_{s} \|_2^2 }{ 2 \sigma_{t-s}^2 }
- \frac{  \| \bx_{s} - \mu_{s} \bx_{0} \|_2^2 }{ 2 \sigma_{s}^2 } \right)
\\
&& = C(\bx_{0},\bx_{t}) \exp \left( -\frac{1}{2} \left\{ \left( \bx_{s}^{\top} \bx_{s} \right) \left( \frac{\mu_{t-s}^2}{\sigma_{t-s}^2} + \frac{1}{\sigma_{s}^2} \right)
- 2 \bx_{s}^{\top}  \left( \frac{\mu_{t-s} \bx_{t}} {\sigma_{t-s}^2} + \frac{\mu_{s} \bx_{0 }}{\sigma_{s}^2} \right) \right\} \right),
\eean
where $C(\bx_{0},\bx_{t})$ is a constant depending only on $\bx_{0}$ and $\bx_{t}$.
Recall that $\mu_{t} = e^{-t}$ and $\sigma_{t} = \sqrt{1 - e^{-2t}}$.
Therefore, the coefficient of the quadratic term is
\bean
\frac{\mu_{t-s}^2}{\sigma_{t-s}^2} + \frac{1}{\sigma_{s}^2} 
= \frac{\mu_{t-s}^2 \sigma_{s}^2 + \sigma_{t-s}^2 }{ \sigma_{t-s}^2 \sigma_{s}^2}
= \frac{e^{- 2t + 2s} (1-e^{- 2s}) + 1 - e^{- 2t+ 2s} }{ \sigma_{t-s}^2 \sigma_{s}^2}
= \frac{ 1 - e^{-2t} }{ \sigma_{t-s}^2 \sigma_{s}^2}
= \frac{ \sigma_{t}^2 }{ \sigma_{t-s}^2 \sigma_{s}^2}.
\eean
Hence, the conditional distribution of $\bX_{s}$ given $\bX_{t} = \bx_{t}$ and $\bX_{0} = \bx_{0}$ is Gaussian with
\be
\cN
\left( \frac{ \sigma_{s}^2 \mu_{t-s} \bx_{t} + \sigma_{t-s}^2 \mu_{s} \bx_{0 } }{\sigma_{t}^2}  , \frac{ \sigma_{t-s}^2 \sigma_{s}^2 }{\sigma_{t}^2} \bI_{D}
\right).
\label{eqthm2:gaussian}
\ee

To derive the conditional covariance of $\bX_{s}$ given $\bX_{t} = \bx_{t}$, we apply the law of total covariance:
\bean
\text{Cov} \big[ X_{s,i}, X_{s,j} \mid \bX_{t} = \bx_{t} \big]
&& = \bbE\Big[ \text{Cov} \big[ X_{s,i}, X_{s,j} \mid \bX_{t} = \bx_{t}, \bX_{0} \big]
\mid \bX_{t} = \bx_{t} \Big]
\\
&& \quad + \text{Cov} \Big[  \bbE \big[  X_{s,i} \mid \bX_{t} = \bx_{t}, \bX_{0} \big] , \bbE\big[  X_{s,j} \mid \bX_{t} = \bx_{t}, \bX_{0} \big] \mid \bX_{t} = \bx_{t} \Big].
\eean
By (\ref{eqthm2:gaussian}), we have
\be
\bbE \big[ X_{s,i} \mid \bX_{t} = \bx_{t}, \bX_{0} = \bx_{0} \big]
=  \frac{ \sigma_{s}^2 \mu_{t-s} x_{t,i} + \sigma_{t-s}^2 \mu_{s} x_{0,i} }{\sigma_{t}^2},
\quad \forall i \in [D],
\label{eqthm2:conditional_mean}
\ee
and
\bean
\text{Cov} \big[ X_{s,i}, X_{s,j} \mid \bX_{t} = \bx_{t}, \bX_{0} = \bx_0 \big]
=  \left(\frac{ \sigma_{t-s}^2 \sigma_{s}^2 }{\sigma_{t}^2} \right) \delta_{ij},
\quad \forall i,j \in [D],
\eean
where $\bx_{t} = (x_{t,i}), \bx_{0} = (x_{0,i})$, and $\delta_{ij}$ denotes the Kronecker delta.
Combining the last three displays, we have
\bean
\text{Cov} \big[ X_{s,i}, X_{s,j} \mid \bX_{t} = \bx_{t} \big]
= \left( \frac{ \sigma_{t-s}^2 \sigma_{s}^2 }{\sigma_{t}^2} \right) \delta_{ij}
+ \left( \frac{\sigma_{t-s}^4 \mu_{s}^2 }{\sigma_{t}^4} \right) \text{Cov} \big[ X_{0,i}, X_{0,j} \mid \bX_{t} = \bx_{t} \big], 
\quad \forall i,j \in [D].
\eean
Hence, for all $i \neq j \in [D]$,
\bean
\Big\vert \text{Cov} \big[ X_{s,i}, X_{s,j} \mid \bX_{t} = \bx_{t} \big] - 
\text{Cov} \big[ X_{0,i}, X_{0,j} \mid \bX_{t} = \bx_{t} \big] \Big\vert
=
\Big\vert 1 - \frac{\sigma_{t-s}^4 \mu_{s}^2 }{\sigma_{t}^4}  \Big\vert
\Big\vert 
\text{Cov} \big[ X_{0,i}, X_{0,j} \mid \bX_{t} = \bx_{t} \big]
\Big\vert.
\eean
A simple calculation yields that
\bean
\sigma_{t}^{4} - \sigma_{t-s}^4 \mu_{s}^2 
&&  = 
  1 - 2e^{-2t} + e^{-4t} - e^{-2s} ( 1 - 2e^{-2t + 2s} + e^{-4t + 4s} )
  \\
&&  =  1 - e^{-2s} + e^{-4t} - e^{-4t + 2s}  
 \\
 && =  (   1 - e^{-4t + 2s}) (1 - e^{-2s}) = \sigma_{s}^2(   1 - e^{-4t + 2s}).
\eean
Since $0 < s \leq t$ and $0 < e^{-4t + 2s} < 1$, we have $\vert 1 - \sigma_{t}^{-4} \sigma_{t-s}^4 \mu_{s}^2 \vert \leq \sigma_{t}^{-4}\sigma_{s}^2$.
Therefore, for all $i \neq j \in [D]$,
\bean
\Big\vert \text{Cov} \big[ X_{s,i}, X_{s,j} \mid \bX_{t} = \bx_{t} \big] - 
\text{Cov} \big[ X_{0,i}, X_{0,j} \mid \bX_{t} = \bx_{t} \big] \Big\vert
\leq
\frac{\sigma_{s}^2 }{\sigma_{t}^4}
\Big\vert 
\text{Cov} \big[ X_{0,i}, X_{0,j} \mid \bX_{t} = \bx_{t} \big]
\Big\vert.
\eean
Combining this with Theorem~\ref{thm1}, for all $i \neq j \in [D]$, we have
\be
\Big\vert \text{Cov} \big[ X_{s,i}, X_{s,j} \mid \bX_{t} = \bx_{t} \big] - 
\text{Cov} \big[ X_{0,i}, X_{0,j} \mid \bX_{t} = \bx_{t} \big] \Big\vert
 \leq D_{1} \sigma_{s}^2 \sigma_{t}^{ ( 2 d(i,j) - 2 ) \wedge {\tilde \beta} },
 \label{eqthm2:cov_trunc_bd}
\ee
where $D_{1} = D_{1} ( C_{2} ) > 0$.
Combining this with (\ref{eqthm2:0_s_diff}), and noting that $\mu_{t}^2 \leq 1$, we have
\be
 \left\vert  \bbE \big[  \big\vert H_{ij} \big( \bU_{t}, t \big)  \big\vert  \big] -  \bbE \big[   \big\vert \widetilde H_{ij} \big( \bU_{t}, t \big)  \big\vert \big]  \right\vert 
\leq 
D_{1} \sigma_{s}^2 \sigma_{t}^{  ( 2 d(i,j) -2)\wedge {\tilde \beta} - 4 },
\quad \forall i \neq j \in [D].
 \label{eqthm2:hess_trunc_bd}
\ee

We now focus on deriving (\ref{eqthm2:hess_hat_bd}).
For all $i \neq j \in [D]$, we have
\be
\begin{split}
 & \bbE \left[  \left\vert \bbE \big[  \big\vert \widetilde H_{ij} \big( \bU_{t}, t \big)  \big\vert \big] - \bbE_{n} \big[  \big\vert \widehat H_{ij} \big( \bU_{t}, t \big)  \big\vert \big] \right\vert  \right]
 \\
& =   ( 2\mu_{t} \gamma)^{-D} \frac{\mu_{t}^2}{\sigma_{t}^{4}}
\bbE \left[ \left\vert 
\int_{S} \Big\vert \text{Cov} \big[ X_{s,i}, X_{s,j} \mid \bX_{t} = \bx \big] \Big\vert  -
\Big\vert \text{Cov}_n \big[ \widehat X_{s,i}, \widehat X_{s,j} \mid  \widehat \bX_{t} = \bx \big] \Big\vert 
 \d \bx  \right\vert \right]
 \\
& \leq \frac{\mu_{t}^2}{\sigma_{t}^{4}}
\bbE \left[
\int_{S}   ( 2\mu_{t} \gamma)^{-D} \Big\vert \text{Cov} \big[ X_{s,i}, X_{s,j} \mid \bX_{t} = \bx \big]  -
\text{Cov}_n \big[ \widehat X_{s,i}, \widehat X_{s,j} \mid  \widehat \bX_{t} = \bx \big] \Big\vert
 \d \bx \right],
 \label{eqthm2:cov_hat_bd_toy}
 \end{split}
 \ee
 where the last inequality holds by the triangle inequality.
 We proceed by bounding the difference between the two conditional covariance terms.

By the triangle inequality, for all $i,j \in [D]$, we have
\be
\begin{split}
& \Big\vert \text{Cov} \big[ X_{s,i}, X_{s,j} \mid \bX_{t} = \bx_{t} \big]  -
\text{Cov}_n \big[ \widehat X_{s,i}, \widehat X_{s,j} \mid  \widehat \bX_{t} = \bx_{t} \big] \Big\vert
\\
& \leq \Big\vert \bbE \big[ X_{s,i} X_{s,j} \mid \bX_{t} = \bx_{t} \big]  -
\bbE_n \big[ \widehat X_{s,i} \widehat X_{s,j} \mid  \widehat \bX_{t} = \bx_{t} \big] \Big\vert
\\
& \quad +
\Big\vert \bbE \big[ X_{s,i} \mid \bX_{t} = \bx_{t} \big] \bbE \big[ X_{s,j} \mid \bX_{t} = \bx_{t} \big]   -
\bbE_n \big[ \widehat X_{s,i} \mid  \widehat \bX_{t} = \bx_{t} \big]
\bbE_n \big[ \widehat X_{s,j} \mid  \widehat \bX_{t} = \bx_{t} \big]
\Big\vert
\label{eqthm2:cov_toy_bd1}
\end{split}
\ee
and
\be
\begin{split}
& \Big\vert \bbE \big[ X_{s,i} \mid \bX_{t} = \bx_{t} \big] \bbE \big[ X_{s,j} \mid \bX_{t} = \bx_{t} \big]   -
\bbE_n \big[ \widehat X_{s,i} \mid  \widehat \bX_{t} = \bx_{t} \big]
\bbE_n \big[ \widehat X_{s,j} \mid  \widehat \bX_{t} = \bx_{t} \big]
\Big\vert
\\
& \leq \Big\vert \bbE \big[ X_{s,i} \mid \bX_{t} = \bx_{t} \big] \Big\vert
\cdot
\Big\vert \bbE \big[ X_{s,j} \mid \bX_{t} = \bx_{t} \big]   -
\bbE_n \big[ \widehat X_{s,j} \mid  \widehat \bX_{t} = \bx_{t} \big]
\Big\vert
\\
& \quad + \Big\vert 
\bbE_n \big[ \widehat X_{s,j} \mid  \widehat \bX_{t} = \bx_{t} \big]
\Big\vert
\cdot
\Big\vert \bbE \big[ X_{s,i} \mid \bX_{t} = \bx_{t} \big]    -
\bbE_n \big[ \widehat X_{s,i} \mid  \widehat \bX_{t} = \bx_{t} \big]
\Big\vert.
\label{eqthm2:cov_toy_bd2}
\end{split}
\ee
We proceed by bounding the differences in the conditional first and second moments; see (\ref{eqthm2:sec_hat_bd}) and (\ref{eqthm2:fir_hat_bd}).

Let
\be
L =  \left\{ 2 \sigma_{s} \sqrt{ \log (1/\epsilon_n) } + 2\right\}^{2}
 \vee \left\{ 4 e^{t-s} \sqrt{t \log (1/\epsilon_n) } + e^{t-s} + 2 e^{t-s}\sigma_{t-s} \sqrt{\log (1 / \epsilon_n) } \right\}^2.
\label{eqthm2:L_def}
\ee
Let $\widehat p_{s \mid t} (\cdot \mid \bx_{t} )$ denote conditional the density of $\widehat \bX_{s}$ given $\widehat \bX_{t} = \bx_{t} $ and the observations $\bX^1, \ldots, \bX^n$.
A simple calculation yields that
\bean
&& \Big\vert \bbE \big[  X_{s,i}  X_{s ,j} \mid  \bX_{t} = \bx_{t} \big] - \bbE_n \big[ \widehat X_{s,i} \widehat X_{s ,j} \mid \widehat \bX_{t} = \bx_{t} \big]  \Big\vert 
\\
&& =   \Big\vert \int_{\bbR^{D}} y_i y_j  \left\{ p_{s \mid t}(\by \mid \bx_{t} ) -  \widehat p_{s \mid t}(\by \mid \bx_{t} )  \right\} \d \by  \Big\vert 
\\
&& \leq \int_{\bbR^{D}}  \big\vert y_i y_j \big\vert   \big\vert p_{s \mid t}(\by \mid \bx_{t} ) -  \widehat p_{s \mid t}(\by \mid \bx_{t} ) \big\vert \d \by 
\\
&& \leq 2 L d_{\rm TV} \left( p_{s \mid t}(\cdot \mid \bx_{t} ), \widehat p_{s \mid t}(\cdot \mid \bx_{t} ) \right)
+  \int_{ \vert y_i y_j \vert > L}  \big\vert y_i y_j \big\vert \left\{   p_{s \mid t}(\by \mid \bx_{t} ) +  \widehat p_{s \mid t}(\by \mid \bx_{t} ) \right\} \d \by,
\quad \forall i,j \in [D],
\eean
where $d_{\rm TV} (p,q) = 2^{-1} \int_{\bbR^{D}} \vert p(\bx) -q(\bx) \vert \d \bx $ denotes the total variation distance between two probability density functions $p$ and $q$ on $\bbR^{D}$.
By the Cauchy--Schwarz inequality, we have
\bean
\int_{ \vert y_i y_j \vert > L}  \big\vert y_i y_j \big\vert   p_{s \mid t}(\by \mid \bx_{t} )\d \by
&& = \int_{ \bbR^{D} }  \big\vert y_i y_j \big\vert 1 \big( \vert y_i y_j \vert > L \big)  p_{s \mid t}(\by \mid \bx_{t} )\d \by
\\
&& \leq \sqrt{ \bbE \big[ X_{s,i}^2 X_{s,j}^2 \mid \bX_{t} = \bx_{t} \big] 
\bbP \big[ \vert X_{s,i} X_{s,j} \vert > L \mid \bX_{t} = \bx_{t} \big]  },
\eean
and similarly, 
\bean
\int_{ \vert y_i y_j \vert > L}  \big\vert y_i y_j \big\vert  \widehat p_{s \mid t}(\by \mid \bx_{t} )\d \by
\leq \sqrt{ \bbE_{n} \big[ \widehat X_{s,i}^2 \widehat X_{s,j}^2 \mid \widehat \bX_{t} = \bx_{t} \big] 
\bbP_{n} \big[ \vert \widehat X_{s,i} \widehat X_{s,j} \vert > L \mid \widehat \bX_{t} = \bx_{t} \big]  },
\eean
where $\bbP_{n}(\cdot)$ denotes the conditional probability given the $n$ observations $\bX^1,\ldots,\bX^n$.

Since $\{ (x,y) \in \bbR^{2} : \vert xy \vert > a \} \subseteq \{ (x,y) \in \bbR^2 : \vert x \vert > \sqrt{a} \} \cup \{ (x,y) \in \bbR^2 :  \vert y \vert > \sqrt{a} \}  $ for any $a > 0$, we have
\bean
\bbP \Big(
\big\vert   X_{s,i}  X_{s,j}   \big\vert > L \mid  \bX_{t} = \bx_{t} 
\Big) 
\leq
\bbP \Big(
\big\vert   X_{s,i}  \big\vert > \sqrt{L} \mid  \bX_{t} = \bx_{t} 
\Big)
+
\bbP \Big(
\big\vert   X_{s,j}  \big\vert > \sqrt{L} \mid  \bX_{t} = \bx_{t} 
\Big).
\eean
Combining the last three displays yields that for all $i,j \in [D]$, we have
\be
\begin{split}
& \Big\vert \bbE \big[  X_{s,i}  X_{s ,j} \mid  \bX_{t} = \bx_{t} \big] - \bbE_n \big[ \widehat X_{s,i} \widehat X_{s ,j} \mid \widehat \bX_{t} = \bx_{t} \big]  \Big\vert
\\
& \leq 2 L d_{\rm TV} \left( p_{s \mid t}(\cdot \mid \bx_{t} ), \widehat p_{s \mid t}(\cdot \mid \bx_{t} ) \right) 
\\
& + \sqrt{ \bbE \big[ X_{s,i}^2 X_{s,j}^2 \mid \bX_{t} = \bx_{t} \big]  \Big\{  \bbP \Big(
\big\vert   X_{s,i}  \big\vert > \sqrt{L} \mid  \bX_{t} = \bx_{t} 
\Big)
+ \bbP \Big(
\big\vert   X_{s,j}  \big\vert > \sqrt{L} \mid  \bX_{t} = \bx_{t}
\Big) \Big\} } 
\\
& \quad + \sqrt{ \bbE_{n} \big[ \widehat X_{s,i}^2 \widehat X_{s,j}^2 \mid \widehat \bX_{t} = \bx_{t} \big]  \Big\{  \bbP_{n} \Big(
\big\vert \widehat  X_{s,i}  \big\vert > \sqrt{L} \mid \widehat \bX_{t} = \bx_{t} 
\Big)
+ \bbP_{n} \Big(
\big\vert  \widehat X_{s,j}  \big\vert > \sqrt{L} \mid \widehat \bX_{t} = \bx_{t}
\Big) \Big\} }.
\label{eqthm2:tot_sec_mom_bd}
\end{split}
\ee
Similarly, for all $i\in [D]$, we have
\be
\begin{split}
& \Big\vert \bbE \big[  X_{s,i}  \mid  \bX_{t} = \bx_{t} \big] - \bbE_n \big[ \widehat X_{s,i}  \mid \widehat \bX_{t} = \bx_{t} \big]  \Big\vert
\\
& \leq 2 L d_{\rm TV} \left( p_{s \mid t}(\cdot \mid \bx_{t} ), \widehat p_{s \mid t}(\cdot \mid \bx_{t} ) \right) 
+ \sqrt{ \bbE \big[ X_{s,i}^2 \mid \bX_{t} = \bx_{t} \big]  \bbP \Big(
\big\vert   X_{s,i}  \big\vert > L \mid  \bX_{t} = \bx_{t} 
\Big) } 
\\
& \quad + \sqrt{ \bbE_{n} \big[  \widehat X_{s,i}^2 \mid \widehat \bX_{t} = \bx_{t} \big]  \bbP_{n} \Big(
\big\vert  \widehat  X_{s,i}  \big\vert > L \mid \widehat  \bX_{t} = \bx_{t} 
\Big) }.
\label{eqthm2:tot_fir_mom_bd}
\end{split}
\ee
We proceed by bounding each term on the right-hand side separately; see (\ref{eqthm2:mom_true_bd}), (\ref{eqthm2:tv_bd}), and (\ref{eqthm2:mom_hat_bd}).

We first focus on deriving (\ref{eqthm2:mom_true_bd}).
Recall that the conditional distribution of $\bX_{s}$ given $\bX_{t} = \bx_t$ and $\bX_{0} = \bx_{0}$ is Gaussian; see \eqref{eqthm2:gaussian}.
Specifically, it is distributed as $\cN ( \widetilde \bmu, \widetilde \sigma^2 \bI_{D})$, where
\bean
\widetilde \bmu = \frac{ \sigma_{s}^2 \mu_{t-s} \bx_{t} + \sigma_{t-s}^2 \mu_{s} \bx
_{0} }{\sigma_{t}^2} 
\quad \text{and} \quad 
\widetilde \sigma = \frac{ \sigma_{t-s} \sigma_{s} }{\sigma_{t}}.
\eean
Hence,
\bean
\bbP \Big(
\big\vert   X_{s,i}  \big\vert > \sqrt{L} \mid  \bX_{t} = \bx_{t}, \bX_{0} = \bx_{0}
\Big)
= \bbP \Big(
\big\vert  \widetilde \sigma Z + \widetilde \mu_{i}  \big\vert > \sqrt{L}  \Big), \quad \forall i \in [D],
\eean
where $\widetilde \bmu = (\widetilde \mu_1, \ldots, \widetilde \mu_{D})$ and $Z$ denotes a one-dimensional standard normal random variable.
A simple calculation yields that
\bean
\bbP \Big(
\big\vert  \widetilde \sigma Z + \widetilde \mu_{i}  \big\vert > \sqrt{L}  \Big)
&& \leq \bbP \Big( Z > \frac{ \sqrt{L}  - \widetilde \mu_i }{\widetilde \sigma} \Big)
+ \bbP \Big( Z < \frac{ - \sqrt{L}  - \widetilde \mu_i }{\widetilde \sigma} \Big)
\\
&& \leq 2 \bbP \Big( Z > \frac{ \sqrt{L}  - \| \widetilde \bmu \|_{\infty} }{\widetilde \sigma} \Big),
\quad \forall i \in [D].
\eean

Note that $\| \widetilde \bmu \|_{\infty} \leq \sigma_{t}^{-2} ( \sigma_{s}^2 + \sigma_{t-s}^2 ) \leq 2,$ since $\| \bx_{t} \|_{\infty} \leq \mu_{t}\gamma \leq 1$ and $\| \bx_0 \|_{\infty} \leq 1$.
Combining this with the definition of $L$ (see (\ref{eqthm2:L_def})), we have
\bean
\sqrt{L}
\geq 2 \sigma_{s} \sqrt{ \log (1/\epsilon_n) } + 2
\geq 2 \widetilde \sigma \sqrt{ \log (1/\epsilon_n) } + \| \widetilde \bmu \|_{\infty},
\eean
which implies that $\widetilde \sigma^{-1} (\sqrt{L}  - \| \widetilde \bmu \|_{\infty} ) \geq 2 \sqrt{ \log ( 1/\epsilon_n )} > 0$.
Combining this with the tail probability of the normal distribution, we have
\bean
\bbP \Big(
\big\vert   X_{s,i}  \big\vert > \sqrt{L}  \mid  \bX_{t} = \bx_{t}, \bX_{0} = \bx_{0}
\Big)
\leq
 2 \bbP \Big( Z > \frac{ \sqrt{L}  - \| \widetilde \bmu \|_{\infty} }{\widetilde \sigma} \Big)
\leq
2\epsilon_n^2, \quad \forall i \in [D].
\eean
Since $L \geq 1$, we have $L \geq \sqrt{L}$.
Therefore,
\bean
\bbP \Big(
\big\vert   X_{s,i}  \big\vert > L \mid  \bX_{t} = \bx_{t}, \bX_{0} = \bx_{0}
\Big)
\leq
\bbP \Big(
\big\vert   X_{s,i}  \big\vert > \sqrt{L} \mid  \bX_{t} = \bx_{t}, \bX_{0} = \bx_{0}
\Big)
\leq
2\epsilon_n^2, \quad \forall i \in [D].
\eean
Note that the last two displays hold uniformly for all $\bx_{0} \in [-1,1]^{D}$.
Taking the conditional expectation with respect to $\bX_0$ given $\bX_t=\bx_t$ (by the law of total expectation), for any $i \in [D]$, we have
\be
\bbP \Big(
\big\vert   X_{s,i}  \big\vert > \sqrt{L} \mid  \bX_{t} = \bx_{t}
\Big) \leq 2 \epsilon_n^2
\quad \text{and} \quad
\bbP \Big(
\big\vert   X_{s,i}  \big\vert > L \mid  \bX_{t} = \bx_{t}
\Big) \leq 2 \epsilon_n^2.
\label{eqthm2:tail_true_bd}
\ee

Since the conditional distribution of $\bX_{s}$ given $\bX_{t} = \bx_t$ and $\bX_{0} = \bx_{0}$ is Gaussian with covariance matrix $\widetilde \sigma^2 \bI_{D}$, its coordinates are conditionally independent.
Hence, for any $i \neq j \in [D]$,
\bean
 \bbE \big[ X_{s,i}^2 X_{s,j}^2 \mid \bX_{t} = \bx_{t}, \bX_{0} = \bx_{0} \big]
 && =  \bbE \big[ X_{s,i}^2  \mid \bX_{t} = \bx_{t}, \bX_{0} = \bx_{0} \big]
  \bbE \big[  X_{s,j}^2 \mid \bX_{t} = \bx_{t}, \bX_{0} = \bx_{0} \big]
  \\
  && = \big( \widetilde \mu_{i}^2 + \widetilde \sigma^2 \big) \big( \widetilde \mu_{j}^2 + \widetilde \sigma^2 \big)
 \leq \big( \| \widetilde \bmu \|_{\infty}^2 + \widetilde \sigma^2 \big)^{2}.
\eean
Moreover, $\| \widetilde \bmu \|_{\infty}  \leq \sigma_{t}^{-2}(\sigma_{s}^2 + \sigma_{t-s}^2) \leq 2$ and $\widetilde \sigma \leq  \sigma_{t-s} \leq 1$, since $\| \bx_{t} \|_{\infty} \leq \mu_{t}\gamma \leq 1$, $\| \bx_{0} \|_{\infty} \leq 1$ and $ s \leq t$.
Therefore, by the law of total expectation, we have
\bean
 \bbE \big[ X_{s,i}^2 X_{s,j}^2 \mid \bX_{t} = \bx_{t} \big]
\leq 25, \quad \forall i \neq j \in [D],
\quad \text{and} \quad
 \bbE \big[ X_{s,i}^2 \mid \bX_{t} = \bx_{t} \big]
\leq 5, \quad \forall i \in [D].
\eean
Combining the last display with (\ref{eqthm2:tail_true_bd}), we have
\be
\begin{split}
 & \sqrt{ \bbE \big[ X_{s,i}^2 X_{s,j}^2 \mid \bX_{t} = \bx_{t} \big]  \Big\{  \bbP \Big(
\big\vert   X_{s,i}  \big\vert > \sqrt{L} \mid  \bX_{t} = \bx_{t} 
\Big)
+ \bbP \Big(
\big\vert   X_{s,j}  \big\vert > \sqrt{L} \mid  \bX_{t} = \bx_{t}
\Big) \Big\} } 
\\
&  \quad \leq 10 \epsilon_{n}, \quad \forall i \neq j \in [D], 
\qquad \text{and}
\\
& \sqrt{ \bbE \big[ X_{s,i}^2  \mid \bX_{t} = \bx_{t} \big]  \bbP \Big(
\big\vert   X_{s,i}  \big\vert > L \mid  \bX_{t} = \bx_{t} 
\Big) } 
\leq \sqrt{10} \epsilon_{n},
\quad \forall i \in [D].
\label{eqthm2:mom_true_bd}
\end{split}
\ee

We now focus on deriving (\ref{eqthm2:tv_bd}).
We will apply Corollary~1.2 of \cite{bogachev2016distances} to obtain an upper bound for the total variation distance.
Consider two stochastic processes $( \bZ_{u})_{u \in [0, t - s] } $ and $( \widehat \bZ_{u})_{u \in [0, t - s ] } $ defined by the SDEs
\bean
&& \d  \bZ_{u} = \Big[  \bZ_{u} + 2  \bff_0 \big(  \bZ_{u}, t - u \big) \Big] \d u + \sqrt{2} \d \bB_u,\quad  \bZ_0 \sim \delta_{\bx_{t}},
\\
&& \d \widehat \bZ_{u} = \Big[ \widehat \bZ_{u} + 2 \widehat \bff \big( \widehat \bZ_{u}, t - u \big) \Big] \d u + \sqrt{2} \d \bB_u,\quad \widehat \bZ_0 \sim \delta_{\bx_{t}},
\eean
where $\delta_{\bx_{t}}$ denotes the $D$-dimensional Dirac measure at $\bx_{t}$.
Then, for $u \in (0,t-s]$, $\bZ_{u}$ and $\widehat \bZ_{u}$ have Lebesgue densities given by $p_{t-u \mid t} (\cdot \mid \bx_{t})$ and $\widehat p_{t-u \mid t}(\cdot \mid \bx_{t})$, respectively.
Define functions $q_1, q_2 : \bbR^{D} \times (0, t-s ] \rightarrow \bbR $ by $q_1(\bz,u) = p_{t-u \mid t}(\bz \mid \bx_{t} )$ and $q_2(\bz,u) = \widehat p_{t-u \mid t} (\bz \mid \bx_{t} )$.
Then, $q_1$ and $q_2$ satisfy the corresponding Fokker--Planck equations  \citep{lebris2008existence, bogachev2022fokker, pavliotis2014stochastic} :
\bean
&& \frac{\partial }{\partial u} q_1(\bz,u) = - \sum_{i=1}^{D} \frac{\partial}{\partial z_i} \big[ \bb_1 (\bz,u) q_1(\bz,u) \big] + \sum_{i=1}^{D} \sum_{j=1}^{D} \frac{\partial^2}{\partial z_i \partial z_j} \big[ \delta_{ij} q_1(\bz,u) \big],
\\
&& \frac{\partial }{\partial u} q_2(\bz,u) = - \sum_{i=1}^{D} \frac{\partial}{\partial z_i} \big[ \bb_2 (\bz,u) q_2(\bz,u) \big] + \sum_{i=1}^{D} \sum_{j=1}^{D} \frac{\partial^2}{\partial z_i \partial z_j} \big[ \delta_{ij} q_2(\bz,u) \big],
\eean
where
\bean
\bb_1 (\bz,u) =  \bz + 2 \bff_{0} (\bz, t-u),
\quad 
\bb_2 (\bz,u) =  \bz + 2  \widehat \bff( \bz, t-u).
\eean
By a change of variables,
\bean
\int_{0}^{t-s}
\int_{\bbR^{D}} 
\big\| \bb_{1}(\bz,u) - \bb_{2}(\bz,u) \big\|_2^2 q_1(\bz,u)
\d \bz \d u
&& = 4 \int_{s}^{t}
\int_{\bbR^{D}} 
\big\| \bff_{0}(\bz,u) - \widehat \bff(\bz,u) \big\|_2^2 p_{u\mid t}(\bz \mid \bx_{t})
\d \bz \d u
\\
&& = 4 \int_{s}^{t}
\bbE_{n} \Big[
\big\| \bff_{0}( \bX_{u} ,u) - \widehat \bff( \bX_{u} ,u ) \big\|_2^2 \mid \bX_{t} = \bx_{t} \Big] \d u.
\eean

To apply Corollary~1.2 of \cite{bogachev2016distances}, it suffices to show that the right-hand side is finite.
Note that $p_{0}$ is bounded away from zero on its support, since $\log p_{0} \in \cH^{\beta,K}_{D}([-1,1]^{D})$ and hence $p_{0}(\bx) \geq e^{-K}$ for all $\bx \in [-1,1]^{D}$.
By Lemma~7 of \cite{kwon2026nonparametric}, we have
\bean
\big\| \bff_{0} (\bz, u) \big\|_2
\leq \frac{D_{2}}{\sigma_{u}} \left( \frac{\| \bz \|_{\infty} - \mu_{u}}{\sigma_{u}} \vee 1  \right),
\quad \forall \bz \in \bbR^{D}, \forall u \geq 0,
\eean
where $D_{2} = D_{2} ( D, K) > 0$.
By assumption, $\| \widehat \bff(\cdot,u) \|_{\infty} \leq \sigma_{u}^{-1} \sqrt{ \log (1/\epsilon_n) }$ for all $u \in [s,t]$.
Since $\| \widehat \bff(\cdot,u) \|_{2}^2 \leq D \| \widehat \bff(\cdot,u) \|_{\infty}^2$, it follows that
\bean
\big\| \bff_{0} (\bz, u) - \widehat \bff (\bz, u) \big\|_2^2 
&& \leq 2 \big\| \bff_{0} (\bz, u) \big\|_2^2 + 2 \big\| \widehat \bff (\bz, u) \big\|_2^2 
\\
&& \leq  \frac{2 D_{2}^2}{\sigma_{u}^2} \left( \frac{\| \bz \|_{\infty} - \mu_{u}}{\sigma_{u}} \vee 1  \right)^2
+ \frac{2 D \log ( 1 / \epsilon_n ) }{\sigma_{u}^2},
\\
&& \leq 
\frac{4 D_{2}^2 }{\sigma_{u}^4} \big( \| \bz \|_{\infty}^2 +  \mu_{u}^2 \big)
+ \frac{2 D_{2}^2 + 2 D \log (1/\epsilon_n) }{\sigma_{u}^2},
\quad \forall \bz \in \bbR^{D}, \forall u \in [s,t],
\eean
where the last inequality holds because $(a \vee b)^2 \leq a^2 + b^2$ for any $a,b \in \bbR$.
Since $\mu_{u}^2 \leq 1$, we have
\bean
&& \bbE_{n} \Big[
\big\| \bff_{0}( \bX_{u} ,u) - \widehat \bff( \bX_{u} ,u ) \big\|_2^2 \mid \bX_{t} = \bx_{t} \Big]
\\
&& \leq \frac{4 D_{2}^2 }{\sigma_{u}^4} 
\bbE \Big[
\big\|  \bX_{u} \big\|_{\infty}^2 \mid \bX_{t} = \bx_{t} \Big]
+ \frac{4 D_{2}^2}{\sigma_{u}^4} + \frac{2 D_{2}^2 + 2 D \log (1/\epsilon_n) }{\sigma_{u}^2},
\quad \forall u \in [s,t].
\eean

By repeating the argument leading to \eqref{eqthm2:gaussian}, for $u \in [s,t]$, the conditional distribution of $\bX_{u}$ given $\bX_{t} = \bx_{t}$ and $\bX_{0} = \bx_{0}$ is Gaussian with  
\bean
\cN
\left( \frac{ \sigma_{u}^2 \mu_{t-u} \bx_{t} + \sigma_{t-u}^2 \mu_{u} \bx_{0 } }{\sigma_{t}^2}  , \frac{ \sigma_{t-u}^2 \sigma_{u}^2 }{\sigma_{t}^2} \bI_{D}
\right).
\eean
Combining this with the law of total expectation, we have
\bean
&& \bbE \Big[
\big\|  \bX_{u} \big\|_{\infty}^2 \mid \bX_{t} = \bx_{t} \Big]
=
\bbE \left[ \bbE \Big[
\big\|  \bX_{u} \big\|_{\infty}^2 \mid \bX_{t}, \bX_{0} \Big] \mid \bX_{t} = \bx_{t} \right]
\leq \bbE \left[ \bbE \Big[
\big\|  \bX_{u} \big\|_{2}^2 \mid \bX_{t}, \bX_{0} \Big] \mid \bX_{t} = \bx_{t} \right]
\\
&& \leq \sum_{i=1}^{D} \bbE \left[ \bbE \Big[
\big\vert  X_{u,i} \big\vert^2 \mid \bX_{t}, \bX_{0} \Big] \mid \bX_{t} = \bx_{t} \right]
\leq \frac{ D \sigma_{t-u}^2 \sigma_{u}^2 }{\sigma_{t}^2} + D \left( \frac{ \sigma_{u}^2 + \sigma_{t-u}^2 }{\sigma_{t}^2}  \right)^2 \leq 5D,
\eean
where the third inequality follows from the fact that $\bbE[ X^2] = \text{Var}[X] + ( \bbE[X] )^2$ for any random variable $X$.
Therefore, combining the above displays yields that
\bean
&& \int_{0}^{t-s}
\int_{\bbR^{D}} 
\big\| \bb_{1}(\bz,u) - \bb_{2}(\bz,u) \big\|_2^2 q_1(\bz,u)
\d \bz \d u
\\
&& \leq 4 \int_{s}^{t} \left( \frac{20 D D_{2}^2 + 4 D_{2}^2   }{\sigma_{u}^4}  + \frac{2 D_{2}^2 + 2 D \log (1/\epsilon_n) }{\sigma_{u}^2} \right) \d u < \infty.
\eean

We now apply Corollary~1.2 of \cite{bogachev2016distances}, which yields
\bean
\left\{ \int_{\bbR^{D}} \big\vert p_{s \mid t} (\bz \mid \bx_{t} ) - \widehat p_{s \mid t} ( \bz \mid \bx_{t} ) \big\vert \d \bz \right\}^2
&& \leq \int_{0}^{t-s}
\int_{\bbR^{D}} 
\big\| \bb_{1}(\bz,u) - \bb_{2}(\bz,u) \big\|_2^2 q_1(\bz,u)
\d \bz \d u
\\
&& = 4 \int_{s}^{t}
\int_{\bbR^{D}} 
\big\| \bff_{0}(\bz,u) - \widehat \bff(\bz,u) \big\|_2^2 p_{u\mid t}(\bz \mid \bx_{t})
\d \bz \d u.
\eean
Together with the definition of $d_{\rm TV}$, it follows that
\bean
 \Big\{ d_{\rm TV} \big( p_{ s \mid t}(\cdot \mid \bx_{t}), \widehat p_{ s \mid t}(\cdot \mid \bx_{t}) \big) \Big\}^2
\leq 
 \int_{s}^{t} \int_{\bbR^{D}}  \left\| \widehat \bff (\bz,u) - \bff_0 (\bz, u) \right\|_2^2 p_{u \mid t}(\bz \mid \bx_{t}) \d \bz \d u.
\eean

Recall that $S = [-\mu_{t}\gamma, \mu_{t} \gamma]^{D}$.
Since $p_{0}(\bx) \geq e^{-K}$ for all $\bx \in [-1,1]^{D}$ and $\bx_{t} \in S$, Lemma~6 of \cite{kwon2026nonparametric} implies that $p_{t}(\bx_{t}) \geq D_{3}$, where $D_{3} = D_{3} ( D, K) > 0$.
In particular, this lower bound holds uniformly over $\bx_{t}\in S$.
By Bayes' rule, for every $u \in [s,t)$ and $\bx \in S$, we have
\bean
p_{u \mid t} (\bz \mid \bx )
= \frac{ p_{ t \mid u} ( \bx \mid \bz) p_{u} (\bz) }{p_{t} (\bx)}
\leq D_{3}^{-1}  p_{ t \mid u} ( \bx \mid \bz) p_{u} (\bz) ,
\quad \bz \in \bbR^{D}.
\eean
Combining the last two displays, we have
\bean
&& \int_{S}  \Big\{ d_{\rm TV} \big( p_{ s \mid t}(\cdot \mid \bx), \widehat p_{ s \mid t}(\cdot \mid \bx) \big) \Big\}^2 \d \bx
\\
&& \leq  D_{3}^{-1} \int_{S} \int_{s}^{t} \int_{\bbR^{D}}  \left\| \widehat \bff (\bz,u) - \bff_0 (\bz, u) \right\|_2^2 p_{t \mid u}(\bx \mid \bz ) p_{u} (\bz) \d \bz \d u \d \bx
\\
&& = D_{3}^{-1}\int_{s}^{t} \int_{\bbR^{D}}  \left\| \widehat \bff (\bz,u) - \bff_0 (\bz, u) \right\|_2^2  p_{u} (\bz)  \left( \int_{S} p_{t \mid u}(\bx \mid \bz ) \d \bx \right)   \d \bz \d u 
\\
&& \leq D_{3}^{-1}  \int_{s}^{t} \int_{\bbR^{D}}  \left\| \widehat \bff (\bz,u) - \bff_0 (\bz, u) \right\|_2^2  p_{u} (\bz) \d \bz \d u.
\eean
By Theorem~\ref{thm1}, $\sigma_{u}/\mu_{u} = \sqrt{e^{2u}(1-e^{-2u})} \leq 1$ for every $u \leq C_{1}$, which implies that $C_{1} \leq 2^{-1} \log 2$.
Since $t \leq {\widetilde T} \leq C_{1}$, it follows that $\mu_{t}^{-2} = e^{2t} \leq 2$ and $(2\mu_{t}\gamma)^{-D} \leq 2^{-D/2}\gamma^{-D}$.
Combining with the Cauchy--Schwarz inequality, we have
\bean
&& \int_{S} (2\mu_{t} \gamma )^{-D}  d_{\rm TV} \big( p_{ s \mid t}(\cdot \mid \bx), \widehat p_{ s \mid t}(\cdot \mid \bx) \big)  \d \bx
\\
&& \leq \left[ \int_{S} (2\mu_{t} \gamma )^{-D} \Big\{ d_{\rm TV} \big( p_{ s \mid t}(\cdot \mid \bx), \widehat p_{ s \mid t}(\cdot \mid \bx) \big) \Big\}^2 \d \bx \right]^{1/2}
\\
&& \leq D_{4}
 \left( \int_{s}^{t} \int_{\bbR^{D}}  \left\| \widehat \bff (\bz,u) - \bff_0 (\bz, u) \right\|_2^2  p_{u} (\bz) \d \bz \d u \right)^{1/2},
\eean
where $D_{4} = D_{4} ( D, \gamma, D_{3} ) > 0$.
Taking the expectation and using the assumption yields
\be
\bbE \left[ \int_{S} (2\mu_{t} \gamma )^{-D}  d_{\rm TV} \big( p_{ s \mid t}(\cdot \mid \bx), \widehat p_{ s \mid t}(\cdot \mid \bx) \big)  \d \bx \right]
\leq D_{4} \epsilon_{n} .
\label{eqthm2:tv_bd}
\ee

We now focus on deriving (\ref{eqthm2:mom_hat_bd}).
Applying It\^{o}'s formula \citep{le2016brownian} to $e^{-u} \widehat \bZ_{u}$, we have
\bean
\d \Big( e^{-u} \widehat \bZ_{u} \Big)
&& = 
e^{-u} \d \widehat \bZ_{u}  + \widehat \bZ_{u} \d \Big( e^{-u} \Big)
\\
&& = e^{-u} \d \widehat \bZ_{u}  - e^{-u} \widehat \bZ_{u} \d u
\\
&& = 2 e^{-u} \widehat \bff \big( \widehat \bZ_{u}, t-u \big) \d u + \sqrt{2} e^{-u} \d \bB_{u}.
\eean
Integrating both sides over $[0,t-s]$ yields
\bean
e^{- (t-s) } \widehat \bZ_{t-s} 
= \bx_{t} + \int_{0}^{t-s} 2e^{-u} \widehat \bff \big( \widehat \bZ_{u}, t-u \big) \d u + \int_{0}^{t-s} \sqrt{2} e^{-u} \d \bB_{u}.
\eean
and hence,
\be
\widehat \bZ_{t-s} 
= e^{t-s} \bx_{t} + e^{t-s} \int_{0}^{t-s} 2e^{-u} \widehat \bff \big( \widehat \bZ_{u}, t-u \big) \d u + e^{t-s} \int_{0}^{t-s} \sqrt{2} e^{-u} \d \bB_{u}.
\label{eqthm2:ito}
\ee
Since $\int_{0}^{t-s} 2 e^{-2u} \d u = 1 - e^{-2t+2s} = \sigma_{t-s}^2 $, the random vector $\int_{0}^{t-s} \sqrt{2} e^{-u} \d \bB_{u}$ follows the Gaussian distribution with $\cN ( \mathbf{0}_{D}, \sigma_{t-s}^2 \bI_{D} )$.
By the assumptions $\| \widehat \bff (\cdot, u ) \|_{\infty} \leq \sigma_{u}^{-1} \sqrt{ \log (1/\epsilon_n) }, s \leq u \leq t $ and $\| \bx_{t} \|_{\infty} \leq \mu_{t} \gamma \leq 1$, we have
\bean
\bbP_{n} \Big(
  \widehat Z_{t-s,i}  > \sqrt{L}  \Big)
\leq \bbP \Big(  e^{t-s} \sigma_{t-s} Z > \sqrt{L} - 2 e^{t-s}  \sqrt{ \log (1/\epsilon_n) }  \int_{0}^{t-s} \sigma_{t-u}^{-1} e^{-u} \d u  - e^{t-s}  \Big), \quad \forall i \in [D],
\eean
where $\widehat \bZ_{t-s} = ( \widehat Z_{t-s,1},\ldots, \widehat Z_{t-s,D})$.
A simple calculation yields that
\be
\int_{0}^{t-s} \sigma_{t-u}^{-1} e^{-u} \d u
\leq \int_{0}^{t-s} \sigma_{t-u}^{-1} \d u
= \int_{s}^{t} \sigma_{u}^{-1} \d u
\leq \int_{s}^{t} u^{-1/2} \d u = 2\sqrt{t} - 2\sqrt{s}  \leq 2\sqrt{t},
\label{eqthm2:ito_int_bd}
\ee
where the second inequality holds because $\sigma_{u}^{-1} = 1 / \sqrt{1-e^{-2u}}$ and $1-e^{-2u} \geq u$ for $u \in [0,1/2]$; recall that $t \leq 2
^{-1} \log 2 \leq 1/2$.
Combining the last two displays, we have
\bean
\bbP_{n} \Big(
  \widehat Z_{t-s,i}  > \sqrt{L}  \Big)
  \leq  \bbP \Big(  Z >  \frac{\sqrt{L} - 4  e^{t-s} \sqrt{t \log (1/\epsilon_n) } - e^{t-s} }{e^{t-s} \sigma_{t-s}}  \Big), \quad \forall  i \in [D],
\eean
and hence, by symmetry of the normal distribution,
\bean
\bbP_{n} \Big(
  \big\vert  \widehat Z_{t-s,i}  \big\vert  > \sqrt{L}  \Big)
&& \leq 
  \bbP_{n} \Big( \widehat Z_{t-s,i}  > \sqrt{L}  \Big)
  + \bbP_{n} \Big(
    \widehat Z_{t-s,i} < - \sqrt{L}  \Big)
  \\
&& \leq 2 \bbP \Big(  Z >  \frac{\sqrt{L} - 4  e^{t-s} \sqrt{t \log (1/\epsilon_n) } - e^{t-s} }{e^{t-s} \sigma_{t-s}}  \Big), \quad \forall  i \in [D].
\eean
By the definition of $L$ (see (\ref{eqthm2:L_def})),
\bean
\sqrt{L}
\geq 4 e^{t-s} \sqrt{t \log (1/\epsilon_n) } + e^{t-s} + 2 e^{t-s}\sigma_{t-s} \sqrt{ \log (1/\epsilon_n)},
\eean
which implies that
\bean
\bbP \Big(  Z >  \frac{\sqrt{L} - 4 e^{t-s} \sqrt{t \log (1/\epsilon_n) } - e^{t-s} }{e^{t-s} \sigma_{t-s}}  \Big)
\leq \bbP \Big( Z > 2 \sqrt{\log (1/\epsilon_n) } \Big).
\eean
Combining with the tail probability of the normal distribution, we have
\bean
\bbP_{n} \Big(
  \big\vert  \widehat Z_{t-s,i}  \big\vert  > \sqrt{L}  \Big)
  \leq 2  \bbP \Big( Z > 2 \sqrt{\log (1/\epsilon_n) } \Big)
  \leq 2 \epsilon_n^2,
  \quad \forall i \in [D].
\eean
Moreover, since $L \geq 1$, we have $L \geq \sqrt{L}$.
Hence,
\bean
\bbP_{n} \Big(
  \big\vert  \widehat Z_{t-s,i}  \big\vert  > L  \Big)
  \leq
\bbP_{n} \Big(
  \big\vert  \widehat Z_{t-s,i}  \big\vert  > \sqrt{L}  \Big)
  \leq 2 \epsilon_n^2,
  \quad \forall i \in [D].
\eean
Recall that $\widehat \bZ_{t-s}$ has density $\widehat p_{s \mid t}(\cdot \mid \bx_{t})$, which is the conditional density of $\widehat \bX_{s}$ given $\widehat \bX_{t} = \bx_{t}$.
It follows that, for any $i \in [D]$,
\be
\bbP_{n} \Big( \big\vert  \widehat X_{s,i}  \big\vert > \sqrt{L} \mid \widehat \bX_{t} = \bx_{t} 
\Big) \leq 2 \epsilon_n^2
\quad \text{and} \quad
\bbP_{n} \Big( \big\vert  \widehat X_{s,i}  \big\vert > L \mid \widehat \bX_{t} = \bx_{t} \Big) \leq 2 \epsilon_n^2,
\quad \forall i \in [D].
\label{eqthm2:tail_hat_bd}
\ee

Combining (\ref{eqthm2:ito}) and (\ref{eqthm2:ito_int_bd}), we have
\bean
\big\| \widehat \bZ_{t-s} \big\|_{\infty}
&& \leq  e^{t-s}\| \bx_{t} \|_{\infty}
+ 2 e^{t-s} \sqrt{\log (1/\epsilon_n) }  \int_{0}^{t-s} \sigma_{t-u}^{-1} e^{-u} \d u 
+ e^{t-s} \Big\| \int_{0}^{t-s} \sqrt{2} e^{-u} \d \bB_{u} \Big\|_{\infty}
\\
&& \leq e^{t-s} + 4 e^{t-s}  \sqrt{ t \log ( 1/\epsilon_n) } + e^{t-s} \Big\| \int_{0}^{t-s} \sqrt{2} e^{-u} \d \bB_{u} \Big\|_{\infty}.
\eean
For any $k \geq 1$ and $m \in \bbN$, note that $\vert \sum_{i=1}^{m} a_i \vert^{k} \leq m^{k-1} \sum_{i=1}^{m} \vert a_i \vert^{k}, a_{1},\ldots,a_{m} \in \bbR$.
Since the random vector $\int_{0}^{t-s} \sqrt{2} e^{-u} \d \bB_{u}$ follows the Gaussian distribution $\cN ( \mathbf{0}_{D}, \sigma_{t-s}^2 \bI_{D} )$, it follows that
\bean
\bbE_{n} \Big[ \big\| \widehat \bZ_{t-s} \big\|_{\infty}^{k} \Big]
\leq 3^{k-1} e^{k(t-s)} \Big\{ 1 + 4^{k} \big[ t \log (1/\epsilon_n)  \big]^{k/2}  + \sigma_{t-s}^{k} \bbE \Big[ \big\|  \bZ \big\|_{\infty}^{k} \Big] \Big\},
\quad \forall k \geq 1,
\eean
where $\bZ = (Z_1,\ldots,Z_{D})$ denotes the $D$-dimensional standard normal random vector.
Moreover,
\bean
\bbE \Big[ \big\| \bZ \big\|_{\infty}^{k} \Big]
\leq  \bbE \Big[ \big\| \bZ \big\|_{2}^{k} \Big]
\leq \sqrt{ \bbE \Big[ \big\| \bZ \big\|_{2}^{2k} \Big]} 
\leq \left\{  D^{k-1}  \sum_{i=1}^{D} \bbE \Big[ Z_i^{2k} \Big] \right\}^{1/2}
= D^{k/2} \sqrt{ \bbE \Big[ Z^{2k} \Big] },
\forall k \geq 1,
\eean
where the second inequality follows from the Cauchy--Schwarz inequality.
Note also that $e^{t} \leq \sqrt{2}$ because $t \leq {\widetilde T} \leq 2^{-1} \log 2$.
Combining this with the last two displays, for each $k \geq 1$, there exists a positive constant $\widetilde D_{k} = \widetilde D_{k}(k, D) $ such that
\be
\bbE_{n} \Big[ \big\| \widehat \bZ_{t-s} \big\|_{\infty}^{k} \Big]
\leq \widetilde D_{k} \{ \log ( 1/\epsilon_n) \}^{k/2}
\label{eqthm2:mom_hat_k_bd}.
\ee
Combining the last display with (\ref{eqthm2:tail_hat_bd}), and noting that $\bbE_{n} [ \| \widehat \bZ_{t-s} \|_{\infty}^{k} ] = \bbE_{n} [ \| \widehat \bX_{s} \|_{\infty}^{k} \mid \widehat \bX_{t} = \bx_{t} ], k \geq 1$, we have
\be
\begin{split}
 & \sqrt{ \bbE_{n} \big[ \widehat X_{s,i}^2 \widehat X_{s,j}^2 \mid \widehat \bX_{t} = \bx_{t} \big]  \Big\{  \bbP_{n} \Big(
\big\vert  \widehat  X_{s,i}  \big\vert > \sqrt{L} \mid  \widehat \bX_{t} = \bx_{t} 
\Big)
+ \bbP_{n} \Big(
\big\vert  \widehat  X_{s,j}  \big\vert > \sqrt{L} \mid  \widehat \bX_{t} = \bx_{t}
\Big) \Big\} } 
\\
&  \quad \leq 2 \sqrt{\widetilde D_{4}}  \epsilon_{n}   \log (1/\epsilon_n)  , \quad \forall i, j \in [D], 
\qquad \text{and}
\\
& \sqrt{ \bbE_{n} \big[ \widehat X_{s,i}^2  \mid  \widehat \bX_{t} = \bx_{t} \big]  \bbP_{n} \Big(
\big\vert   \widehat X_{s,i}  \big\vert > L \mid  \widehat \bX_{t} = \bx_{t} 
\Big) } 
\leq \epsilon_n \sqrt{ 2 \widetilde D_{2} \log (1/\epsilon_n) } ,
\quad \forall i \in [D].
\label{eqthm2:mom_hat_bd}
\end{split}
\ee
In particular, the above bounds hold uniformly over $\bx_t\in S$.

Combining (\ref{eqthm2:tot_sec_mom_bd}),  (\ref{eqthm2:mom_true_bd}) and (\ref{eqthm2:mom_hat_bd}), we have
\bean
&& \Big\vert \bbE \big[  X_{s,i}  X_{s ,j} \mid  \bX_{t} = \bx_{t} \big] - \bbE_n \big[ \widehat X_{s,i} \widehat X_{s ,j} \mid \widehat \bX_{t} = \bx_{t} \big]  \Big\vert
\\
&& \leq 2 L d_{\rm TV} \left( p_{s \mid t}(\cdot \mid \bx_{t} ), \widehat p_{s \mid t}(\cdot \mid \bx_{t} ) \right)
+ 10 \epsilon_n 
+  2 \sqrt{\widetilde D_{4}}  \epsilon_{n}   \log (1/\epsilon_n),
\quad \forall i \neq j \in [D].
\eean
Similarly, combining (\ref{eqthm2:tot_fir_mom_bd}), (\ref{eqthm2:mom_true_bd}) and (\ref{eqthm2:mom_hat_bd}), we have
\bean
&& \Big\vert \bbE \big[  X_{s,i}  \mid  \bX_{t} = \bx_{t} \big] - \bbE_n \big[ \widehat X_{s,i} \mid \widehat \bX_{t} = \bx_{t} \big]  \Big\vert
\\
&& \leq 2 L d_{\rm TV} \left( p_{s \mid t}(\cdot \mid \bx_{t} ), \widehat p_{s \mid t}(\cdot \mid \bx_{t} ) \right)
+ \sqrt{10} \epsilon_n 
+ \epsilon_n \sqrt{ 2 \widetilde D_{2} \log (1/\epsilon_n) },
\quad \forall i \in [D],
\eean
Combining the last two displays with (\ref{eqthm2:tv_bd}), we have
\be
\begin{split}
& \bbE \left[ \int_{S} (2 \mu_{t} \gamma)^{-D} \Big\vert \bbE \big[  X_{s,i}  X_{s ,j} \mid  \bX_{t} = \bx \big] - \bbE_n \big[ \widehat X_{s,i} \widehat X_{s ,j} \mid \widehat \bX_{t} = \bx \big]  \Big\vert  \d \bx \right]
\\
& \leq \Big\{ 2 L D_{4}  + 10 + 2 \sqrt{\widetilde D_{4}} \log (1/\epsilon_n) \Big\} \epsilon_n, 
\quad \forall i \neq j \in [D],
\end{split}
\label{eqthm2:sec_hat_bd}
\ee
and
\be
\begin{split}
& \bbE \left[ \int_{S} (2 \mu_{t} \gamma)^{-D} \Big\vert \bbE \big[  X_{s,i} \mid  \bX_{t} = \bx \big] - \bbE_n \big[ \widehat X_{s,i} \mid \widehat \bX_{t} = \bx \big]  \Big\vert \d \bx \right] 
\\
& \leq \Big\{ 2 L D_{4} + \sqrt{10} + \sqrt{ 2 \widetilde D_{2} \log (1/\epsilon_n) } \Big\} \epsilon_n,
\quad \forall i \in [D].
\end{split}
\label{eqthm2:fir_hat_bd}
\ee

Recall that our goal is to derive an upper bound for (\ref{eqthm2:cov_hat_bd_toy}) by combining (\ref{eqthm2:cov_toy_bd1}) and (\ref{eqthm2:cov_toy_bd2}).
The last two displays provide bounds on the differences in the conditional first and second moments.
It remains to bound the conditional means in absolute value to control the difference between the conditional covariances.

By (\ref{eqthm2:conditional_mean}), we have
\bean
\Big\vert \bbE \big[ X_{s,i} \mid \bX_{t} = \bx_{t}, \bX_{0} = \bx_{0} \big] \Big\vert
\leq  \frac{ \sigma_{s}^2 \mu_{t-s} \| \bx_{t} \|_{\infty} + \sigma_{t-s}^2 \mu_{s} \| \bx_{0} \|_{\infty} }{\sigma_{t}^2}
\leq \frac{ \sigma_{s}^2 \mu_{t-s}  + \sigma_{t-s}^2 \mu_{s} }{\sigma_{t}^2}
\leq 2
, \quad \forall i \in [D].
\eean
By the law of total expectation, it follows that
\bean
\Big\vert \bbE \big[ X_{s,i} \mid \bX_{t} = \bx_{t} \big] \Big\vert
\leq
\bbE \left[ \Big\vert \bbE \big[ X_{s,i} \mid \bX_{t}, \bX_{0} \big] \Big\vert \mid \bX_{t} = \bx_{t} \right]
\leq 2,
\quad \forall i \in [D].
\eean
Moreover, by (\ref{eqthm2:mom_hat_k_bd}), we have
\bean
\Big\vert \bbE_{n} \big[ \widehat X_{s,i} \mid \widehat \bX_{t} = \bx_{t} \big] \Big\vert
\leq
\bbE_{n} \big[ \big\| \widehat \bX_{s} \big\|_{\infty} \mid \widehat \bX_{t} = \bx_{t} \big]
\leq \widetilde D_{1} \sqrt{ \log (1/\epsilon_n) } , \quad \forall i \in [D].
\eean
The last two displays hold uniformly over $\bx_{t} \in S$.
Combining these bounds with (\ref{eqthm2:cov_toy_bd1}) and (\ref{eqthm2:cov_toy_bd2}), for all $i \neq j \in [D]$, we have
\bean
&& \bbE \left[
\int_{S}   ( 2\mu_{t} \gamma)^{-D} \Big\vert \text{Cov} \big[ X_{s,i}, X_{s,j} \mid \bX_{t} = \bx \big]  -
\text{Cov}_n \big[ \widehat X_{s,i}, \widehat X_{s,j} \mid  \widehat \bX_{t} = \bx \big] \Big\vert
 \d \bx \right]
 \\
 && \leq \bbE \left[
\int_{S}   ( 2\mu_{t} \gamma)^{-D} \Big\vert \bbE \big[ X_{s,i} X_{s,j} \mid \bX_{t} = \bx \big]  -
\bbE_n \big[ \widehat X_{s,i} \widehat X_{s,j} \mid  \widehat \bX_{t} = \bx \big] \Big\vert \d \bx \right]
\\
&& \quad + 2 \bbE \left[
\int_{S}   ( 2\mu_{t} \gamma)^{-D} \Big\vert \bbE \big[ X_{s,j}\mid \bX_{t} = \bx \big]  -
\bbE_n \big[ \widehat X_{s,j}  \mid  \widehat \bX_{t} = \bx \big] \Big\vert \d \bx \right]
\\
&& \quad + \widetilde D_{1} \sqrt{ \log (1/\epsilon_n) }
\bbE \left[
\int_{S}   ( 2\mu_{t} \gamma)^{-D} \Big\vert \bbE \big[ X_{s,i}\mid \bX_{t} = \bx \big]  -
\bbE_n \big[ \widehat X_{s,i}  \mid  \widehat \bX_{t} = \bx \big] \Big\vert \d \bx \right]
\\
&& \leq \Big\{ 2 L D_{4}  + 10 + 2 \sqrt{\widetilde D_{4}} \log (1/\epsilon_n) \Big\} \epsilon_n
+ \Big\{ 2 + \widetilde D_{1} \sqrt{ \log (1/\epsilon_n) } \Big\}
\Big\{ 2 L D_{4} + \sqrt{10} + \sqrt{ 2 \widetilde D_{2} \log (1/\epsilon_n) } \Big\} \epsilon_n,
\eean
where the second inequality follows from (\ref{eqthm2:sec_hat_bd}) and (\ref{eqthm2:fir_hat_bd}).
Combining this bound with the definition of $L$ (see (\ref{eqthm2:L_def})), and noting that $t \leq  2^{-1} \log 2$, the last display is further bounded by
\bean
D_{5} \epsilon_n  \{ \log (1/\epsilon_n) \}^{3/2},
\eean
where $D_{5} = D_{5} ( D_{4}, \widetilde D_{1}, \widetilde D_{2}, \widetilde D_{4} ) > 0$.
Combining this bound with (\ref{eqthm2:cov_hat_bd_toy}), and noting that $\mu_{t}^2 \leq 1$, we have 
\be
\bbE \left[  \left\vert \bbE \big[  \big\vert \widetilde H_{ij} \big( \bU_{t}, t \big)  \big\vert \big] - \bbE_{n} \big[  \big\vert \widehat H_{ij} \big( \bU_{t}, t \big)  \big\vert \big] \right\vert  \right]
\leq  D_{5} \sigma_{t}^{-4} \epsilon_n  \{ \log (1/\epsilon_n) \}^{3/2}, \quad \forall i \neq j \in [D].
\label{eqthm2:hess_hat_bd}
\ee

Recall that for $i \neq j \in [D]$,
\bean
C_{ij}
\leq  \left\vert  \bbE \big[  \big\vert H_{ij} \big( \bU_{t}, t \big)  \big\vert  \big] -  \bbE \big[   \big\vert \widetilde H_{ij} \big( \bU_{t}, t \big)  \big\vert \big]  \right\vert
+
 \bbE \left[  \left\vert \bbE \big[  \big\vert \widetilde H_{ij} \big( \bU_{t}, t \big)  \big\vert \big] - \bbE_{n} \big[  \big\vert \widehat H_{ij} \big( \bU_{t}, t \big)  \big\vert \big] \right\vert  \right].
\eean
Combining (\ref{eqthm2:hess_trunc_bd}) and (\ref{eqthm2:hess_hat_bd}) with the last display, we have
\bean
C_{ij}
\leq
 \sigma_{t}^{-4} \left[ D_{1} \sigma_{s}^2 \sigma_{t}^{(2 d(i,j) -2) \wedge {\tilde \beta} }  + D_{5} \epsilon_n  \{ \log (1/\epsilon_n) \}^{3/2} \right],
 \quad \forall i \neq j \in [D].
\eean
Note that $\sigma_{u}^2 = 1 - e^{-2u} \leq 2u$ for $u \geq 0$.
Note also that $\sigma_{t}^2 = 1- e^{-2t} \geq t$, since $t \leq {\widetilde T} \leq 2^{-1} \log 2 \leq 1/2$.
Combining these bounds with the last display, we have
\bean
C_{ij} \leq  \frac{ 2 D_{1} s t^{\frac{d(i,j) - 1 }{2} \wedge \frac{{\tilde \beta}}{2}} +  D_{5} \epsilon_n  \{ \log (1/\epsilon_n) \}^{\frac{3}{2}}  }{t^{2}} , 
 \quad \forall i \neq j \in [D].
 \eean
Combining  (\ref{eqthm2:markov}) with the last display, we have
\bean
&& \bbP \left( \widehat G_{t,\tau} \neq G_{0}  \right)
\\
&& \leq
\sum_{ \substack {i,j \in [D]  \\ d(i,j) = 1 } } \frac{ 2 D_{1} s +  D_{5} \epsilon_n  \{ \log (1/\epsilon_n) \}^{\frac{3}{2}}  }{ t^2 \{  C_{\gamma} - C_{2} \sigma_{t}^{{\tilde \beta}} - \tau  \} }
 + \sum_{  \substack {i,j \in [D]  \\ d(i,j) \geq 2 } } \frac{ 2 D_{1} s t^{\frac{{\tilde \beta}}{2}} +  D_{5} \epsilon_n  \{ \log (1/\epsilon_n) \}^{\frac{3}{2}}  }{ t^2 \{ \tau  - C_{2} \sigma_{t}^{{\tilde \beta}} \} }
\\
&& \leq \frac{D_{6}}{t^2} \left[
\frac{s}{ ( \eta_{n,t,1} t^{-\frac{{\tilde \beta}}{2}} ) \wedge \eta_{n,t,2} }
+
\frac{ \epsilon_n  \{ \log (1/\epsilon_n) \}^{\frac{3}{2}} }{\eta_{n,t,1} \wedge \eta_{n,t,2}}
\right],
\eean
where $D_{6} = D_{6} ( D, D_{1}, D_{5} ) > 0$.
Since $s = \underline{T}$, the assertion follows with $C_{3} = D_{6}$.

\end{proof}

\section{Implementation details of the proposed method}
 \label{sec:supp_details}

In this section, we provide practical implementation details of our method, complementing the description in Section~\ref{sec:tuning}.

Recall that the forward process $(\bX_t)_{t \geq 0}$ in Section~\ref{ssec:prelim_diffusion} is the standard OU process~\eqref{eq:OU}.
Our main result (Theorem~\ref{thm2}) extends to more general time-inhomogeneous diffusion processes, including the forward processes underlying the DDPM \citep{ho2020denoising, song2021scorebased} and EDM \citep{karras2022elucidating} frameworks.
In our implementation, we adopt the DDPM framework.
Specifically, the forward process is discretized as follows: for $\overline{T} \in \bbN$ and $t \in \{1, \ldots, \overline{T}\}$,
\bean
\bX_{t} \mid \bX_{0} = \bx_{0} \sim \cN(\mu_{t} \bx_{0}, \sigma_{t}^{2} \bI_{D}),
\eean
where $\mu_{t} = \sqrt{\prod_{l=0}^{t-1} (1 - \alpha_{l})}$, $\sigma_{t} = \sqrt{1 - \mu_{t}^{2}}$, and $\alpha_{l} = \alpha_{\min} + l(\alpha_{\max} - \alpha_{\min}) / (\overline{T} - 1)$ for $l \in \{0, 1, \ldots, \overline{T}-1 \}$.
We set $\alpha_{\min} = 0.001$, $\alpha_{\max} = 0.02$, and $\overline{T} = 500$; the corresponding values of $\mu_{t}$ and $\sigma_{t}$ are displayed in Figure~\ref{fig:noise}.

Within this DDPM framework, the score function is estimated by a suitable class of deep neural networks.
Since the consistency of the graph estimator holds for sufficiently small $t$ by Theorem~\ref{thm2}, we only consider the timesteps $t_{1} < \cdots < t_{M}$ such that $\sigma_{t_{i}} / \mu_{t_{i}} \leq 0.5$ for $t_{i} \in \{2, \ldots, \overline{T}-1\}$, which corresponds to $M = 30$ as shown in Figure~\ref{fig:noise}.
We exclude the timestep $t = 1$ because, in the DDPM implementation, Gaussian noise is typically not added at the last step of the reverse process, which makes the corresponding conditional covariance zero.
Although our graph estimation procedure only requires the score function at $\{t_{1}, \ldots, t_{M}\}$, we train it over the entire range $\{1, \ldots, \overline{T}\}$ as in standard diffusion model training, which enables empirical evaluation of the score function estimator by sampling from the marginal distribution.
The simulation studies in Section~\ref{sec:simulation} and the network analysis in Section~\ref{ssec:network} share the same deep neural network class, while the image data analysis in Section~\ref{ssec:image} uses a different one.
Detailed descriptions are provided in Sections~\ref{ssec:supp_setting}, \ref{ssec:supp_mnist}, and~\ref{ssec:supp_network}, respectively.

\begin{figure*}[!t]
    \centering
    \includegraphics[width=0.6\linewidth]{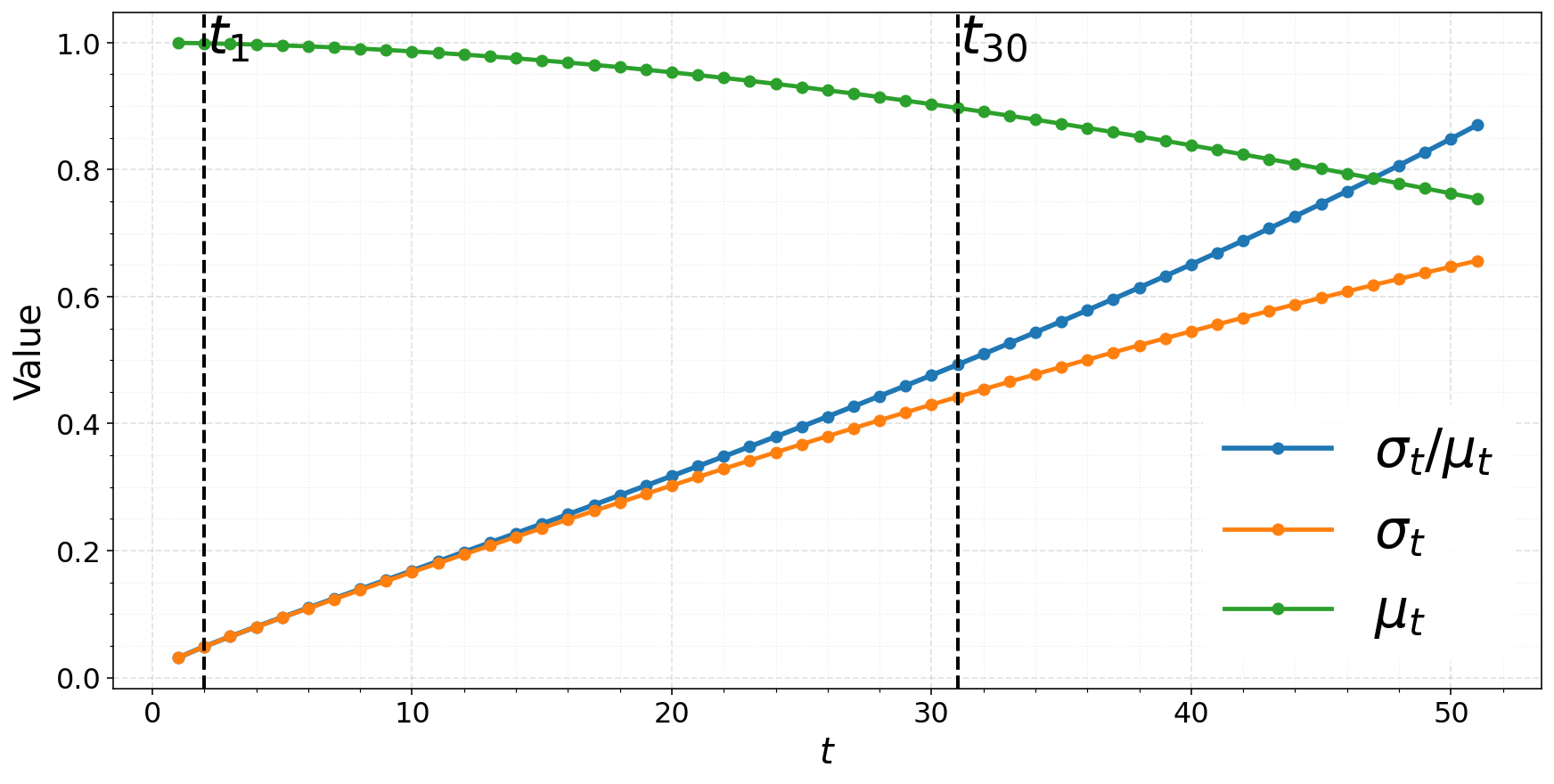}
    \caption{DDPM noise schedule used in our experiments.}
    \label{fig:noise}
\end{figure*}

Once the score function estimator is given, the values $\bbE_n[|\widehat H_{ij}(\bU_{t}, t)|]$ are approximated by Monte Carlo sampling with respect to a suitable random vector $\bU_{t}$.
Specifically, for each $t$, we first draw $N_{1}$ samples from the distribution of $\bU_{t}$; then, conditioning on each such sample $\bx$, we draw $N_{2}$ samples from the conditional distribution of $\widehat \bX_{\underline{T}}$ given $\widehat \bX_{t} = \bx$ and compute the corresponding sample covariance.
In our experiments, we take the distribution of $\bU_{t}$ to be the empirical distribution of the perturbed observations $\mu_{t} \bX^{i} + \sigma_{t} \bZ^{i}$ for $i = 1, \ldots, n$, where $\bZ^{1}, \ldots, \bZ^{n}$ are i.i.d. copies of $\cN(\mathbf{0}_{D}, \bI_{D})$.
We set $N_{1} = 128$ and $N_{2} = 5000$; this sampling and the computation of the sample covariance can be performed efficiently on a GPU.

As discussed in Section~\ref{sec:tuning}, $\widetilde H_{ij}(t)$ is the standardized value of $\bbE_{n}[|\widehat H_{ij}(\bU_{t}, t)|]$ across all pairs $\{(i, j) : i, j \in [D],\ i < j\}$ for each $t \in \{t_{1}, \ldots, t_{M}\}$.
Figure~\ref{fig:examples} illustrates the effect of this standardization on the simple $3$-dimensional Gaussian example from Section~\ref{ssec:supp_illu}, for sample sizes $n \in \{10, 20, 50, 500\}$.
Figure~\ref{fig:estimated_gaussian} shows that $\bbE_{n}[|\widehat H_{ij}(\bU_{t}, t)|]$ approaches $|H_{ij}(t)|$ as $n$ increases.
For very small $t$, however, the error remains large; in particular, $\bbE_{n}[|\widehat H_{13}(\bU_{t}, t)|]$ does not converge to zero as $t \to 0$, even though $H_{13}(0) = 0$.

By contrast, Figure~\ref{fig:estimated_gaussian_std} shows that $\widetilde H_{ij}(t)$ approaches the corresponding standardized version of $|H_{ij}(t)|$.
The $30$-dimensional vectors $(\widetilde H_{ij}(t_{1}), \ldots, \widetilde H_{ij}(t_{30}))$ cluster naturally into two groups, $\{(1, 3)\}$ and $\{(1, 2), (2, 3)\}$, even for moderate sample sizes ($n \geq 20$).
This observation motivates the clustering-based procedure applied to the standardized $30$-dimensional vectors.

Exploiting multiple values of $t$ jointly, rather than committing to a single $t$, yields a robust graph estimator.
As shown in Figure~\ref{fig:estimated_gaussian}, the value $\bbE_{n}[|\widehat H_{ij}(\bU_{t}, t)|]$ is unreliable for very small $t$, and true and false edges may be indistinguishable in that regime.
Nevertheless, clustering on the $30$-dimensional vectors recovers the correct graph, because the unreliable values at small $t$ are outweighed by the informative values at larger $t$.
More generally, the value of $t$ at which edges and non-edges separate varies with the underlying distribution; for some distributions, separation occurs at small $t$ (see Figure~\ref{fig:simul_examples} in Section~\ref{sec:simulation}).
Varying $t$ therefore provides a form of implicit adaptation to the unknown distribution, producing a stable estimator across settings.

\begin{figure*}[!t]
    \centering
    \subfigure[Before standardization]{
        \includegraphics[width=0.45\linewidth]{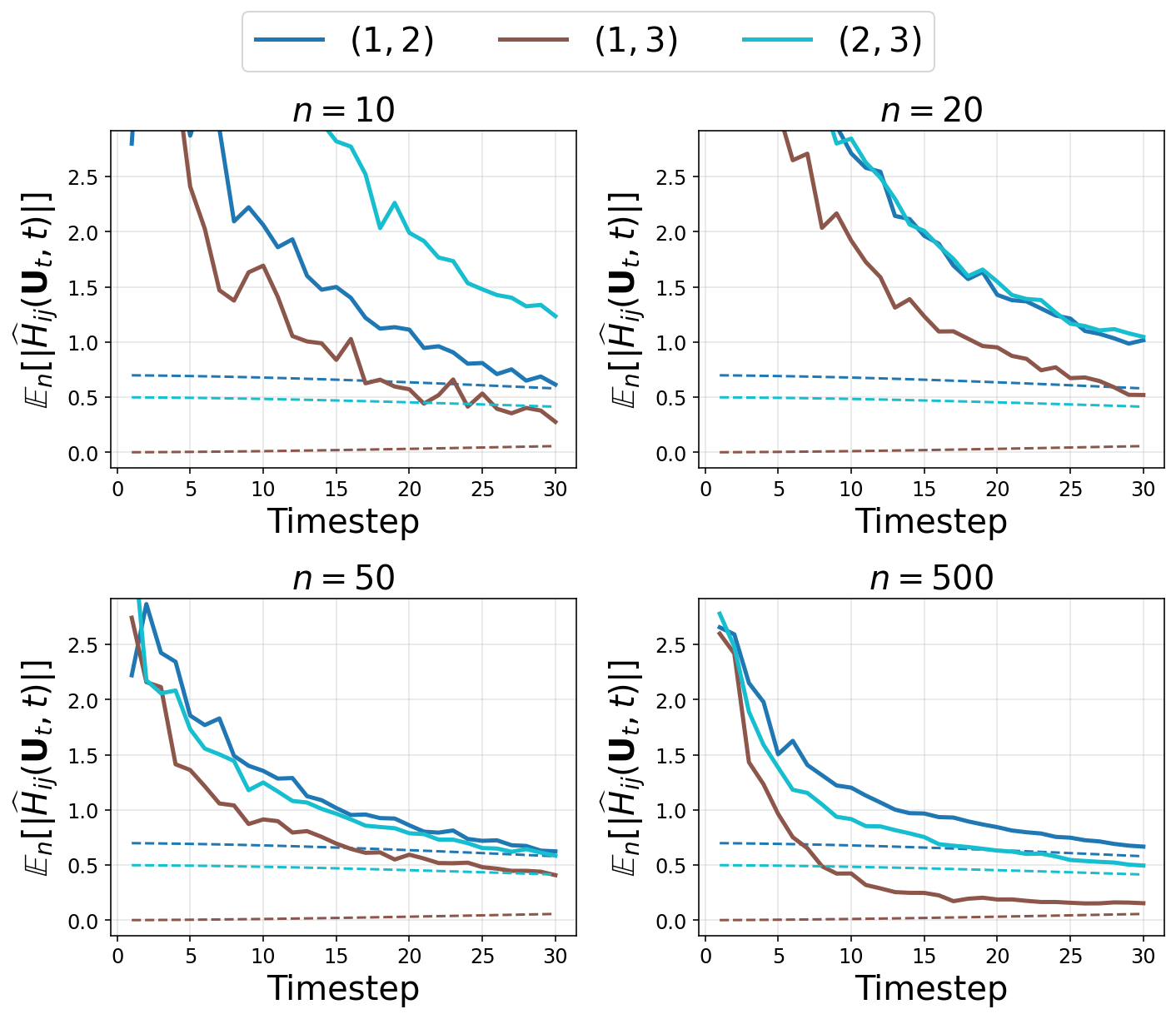}
        \label{fig:estimated_gaussian}
    }
    \subfigure[After standardization]{
        \includegraphics[width=0.45\linewidth]{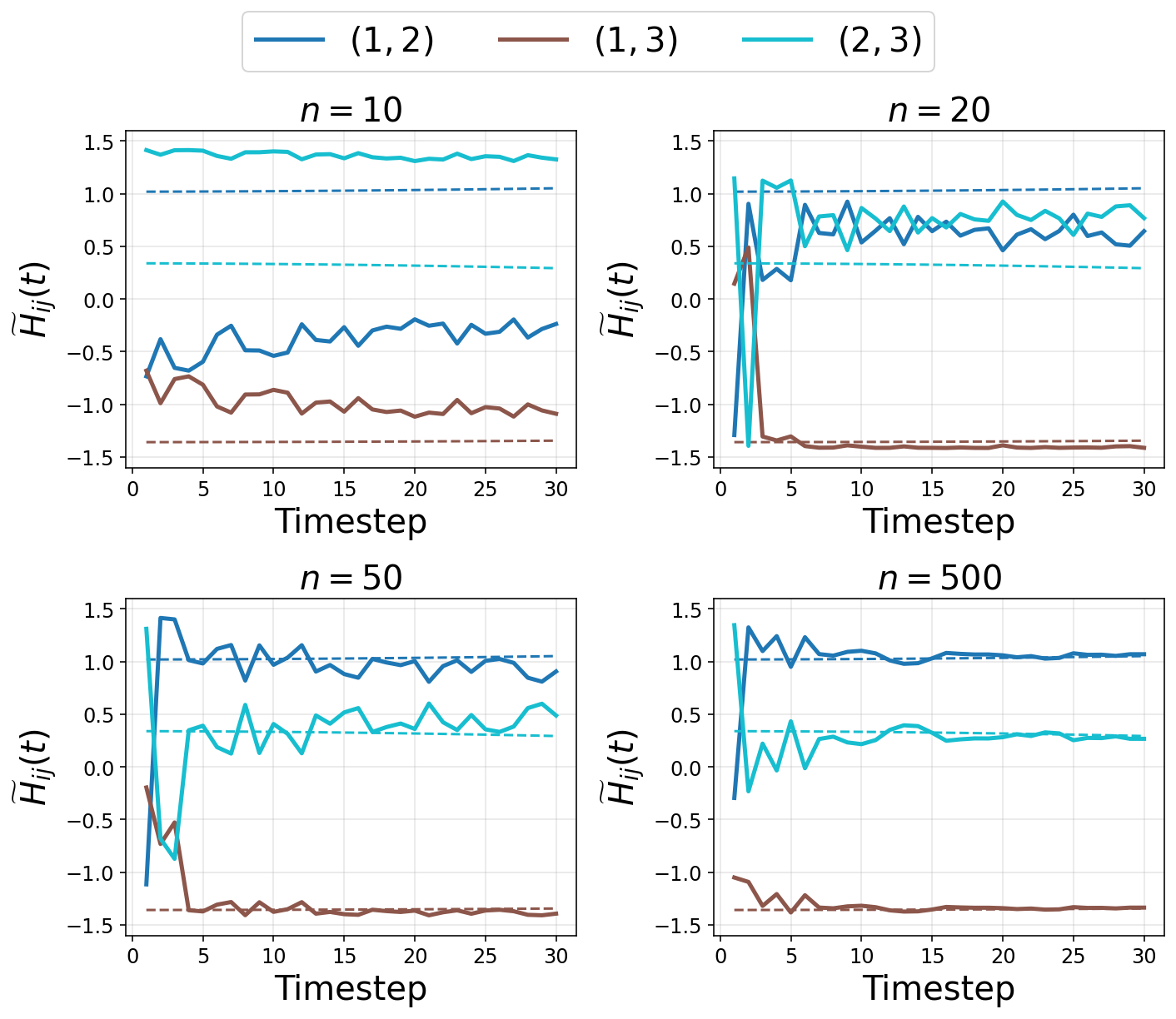}
        \label{fig:estimated_gaussian_std}
    }
    \caption{Estimation results for the $3$-dimensional Gaussian distribution with varying $n$. In the left panel, the solid lines represent the estimated values $\bbE_{n}[\vert \widehat H_{ij}(\bU_{t}, t) \vert]$ and the dashed lines represent the true values $|H_{ij}(t)|$, evaluated at $t_1, \ldots, t_{30}$. In the right panel, the same quantities are shown after standardization, denoted by $\widetilde H_{ij}(t)$ for the estimated values.}
    \label{fig:examples}
\end{figure*}

\section{Details of the main simulations}
\label{sec:supp_simul}

\subsection{Simulation settings}
\label{ssec:supp_setting}

In this subsection, we denote the true random vector by $\bX_{0} = (X_1, \ldots, X_{D})$.
We consider four data distributions, comprising two non-Gaussian and two Gaussian distributions.
We have also conducted analogous experiments under varying data dimensions $D$ and over a wider range of distributions; the results are qualitatively similar and are not reported here.
We therefore fix $D = 20$ throughout.
The corresponding conditional independence graph is, in each case, one of the two graphs $G_{\rm pair}$ and $G_{\rm chain}$ illustrated in Figure~\ref{fig:simul_graphs}.

\begin{figure}[!t]
\centering
\begin{minipage}[t]{0.46\textwidth}
\centering
\vspace{3pt}
$G_{\rm pair}$
\\
\begin{tikzpicture}[
    every node/.style={
        circle, draw, inner sep=0pt,
        minimum size=18pt,
        font=\footnotesize
    }
]
    \node (x1)  at (0.0,0.5)  {$1$};
    \node (x2)  at (0.0,-0.5) {$2$};
    \node (x3)  at (1.3,0.5)  {$3$};
    \node (x4)  at (1.3,-0.5) {$4$};
    \node (d1)  at (2.6,0.5)  [draw=none] {$\cdots$};
    \node (d2)  at (2.6,-0.5) [draw=none] {$\cdots$};
    \node (x19) at (3.9,0.5)  {$19$};
    \node (x20) at (3.9,-0.5) {$20$};

    \draw (x1) -- (x2);
    \draw (x3) -- (x4);
    \draw (x19) -- (x20);
\end{tikzpicture}
\vspace{3pt}
\end{minipage}
\hfill
\begin{minipage}[t]{0.46\textwidth}
\centering
\vspace{3pt}
$G_{\rm chain}$
\\
\vspace{12pt}
\begin{tikzpicture}[
    every node/.style={
        circle, draw, inner sep=0pt,
        minimum size=18pt,
        font=\footnotesize
    }
]
    \node (x1)  at (0.0,0)  {$1$};
    \node (x2)  at (1.2,0)  {$2$};
    \node (d)   at (2.4,0)  [draw=none] {$\cdots$};
    \node (x19) at (3.6,0)  {$19$};
    \node (x20) at (4.8,0)  {$20$};

    \draw (x1) -- (x2);
    \draw (x2) -- (d);
    \draw (d) -- (x19);
    \draw (x19) -- (x20);
\end{tikzpicture}
\vspace{3pt}
\end{minipage}
\caption{Illustration of the graph $G_{\rm pair}$ (left) and $G_{\rm chain}$ (right).}
\label{fig:simul_graphs}
\end{figure}

\noindent\textbf{Non-Gaussian: Butterfly distribution $(G_{\rm pair})$}.
We consider the Butterfly distribution \citep{baptista2024learning, zheng2023generalized, liaw2025learning}, in which the odd-indexed coordinates $X_{2i-1}$, $i \in [D/2]$, are mutually independent $\cN(0,1)$ random variables, and the even-indexed coordinates are defined by $X_{2i} = \epsilon_{i} X_{2i-1}$, where $\epsilon_{i} \sim \cN(0,1)$ is independent of $X_{2i-1}$.
The Butterfly distribution is non-Gaussian, and its conditional independence graph is $G_{\rm pair}$.

\noindent\textbf{Non-Gaussian: Gaussian copula $(G_{\rm chain})$.}\quad
We consider a Gaussian copula with an AR$(1)$ correlation structure of parameter $0.8$.
To obtain a multi-modal, non-Gaussian distribution, we use two-component Beta mixtures as marginals.
Specifically, for each coordinate $X_i$, we independently draw four parameters $\widetilde \alpha_{1}, \widetilde \beta_{1}, \widetilde \alpha_{2}, \widetilde \beta_{2}$ from a chi-squared distribution with two degrees of freedom, and define the marginal of $X_{i}$ as the equal-weight mixture of $\mathrm{Beta}(\widetilde \alpha_1 + 1.5, \widetilde \beta_1 + 1.5)$ and $\mathrm{Beta}(\widetilde \alpha_2 + 1.5, \widetilde \beta_2 + 1.5)$.
The additive shift of $1.5$ is included to prevent the marginal density from diverging at the boundary of $[0,1]$.
This distribution is non-Gaussian, and its conditional independence graph is $G_{\rm chain}$.

\noindent\textbf{Gaussian: high correlation and low correlation $(G_{\rm pair})$.}\quad
We consider Gaussian distributions to assess whether nonparametric methods can match the performance of parametric methods designed specifically for the Gaussian setting.
We adopt the conditional independence graph with $G_{\rm pair}$ and consider two correlation regimes, one with high correlation and one with low correlation.
More precisely, for $\rho > 0$, we take $\bX_0 \sim \cN(\mathbf{0}_D, \Sigma_{0})$, where
\bean
\Sigma_{0} = \begin{pmatrix}
A & 0 & \cdots & 0 \\
0 & A & \cdots & 0 \\
\vdots & \vdots & \ddots & \vdots \\
0 & 0 & \cdots & A
\end{pmatrix} \in \bbR^{D \times D},
\qquad
A = \begin{pmatrix} 1 & \rho \\ \rho & 1 \end{pmatrix} \in \bbR^{2 \times 2},
\eean
with $D/2$ blocks $A$ on the diagonal and zeros elsewhere.
We take $\rho = 0.7$ and $\rho = 0.3$, for which the nonzero off-diagonal precision entries equal approximately $-1.37$ and $-0.33$, respectively.

We refer to our method as DDPM throughout this section, since the score function is learned through the DDPM framework discussed in Section~\ref{sec:supp_details}.
Because the score function class considered in the original DDPM is tailored to image generation, we instead take $\cF$ to be a class of fully connected networks equipped with an appropriate time embedding, following Section~7.2 of \citet{kwon2026nonparametric}.
The score function is trained with the Adam optimizer \citep{kingma2014adam}, using a mini-batch size of $100$ and a learning rate of $0.001$, for $1000$ epochs.
Score and Hessian estimation are implemented in \texttt{PyTorch}, and clustering is performed using the \texttt{scikit-learn} package.

\subsection{Other baselines} \label{ssec:supp_other}

We compare DDPM with several existing baselines, which we group into nonparametric methods and parametric (or semi-parametric) methods.
For the nonparametric baselines, we consider SING \citep{baptista2024learning} and L-SING \citep{liaw2025learning}, both briefly reviewed in Section~\ref{ssec:supp_related} and accompanied by publicly available implementation code.
For the parametric and semi-parametric baselines, we consider the graphical lasso (GLASSO) and the nonparanormal (NPN), both also discussed in Section~\ref{ssec:supp_related}.

The performance of SING depends heavily on the polynomial order $p$ of the transport map.
Note that $p = 1$ corresponds to a well-specified model for the Gaussian distributions considered here, since the reference measure is standard normal.
Although $p = 3$ is taken in \cite{baptista2024learning} when estimating the graph underlying the $10$-dimensional Butterfly distribution, SING with $p = 3$ does not scale to the dimension $D = 20$ used in our experiments; over seven hours were required for a single run, even at $n = 300$.
We therefore report only the results for $p = 1$ and adopt the remaining choices, including the regularization parameters and the edge-selection thresholds, from the authors' code; results for various $p$ in lower-dimensional settings are also reported.

For L-SING, we likewise follow the implementation details provided in the accompanying code, in which the transport map between the reference measure and the conditional distribution is parameterized by structured neural networks.
We set the edge-selection threshold to $0.2$, which yields the best performance across the experiments reported in \cite{liaw2025learning}.

For selecting the regularization parameter in GLASSO, we follow the publicly available code of \cite{lyu2024inconsistency}.
Specifically, for each candidate value of the regularization parameter, we first estimate the underlying conditional independence graph and then compute the unpenalized maximum likelihood estimator constrained to this graph.
The estimator with the lowest EBIC is selected, following the procedure of \cite{foygel2010extended}.

For NPN, we first estimate the marginal transformation using the implementation of \cite{liu2009nonparanormal} provided in \cite{zhao2012huge}.
We then apply the same GLASSO procedure described above to the transformed observations.

\subsection{Performance measures}

For each data distribution, we vary the training sample size over $n \in \{100,\allowbreak 200,\allowbreak 300,\allowbreak 400,\allowbreak 500,\allowbreak 1000,\allowbreak 2000,\allowbreak 5000,\allowbreak 10000,\allowbreak 20000,\allowbreak 50000\}$.
At each sample size, every method described above produces an estimated graph $\widehat G = ([D], \widehat E)$, which estimates the true conditional independence graph $G_{0} = ([D], E_{0})$.
Recall that $G_{0}$ is either $G_{\rm pair}$ or $G_{\rm chain}$ in our setting, depending on the data distribution.
To compare the methods, we measure the discrepancy between $\widehat G$ and $G_{0}$ through three metrics, the Hamming distance (HD), the true positive rate (TPR), and the false discovery rate (FDR), defined by
\bean
\mathrm{HD}(G_{0}, \widehat G) = \frac{\vert E_{0} \setminus \widehat E \vert + \vert \widehat E \setminus E_{0} \vert}{2}, \quad
\mathrm{TPR}(G_{0}, \widehat G) = \frac{\vert \widehat E \cap E_{0} \vert}{\vert E_{0} \vert}, \quad
\mathrm{FDR}(G_{0}, \widehat G) = \frac{\vert \widehat E \setminus E_{0} \vert}{\vert \widehat E \vert}.
\eean
The factor of $2$ in HD accounts for the inclusion of each undirected edge as both $(i, j)$ and $(j, i)$ in the edge sets.
When $\widehat E$ is empty, we set $\mathrm{FDR}(G_{0}, \widehat G) = 0$ by convention.
Each experiment is repeated $10$ times, and we report the average of each metric.

\subsection{Performance results}

The overall performance results are summarized in Figure~\ref{fig:simul_d20}.
DDPM and L-SING are the only methods that perfectly recover the graphs across all four data distributions, and DDPM is more sample-efficient than L-SING in every setting.
The two methods exhibit similar TPRs, but DDPM attains a substantially lower FDR across the full range of sample sizes.
This finding aligns with the optimality of diffusion models in learning distributions with sparse graphical structure \citep{kwon2026nonparametric}, which our methodology directly exploits.
We also experimented with various thresholds for L-SING, but DDPM consistently outperformed it across all values considered.

The parametric baselines GLASSO and NPN fail to recover the conditional independence graph of the Butterfly distribution, as also reported in the SING and L-SING analyses \citep{baptista2024learning, liaw2025learning}.
On the Gaussian distributions, where GLASSO and NPN are correctly specified, they recover the true graph with $\mathrm{HD} < 1$ on average for $n \geq 100$ in the high-correlation regime and for $n \geq 300$ in the low-correlation regime.
On the Gaussian copula, NPN is correctly specified and likewise recovers the true graph with $\mathrm{HD} < 1$ on average for $n \geq 100$, while GLASSO is misspecified and fails to recover the graph.

Notably, DDPM is competitive with the corresponding correctly specified models in every setting, suggesting that DDPM can serve as a unified alternative to parametric methods across data distributions.
DDPM achieves $\mathrm{HD} < 1$ on average for $n \geq 100$ in the high-correlation Gaussian setting and for $n \geq 400$ in the low-correlation Gaussian setting.
For the Gaussian copula setting, DDPM attains $\mathrm{HD} < 1$ on average for $n \geq 1000$, which is slightly less sample-efficient than the correctly specified NPN model.
While DDPM tends to produce more false edges than NPN at smaller sample sizes, the residual false edges are not arbitrary.
As shown in Figure~\ref{fig:simul_examples}, the values $\widetilde H_{ij}(t)$ are closely aligned with the graph distance $d(i,j)$, and this alignment is already clearly visible at $n = 100$.
This implies that although DDPM does not perfectly recover the graph $G_{\rm chain}$ at smaller sample sizes, its falsely declared edges are restricted to pairs with $d(i,j) = 2$.
For $n = 1000$, Figure~\ref{fig:simul_examples} shows that the gap between true and false edges widens further, which explains the consistency of DDPM observed for $n \geq 1000$.

Finally, we comment on SING.
Perfect graph recovery on the Butterfly distribution with polynomial order $p = 3$ is reported in \cite{baptista2024learning}, but this polynomial order could not be scaled to $D = 20$ in our experiments.
We also experimented with $p = 2$, but the results were nearly indistinguishable from those reported here for $p = 1$ across all data distributions considered.
At lower dimensions, SING with $p = 3$ perfectly recovers the graph for the Butterfly distribution at large sample sizes, but fails to recover the graph for the Gaussian copula; see Figure~\ref{figure:sing}.
By contrast, DDPM is consistent in both settings and is more sample-efficient than SING for every value of $p$ considered.

\begin{figure*}[!t]
    \centering
    \includegraphics[width=\linewidth]{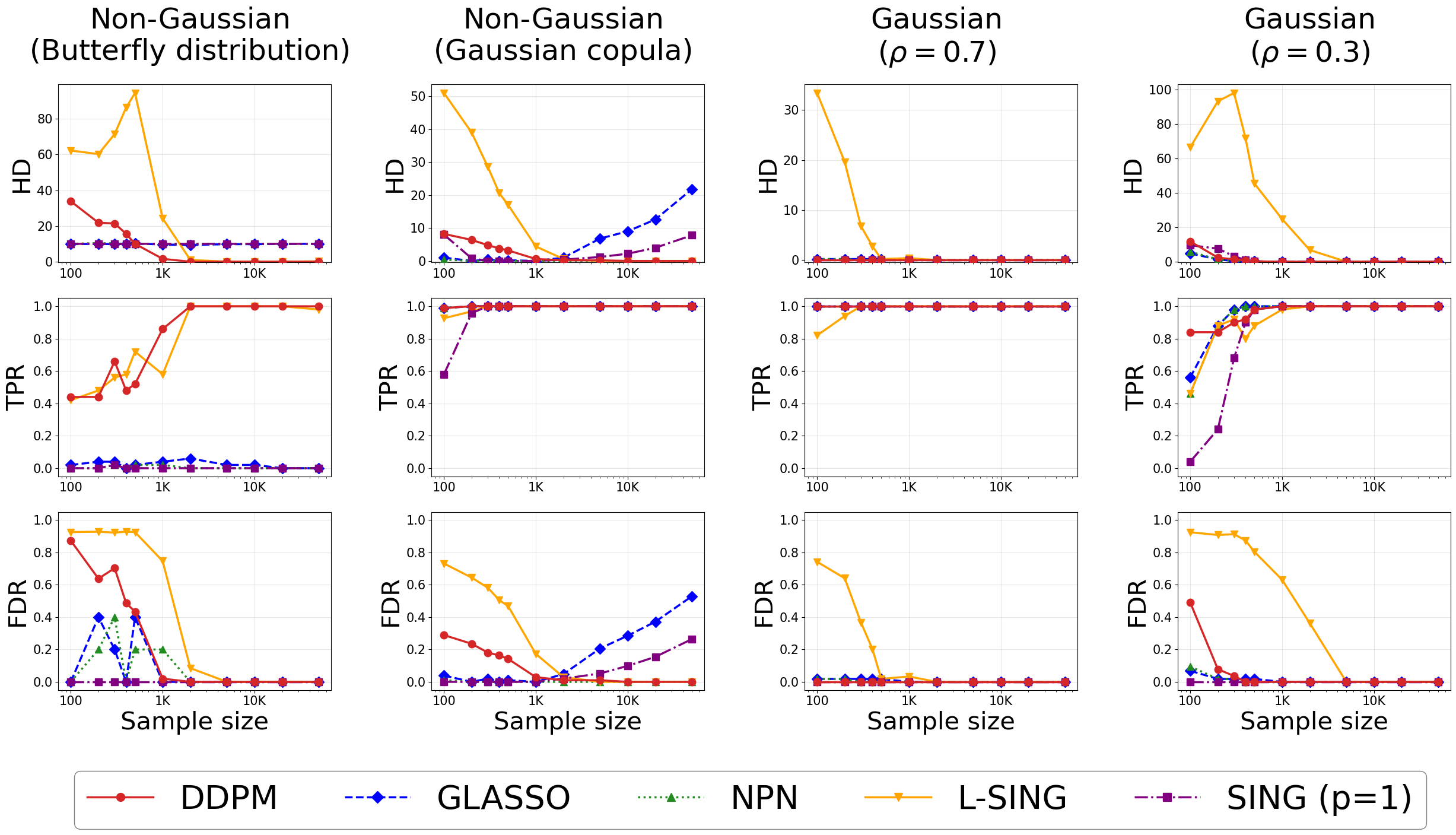}
    \caption{ Graph estimation results under various sample sizes.
    } \label{fig:simul_d20}
\end{figure*}

\begin{figure*}[!t]
    \centering
    \subfigure[Butterfly distribution with $D = 6$.]{
        \includegraphics[width=0.8\linewidth]{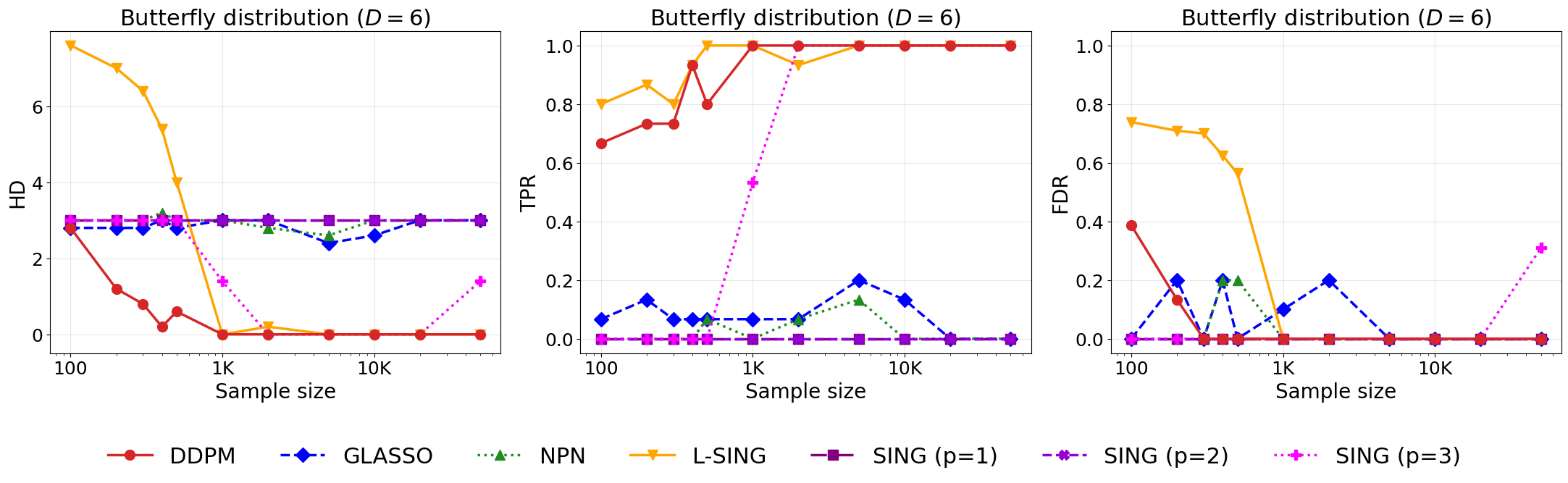}
    }
    \subfigure[Gaussian copula with $D = 5$.]{
        \includegraphics[width=0.8\linewidth]{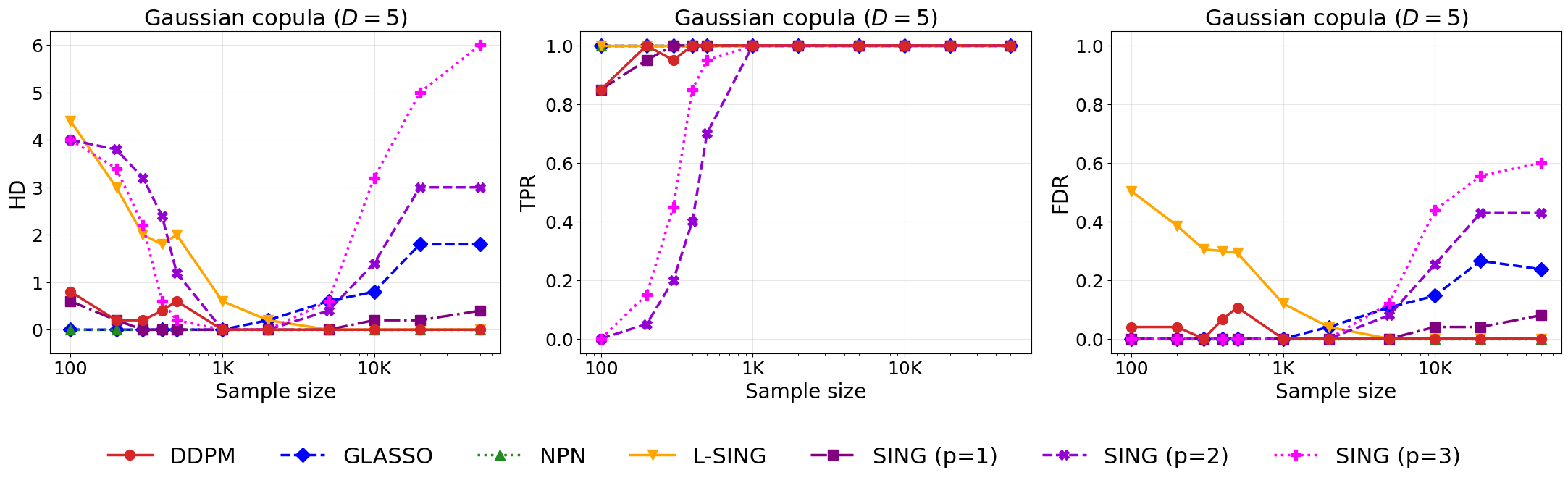}
    }
    \caption{Graph estimation results for non-Gaussian distributions with small $D$.}
    \label{figure:sing}
\end{figure*}

\subsection{Computational cost}
\label{ssec:comp_time}

All experiments are conducted on a single NVIDIA RTX~$4090$ GPU paired with an Intel Xeon Silver $4310$ CPU.
Table~\ref{tab:comp_time} reports the computational cost of each method on the Gaussian copula with $D = 20$ at sample size $n = 5000$, averaged over $10$ replicates.
For DDPM, we report the training and inference times separately; the former corresponds to learning the score function, and the latter to constructing the Hessian estimator via sampling and performing the clustering step.
DDPM and L-SING incur runtimes of comparable order, with the total time roughly $211$ seconds for DDPM and $100$ seconds for L-SING.
The gap between the two is primarily due to our choice of batch size and the number of training epochs when learning the DDPM score function.
We have observed that similar estimation performance can be obtained with larger batch sizes and fewer epochs, which would reduce the training time substantially.
We did not pursue such tuning, since the results reported here are already satisfactory under default hyperparameters.
More generally, both training and inference times scale roughly linearly in the batch size, so further acceleration is straightforward whenever GPU memory permits.
The cost of SING grows sharply with the polynomial order $p$; with $p = 2$, the runtime exceeds $2300$ seconds.
This observation supports our choice of $p = 1$ as the SING benchmark in the main experiments.
By contrast, the parametric and semi-parametric baselines GLASSO and NPN require only a fraction of a second.
\begin{table}[!t]
\caption{Average computation time (in seconds) with sample size $n = 5000$.}
\centering
\begin{tabular}{ccccccc}
\toprule
\multicolumn{2}{c}{DDPM} & \multirow{2}{*}{L-SING} & \multicolumn{2}{c}{SING} & \multirow{2}{*}{GLASSO} & \multirow{2}{*}{NPN} \\
\cmidrule(lr){1-2} \cmidrule(lr){4-5}
Train & Infer & & $p=1$ & $p=2$ & & \\
\midrule
$185.94$ & $25.35$ & $100.31$ & $67.59$ & $2311.10$ & $0.16$ & $0.18$ \\
\bottomrule
\end{tabular}
\label{tab:comp_time}
\end{table}

\section{Details of the real data analyses}
\label{sec:supp_add}

\subsection{Image analysis}
\label{ssec:supp_mnist}

In this subsection, we provide the implementation details of the analysis in Section~\ref{ssec:image}.
For the score function class $\cF$, we adopt the architecture of the original DDPM \citep{ho2020denoising}, which combines a U-Net \citep{ronneberger2015unet} with self-attention \citep{vaswani2017attention}.
The score function is trained with a mini-batch size of $128$, a learning rate of $0.0002$, and $50$ training epochs.
Figure~\ref{figure:mnist_gen} shows randomly selected real MNIST images together with samples generated by our trained diffusion model.
The generated samples are visually close to the real images, indicating that the model is well-trained.

\begin{figure*}[!t]
    \centering
    \includegraphics[width=0.6\linewidth]{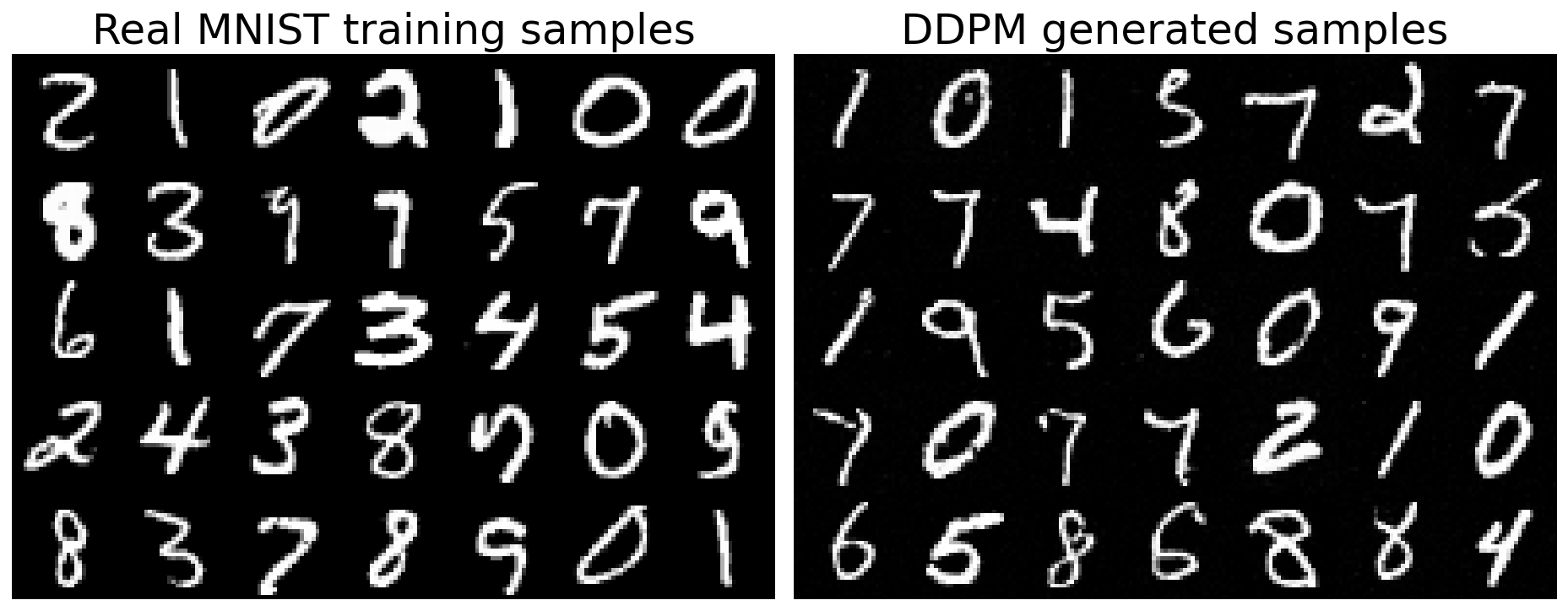}
    \caption{Real MNIST images (left) and samples generated by the trained diffusion model (right).}
    \label{figure:mnist_gen}
\end{figure*}

\subsection{Network analysis}
\label{ssec:supp_network}

In this subsection, we provide the implementation details for the analysis in Section~\ref{ssec:network}.
For the score function class $\cF$, we adopt the architecture described in Section~\ref{sec:supp_details}.
The score function is trained with a mini-batch size of $128$, a learning rate of $0.0005$, and $2000$ training epochs.
We also set $N_{1} = 250$.
Because the competitor and customer/supplier networks differ in the number of edges, we additionally report the Jaccard distance (JD), defined as
\bean
\mathrm{JD}(G_{0}, \widehat G) = 1 - \frac{\vert E_{0} \cap \widehat E \vert}{\vert E_{0} \cup \widehat E \vert}.
\eean
Table~\ref{tab:competitor_mismatch} lists all company pairs for which our estimated graph and the competitor relationship from \textit{Relato} disagree.
The score reported in the table is the average of $\widetilde H_{ij}(t)$ over $t \in \{t_{1}, \ldots, t_{30}\}$; a larger score indicates that the two companies are more likely to be connected in the estimated graph.

\begin{table}[!t]
\caption{Company pairs for which our estimated graph and the competitor relationships from \textit{Relato} disagree, together with their scores. A larger score indicates that the two companies are more likely to be connected in the estimated graph.}
\centering
\small
\begin{tabular}{cllc}
\toprule
\multicolumn{4}{l}{\textbf{Included but non-competitor}} \\
\midrule
Rank & Company A & Company B & Score \\
\midrule
1 & Parker Hannifin & Rockwell Automation & $4.40$ \\
2 & Eaton & Rockwell Automation & $2.98$ \\
3 & FedEx & United Rentals & $2.89$ \\
4 & Ingersoll-Rand & Wabtec & $2.37$ \\
5 & Textron & United Rentals & $2.37$ \\
6 & Parker Hannifin & United Rentals & $2.08$ \\
7 & Dover & IDEX & $1.82$ \\
8 & Paychex & Republic Services & $1.76$ \\
\midrule
\multicolumn{4}{l}{\textbf{Competitor but excluded}} \\
\midrule
Rank & Company A & Company B & Score \\
\midrule
1 & Equifax & Paychex & $1.20$ \\
2 & Boeing & General Dynamics & $0.53$ \\
3 & Boeing & FedEx & $0.37$ \\
4 & Boeing & Lockheed Martin & $0.24$ \\
5 & Boeing & Delta Air Lines & $0.09$ \\
6 & ADP & FedEx & $-0.03$ \\
7 & Boeing & Huntington Ingalls Industries & $-0.04$ \\
8 & Ingersoll-Rand & Johnson Controls & $-0.06$ \\
9 & Dover & Eaton & $-0.44$ \\
10 & Honeywell & Textron & $-0.53$ \\
\bottomrule
\end{tabular}
\label{tab:competitor_mismatch}
\end{table}



\end{document}